\let\accentvec\vec

\documentclass[runningheads,envcountsect,10pt]{llncs}

\let\vec\accentvec

\usepackage{amsmath}
\usepackage{mathrsfs}

\usepackage{multirow}
\usepackage{amsfonts}
\usepackage[colorlinks]{hyperref}
\usepackage{setspace}
\usepackage{textcomp}
\usepackage{enumerate}
\usepackage[ruled,linesnumbered]{algorithm2e}

\usepackage{graphicx}
\DeclareGraphicsExtensions{.pdf,.jpg,.png}


%

\begin{document}
%

\title{A True $O(n\log{n})$ Algorithm for the All-k-Nearest-Neighbors Problem}

\titlerunning{A True $O(n\log{n})$ Algorithm for the All-k-NN Problem}
%
\author{Hengzhao Ma\inst{1} \and
	Jianzhong Li\inst{2}
}
\authorrunning{H. Ma, J. Li}
%
\institute{
	\email{hzma@stu.hit.edu.cn}\and
	\email{lijzh@hit.edu.cn}\\
	Harbin Institute of Technology, Harbin, Heilongjiang 150001, China
}
\maketitle              
\begin{abstract}
	In this paper we examined an algorithm for the All-k-Nearest-Neighbor problem proposed in 1980s, which was claimed to have an $O(n\log{n})$ upper bound on the running time. We find the algorithm actually exceeds the so claimed upper bound, and prove that it has an $\Omega(n^2)$ lower bound on the time complexity. Besides, we propose a new algorithm that truly achieves the $O(n\log{n})$ bound. Detailed and rigorous theoretical proofs are provided to show the proposed algorithm runs exactly in $O(n\log{n})$ time.
	
	\keywords{Computation Geometry \and All k-Nearest-Neighbors}
\end{abstract}

\section{Introduction}\label{sec:intro}

The All-k-Nearest-Neighbors problem, or All-kNN for short, is an important problem that draws intensive research efforts. Early works about All-kNN date back to 1980s \cite{Clarkson1983}, and there are still some new results about this problem published in recent years \cite{Park2016,Sieranoja2015}. The reason why the All-kNN problem has been continuously studied is that many applications invoke All-kNN as an important sub-procedure, such as classification \cite{Szummer2001}, agglomerative clustering \cite{Franti2006}, image retrieval \cite{Yang2011}, recommendation systems \cite{Karypis2001},  and outlier detection \cite{Brito1997}. For many of these applications, solving All-kNN is reported as the main bottleneck \cite{Franti2006}.

The All-kNN problem can be briefly defined as follows. Let $(X,D)$ be a metric space, where $D(\cdot,\cdot)$ is a distance function and $X$ is a point set. 
The input of All-kNN is a point set $P\subseteq X$, and the output is the $k$-nearest-neighbor for all points $p\in P$, where the $k$-nearest-neighbor of a point $p$ is a set of $k$ points in $P$ that are closest to $p$ according to the distance function $D(\cdot,\cdot)$. 
The formal definition will be given in Section \ref{sec:def-prel}. 

There is an obvious brute-force solution for All-kNN, which is to compute the pairwise distances for all points in $P$, and select the $k$ points with smallest distance to $p$ for each point $p$. This solution takes $O(n^2)$ time, which is unacceptable when the size $n$ of the input is large. There have been a lot of algorithms proposed to efficiently solve the All-kNN problem, which can be categorized into the following three classes.
 
The first class of algorithms uses different techniques to accelerate the empirical running time of the algorithm, while the theoretical $O(n^2)$ upper bound is unchanged. 
There are basically three different kinds of techniques. The first kind is based on tree-like spacial indexes, such as $k$-$d$ trees \cite{Friedman1977} and Voronoi diagram \cite{Edelsbrunner2012} based index. The second kind is based on space filling curves, including Hilbert curve \cite{Hilbert1935} and Z-order curves \cite{Morton1966}. 
The space filling curve is a useful method to build an one-dimensional index on multidimensional data. There is an important property about the index based on the space filling curves, that is, the elements near to each other tend to be indexed into near entries. This property helps to reduce the number of distance computation to solve the All-kNN problem, as reported in \cite{Connor2010,Sieranoja2015,Yao2010}. The third kind is based on the idea of \emph{neighborhood propagation}, which uses the intuition that the neighbors of neighbors are also likely to be neighbors. The NN-descent algorithm proposed in \cite{Dong2011} is seminal work on neighborhood propagation, and it is still one of the best algorithms for All-kNN problem. Other works use the neighborhood propagation technique to refine the primary result returned by some preprocessing step. For example, the authors of \cite{Wang2012} use a multiple random divide and conquer approach as preprocessing, and the Locality Sensitive Hashing is used in \cite{Zhang2013}.

The second class of algorithms turns to solve the problem on parallel systems. 
The theoretical work \cite{Callahan1993} states that there is an optimal solution for All-kNN which needs $O(\log{n})$ time and $O(n)$ processors on CREW PRAM model. 
Other works tries to solve All-kNN on different parallel platforms, such as MapReduce platform \cite{Warashina2014,Trad2012} and GPU environment \cite{Komarov2013}.

Different from the most algorithms in the above two classes which do not reduce the theoretical $O(n^2)$ upper bound on running time, the fourth class of algorithms is proved to have lower upper bounds. And they are all serial algorithms, different with the work in \cite{Callahan1993}. For example, Bentley gives an multidimensional divide-and-conquer framework that can sole the All-kNN problem in $O(n(\log{n})^{d-1})$ time \cite{Bentley1980}, where $d$ is the number of dimension. 
Besides, the algorithm given in \cite{Clarkson1983} takes $O(n(\log{\delta}))$ time, where $\delta$ is the ratio of the maximum and minimum distance between any two points in the input. Finally, the algorithm proposed by Vaidya \cite{Vaidya1989} is claimed to have upper bound of $O(kd^dn\log{n})$ on the running time. 

After all, it can be summarized that most of the works about the All-kNN problem focus on improving the empirical running time of the algorithms, but few of them succeed in reducing the $O(n^2)$ worst case upper bound. To the best of our knowledge, the algorithms proposed by \cite{Bentley1980,Clarkson1983,Vaidya1989} are the only ones that have lower upper bound that $O(n^2)$. 
We will not consider the parallel situation so that the work in \cite{Callahan1993} is excluded. Unfortunately, it has been as long as 30 years since the theoretical results are published. There is an urgent demand to renew and improve these historical results, and that is exactly the main work in this paper.

Among the three theoretical works, the one proposed by Vaidya \cite{Vaidya1989} has the best upper bound on $n$, which is $O(kd^dn\log{n})$. We have carefully examined the algorithm and the proofs in \cite{Vaidya1989}, and unfortunately a major mistake is found, which is that some part of the algorithm proposed in \cite{Vaidya1989} actually exceeds the $O(n\log{n})$ bound. Since the work has been cited over 300 times combined with the conference version \cite{Vaidya1986}, it is necessary to point out the mistake and fix it. Here the contributions of this paper are listed as follows.

\begin{enumerate}
	\item We point out that the algorithm proposed in \cite{Vaidya1989} needs $\Omega(n^2)+kd^dn\log{n}$ time, contradicting with the claimed $O(n\log{n})$ upper bound.
	\item A modified algorithm is proposed whose running time is proved to be up-bounded by $O(k(k+(\sqrt{d})^d) \cdot n\log{n})$. The algorithm only applies to the Euclidean space.
	\item While the algorithm proposed in \cite{Vaidya1989} is designed for All-1NN and needs non-trivial modifications to generalize to All-kNN, the algorithm in this paper can directly solve the All-kNN problem for arbitrary integral value of $k$.
\end{enumerate}

The rest of the paper is organized as follows. In Section \ref{sec:def-prel} the definition of the problem and some prerequisite knowledge are introduced. Then the algorithm proposed in \cite{Vaidya1989} is described and analyzed in Section \ref{sec:former-alg}. The modified algorithm is proposed in Section \ref{sec:our-alg}, and the correctness and the upper bound on running time is formally proved in Section \ref{sec:analyz}. Finally we conclude the paper in Section \ref{sec:conc}.

\section{Problem Definitions and Preliminaries}\label{sec:def-prel}
The problem studied in this paper is the All-k-Nearest-Neighbors problem under Euclidean Space. In the following discussions we will assume that the input is a set $P$ of points where each $p\in P$ is a d-dimensional vector $(p^{(1)},p^{(2)},\cdots,p^{(d)})$. The distance between two points $p$ and $p'$ is measured by the Euclidean distance, which is $D(p,p')=\sqrt{\sum\limits_{i=1}^{d}{(p^{(i)}-p'^{(i)}})^2}$.
The formal definition of the All-k-Nearest-Neighbors problem is given below. 

\begin{definition}[k-NN]\label{def:knn}
	Given the input point set $P\subseteq R^d$ and a query point $q\in R^d$, define $kNN(q,P)$ to be the set of $k$ points in $P$ that are nearest to $q$. Formally,
	\begin{enumerate}
		\item $kNN(q,P)\subseteq P$, and $|kNN(q,P)|=k$;
		\item $D(p,q)\le D(p',q)$ for $\forall p\in kNN(q,P)$ and $\forall p'\in P\setminus kNN(q,P)$.
	\end{enumerate}
\end{definition}

\begin{definition}[All-k-Nearest-Neighbors, All-kNN]\label{def:all-knn}
	Given the input point set $P \subseteq R^d$, find $kNN(p,P\setminus \{p\})$ for all $ p\in P$.
\end{definition}

There is an equivalent definition for the All-kNN problem, that is the $k$-NN graph construction problem whose definition is given below. The $k$-NN graph is intuitively the graph representation of the result of All-kNN problem. In the rest of the paper we will use All-kNN to refer to the problem.
\begin{definition}[kNN-graph]\label{def:knn-graph}
	Given the input point set  $P \subseteq R^d$, the kNN-graph for $P$ is a directed graph $G=(V,E)$ satisfying:
	\begin{enumerate}
		\item $V=P$;
		\item For any two points $p_1,p_2\in V$, $(p_1,p_2)\in E$ iff $p_2\in kNN(p_1, P\setminus \{p_1\})$.
	\end{enumerate}
\end{definition}

Define a rectangle $\mathfrak{r}$ in $R^d$ to be the product of $d$ intervals, i.e., $I_1\times I_2\times\cdots\times I_d$, where each interval $I_i$ can be open, closed or semi-closed for $1\le i\le d$. 
For a rectangle $\mathfrak{r}$, let $Lmax(\mathfrak{r}), Lmin(\mathfrak{r})$ be the longest and shortest length of the intervals defining $\mathfrak{r}$, respectively. When $Lmax(\mathfrak{r})=Lmin(\mathfrak{r})$, $\mathfrak{r}$ is called a d-cube, and denote the side length of the d-cube $\mathfrak{r}$ as $Len(\mathfrak{r})=Lmax(\mathfrak{r})=Lmin(\mathfrak{r})$. For a point set $P$, $\mathfrak{r}$ is called the bounding rectangle if $\mathfrak{r}$ is the smallest rectangle containing all the points in $P$, then let $|\mathfrak{r}|$ to be equivalent to $|P|$ that is the number of points in $P$. Besides, a d-cube $\mathfrak{r}$ is a Minimal Cubical Rectangle (MCR) for a given point set $P$ iff $\mathfrak{r}$ contains all the points in $P$ and has the minimal side length. Note that for a specific point set $P$, its bounding rectangle is unique but its MCR may not.

On the other hand, define a d-ball to be the set $B(c,r)=\{x\in R^d\mid D(x,c)\le r \}$, where $c$ is the center and $r$ is the radius of the d-ball. For a point set $P$, define the Minimum Enclosing Ball (MEB) of $P$ to be the minimum radius ball containing all the points in $P$, denoted as $MEB(P)$. It is known that the there exists one unique $MEB(P)$ for a given $P$, which can be computed by solving a quadratic-programming problem \cite{Yildirim2008}. From now on let $\mathcal{C}_P$ and $\mathcal{R}_P$ denote the unique center and radius of $MEB(P)$ respectively. Besides the exact MEB, the approximate MEB is equally useful and easier to compute. A d-ball $B(c_P,r_P)$ is an $\epsilon$-MEB of a point set $P$ iff $P\subseteq B(c_P,r_P)$ and $r_P\le \epsilon\cdot \mathcal{R}_P$. The following algorithm can compute an $\frac{3}{2}$-MEB for a given set $P$ in linear time, which is proposed in \cite{Zarrabi-Zadeh2006}. 

\begin{algorithm}[H]
	\caption{Compute $\frac{3}{2}$-MEB}\label{alg:approx-meb}
	\KwIn{$B(c_P,r_P)$ which is a $\frac{3}{2}$-MEB of $P$, and another point set $Q$}
	\KwOut{a $\frac{3}{2}$-MEB of $P\cup Q$}
	\uIf{$(c_P,r_P)==Null$}{
		$c_0\gets$ a random point in $Q$\;
		$r_0\gets 0$\;		
	}\Else{
		$c_0\gets c_P, r_0\gets r_p$\;
	}
	\While{$\exists q\in Q$}{
		\If{$d(q,c_0)> r_0$}{
			$\delta\gets \frac{1}{2}(D(q,c_0)-r_0)$\;
			$r_1\gets r_0+\delta$\;
			$c_1\gets c_0+\frac{\delta}{D(q,c_0)}(q-c_0)$\;
			$c_0\gets c_1,r_0\gets r_1$\;
		}
		$P\gets P\cup \{q\}, Q\gets Q\setminus\{q\}$\;
	}
\end{algorithm}

The main characteristic of Algorithm \ref{alg:approx-meb} is that it can be viewed as a dynamic algorithm, where the $\frac{3}{2}$-MEB of $P$ is precomputed, and the set $Q$ is an update to $P$. The algorithm runs in $O(|Q|)$ time to compute the $\frac{3}{2}$-MEB of $P\cup Q$ based on the precomputed $\frac{3}{2}$-MEB of $P$. This characteristic will play an important part in the algorithm proposed in Section \ref{sec:our-alg}.

\section{The Algorithm in \cite{Vaidya1989} and analysis of it}\label{sec:former-alg}
In this section we cite the outlines of the algorithm proposed in \cite{Vaidya1989}, and give the analysis why this algorithm is not an $O(n\log{n})$ algorithm. Note that the algorithm focuses on the situation that $k=1$. The generalization to $k>1$ is non-trivial but the author did not give the generalized algorithm in detail in \cite{Vaidya1989}. This is a disadvantage of the algorithm, as was pointed out in Section \ref{sec:intro}.

\subsection{The algorithm}\label{subsec:its-alg}
The algorithm mainly consists of two parts, i.e., building the Cube Split Tree, and maintain the $Nbr$ and $Frd$ sets for each node in the cube split tree. The definitions of the Cube Split Tree, $Nbr$ and $Frd$ sets are given in the discussions below.

\subsubsection{The Cube Split Tree.}
Given the input point set $P$, the algorithm will start from the $MCR$ of $P$, split it into a set of d-cubes, and organize the d-cubes into a tree structure, which is called the Cube Split Tree (CST). Let $Succ(\mathfrak{r})$ denote the set of the sub-rectangles generated by splitting $\mathfrak{r}$, then the CST can be defined as follows.

\begin{definition}[Cube Split Tree, CST]\label{def:cst}
	Given a point set $P$, a CST based on $P$ is a tree structure $T$ satisfying the following properties:
	\begin{enumerate}
		\item the root of $T$ is $MCR(P)$,
		\item each of the nodes of $T$ represents a d-cube, 
		which is an MCR of a subset of $P$, and
		\item there is an edge between $\mathfrak{r}$ and each $\mathfrak{r}'\in Succ(\mathfrak{r})$.
		
	\end{enumerate}
	Besides, $T$ is called fully built if all the leaf nodes contains only one point.
\end{definition}

The structure of the CST is determined by the method to split a rectangle $\mathfrak{r}$ and generate $Succ(\mathfrak{r})$. Now we introduce the method described in \cite{Vaidya1989}. Given a d-cube $\mathfrak{r}$, let $\alpha(\mathfrak{r})$ denote the geometric center of $\mathfrak{r}$. Then let $h_i(\mathfrak{r})$ denote the hyperplane passing through $\alpha(\mathfrak{r})$ and orthogonal to the $i$-th coordinate axis, $1\le i\le d$. The $d$ hyperplanes divide $\mathfrak{r}$ into $2^d$ cells. Discard the cells containing no points, shrink the non-empty cells to MCR's and $Succ(\mathfrak{r})$ is obtained. 
The following lemma shows the property of the CST constructed by using the above split method. The detailed proof can be found in \cite{Vaidya1989}.

\begin{lemma}[\cite{Vaidya1989}]\label{lema:its-time-split}
	Given a point set $P$ where $|P|=n$, the CST $T$ built based on $P$ has the following properties:
	\begin{enumerate}
		\item there are at most $2n$ d-cubes in $T$, and
		\item the total time to conduct the above split method and build a fully-built CST is $O(2^d n\log{n})$.
	\end{enumerate}

\end{lemma}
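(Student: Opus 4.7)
The plan is to prove parts (1) and (2) separately. For (1) I would use a structural observation about MCRs to bound the branching of $T$; for (2) I would aggregate per-node splitting costs along the tree.

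For part (1), since $T$ is fully built every leaf holds a single point, so $T$ has exactly $n$ leaves. The key claim is that every internal node has at least two children. Let $\mathfrak{r}$ be an internal node with associated point set $P_\mathfrak{r}$. Because $\mathfrak{r}$ is an MCR of $P_\mathfrak{r}$, its side length $Len(\mathfrak{r})$ is forced by the spread of $P_\mathfrak{r}$: some coordinate direction $i$ admits two points $p, q \in P_\mathfrak{r}$ with $|p^{(i)} - q^{(i)}| = Len(\mathfrak{r})$. These two points then sit on opposite sides of the splitting hyperplane $h_i(\mathfrak{r})$, so at least two of the $2^d$ cells produced by the split are non-empty, hence $|Succ(\mathfrak{r})| \ge 2$. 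Applying the standard fact that a rooted tree with $n$ leaves and minimum internal degree two has at most $2n-1$ vertices concludes part (1).

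For part (2), I would analyze the work at a single split. At node $\mathfrak{r}$ with $n_\mathfrak{r}$ points, classifying each point into its cell uses $O(d)$ time per point, computing an MCR per non-empty cell costs $O(d)$ time per point via coordinate-wise min/max, and iterating over the $2^d$ cells adds $O(2^d)$ overhead. So the per-node cost is $O(2^d + d\cdot n_\mathfrak{r})$, which I loosely absorb into $O(2^d\cdot n_\mathfrak{r})$. Summing over all nodes gives a total running time of $O\bigl(2^d\sum_\mathfrak{r} n_\mathfrak{r}\bigr)$, where $\sum_\mathfrak{r} n_\mathfrak{r}$ is exactly the sum of root-to-leaf depths of $T$.

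The hard part is bounding this sum of depths by $O(n\log n)$. Part (1) alone does not suffice, since a tree with at most $2n$ nodes and minimum internal degree two can still have depth $\Omega(n)$. The argument therefore needs the geometry of the cube split itself: every descent strictly halves $Len(\cdot)$, and the subsequent MCR shrink can only decrease it further, so the sequence of side lengths along any root-to-leaf path contracts geometrically. I would try to combine this with an amortized charging scheme that assigns each point $O(1)$ work per ``effective'' split, aiming to conclude that over the whole construction a single point accumulates at most $O(\log n)$ depth contribution. This amortization, rather than the combinatorial counting in part (1) or the per-node accounting above, is the delicate step of the proof; the detailed argument is carried out in \cite{Vaidya1989}.
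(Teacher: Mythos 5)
The paper itself offers no proof of this lemma --- it is cited verbatim from \cite{Vaidya1989} --- so the comparison here is really between your argument and Vaidya's. Your part (1) is correct and essentially the standard argument: the coordinate achieving the maximal spread of $P_\mathfrak{r}$ determines $Len(\mathfrak{r})$, the two extremal points in that coordinate straddle the bisecting hyperplane $h_i(\mathfrak{r})$, so every internal node has at least two non-empty cells, and a rooted tree with $n$ leaves and minimum internal out-degree two has at most $2n-1$ nodes.

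Part (2), however, has a genuine gap, and it is not merely that the ``delicate step'' is deferred: the cost model you set up cannot be rescued by any amortization. You charge $O(d\cdot n_\mathfrak{r})$ per split, so your total is $O(2^d\sum_\mathfrak{r} n_\mathfrak{r})$, the sum of subtree sizes. That sum is $\Omega(n^2)$ in the worst case for a CST: take the one-dimensional point set $\{1,2,4,\dots,2^n\}$; each split at the midpoint peels off a single point and the remaining cell shrinks to a new MCR, so the tree is a caterpillar of depth $n-1$ and $\sum_\mathfrak{r} n_\mathfrak{r}=\Theta(n^2)$. This example also defeats your geometric fallback --- the side lengths do contract geometrically along a root-to-leaf path, but the MCR shrinking resets the scale at every level, so the depth is not $O(\log n)$ and a point can be touched by $\Omega(n)$ ancestor splits. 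The idea you are missing is that the split itself must be \emph{implemented} so that its cost is proportional only to the points landing in the non-largest cells (plus logarithmic overhead), e.g.\ by keeping each box's points in $d$ doubly-linked lists sorted per coordinate and scanning simultaneously from both ends until the smaller side is exhausted. Each point then lies in a non-largest cell at most $O(\log n)$ times over the whole construction, because its containing cell at least halves in cardinality on each such occasion, and the $O(2^d n\log n)$ bound follows. This ``pay only for the smaller part'' accounting is exactly what the present paper reuses for its own RST in Lemmas \ref{lema:split-step-time} and \ref{lema:split-steps-sum}, where the per-split cost is stated in terms of $|\mathfrak{r}_{small}|$ rather than $|\mathfrak{r}|$.
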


\subsubsection*{The $Nbr$ and $Frd$ sets.}
For each rectangle $\mathfrak{r}$ in the CST $T$, two sets $Nbr(\mathfrak{r})$ and $Frd(\mathfrak{r})$ will be maintained, whose definitions are given below.

For any two rectangles  $\mathfrak{r}$ and $\mathfrak{r}'$ in the CST $T$, define $D_{max}(\mathfrak{r}), D_{min}(\mathfrak{r},\mathfrak{r}'),D_{max}(\mathfrak{r},\mathfrak{r}')$ as follows:

$$D_{max}(\mathfrak{r})=\max\limits_{p,p'\in \mathfrak{r}}\{D(p,p')\}$$
$$D_{min}(\mathfrak{r},\mathfrak{r}')=\min\limits_{p\in \mathfrak{r},p'\in \mathfrak{r}'}\{D(p,p')\}$$
$$D_{max}(\mathfrak{r}, \mathfrak{r}')=\max\limits_{p\in \mathfrak{r},p'\in \mathfrak{r}'}\{D(p,p')\}$$
Then define $Est(\mathfrak{r})$ by the following equation:

\begin{equation}\label{eqtn:est(b)}
Est(\mathfrak{r})= \left\{
\begin{aligned}
D_{max}(\mathfrak{r}) &,& if\;|\mathfrak{r}|\ge 2\\
\min\limits_{\mathfrak{r}'\in T} \{D_{max}(\mathfrak{r},\mathfrak{r}')\}&,& otherwise
\end{aligned}
\right.
\end{equation}

Now the definitions of the $Nbr$ and $Frd$ sets can be given using the denotations above.

\begin{definition}
	$Nbr(\mathfrak{r})=\{\mathfrak{r}'\in T \mid D_{min}(\mathfrak{r},\mathfrak{r}')\le Est(\mathfrak{r})\}$.
\end{definition}

\begin{definition}
	$Frd(\mathfrak{r})=\{\mathfrak{r}'\in T\mid \mathfrak{r}\in Nbr(\mathfrak{r}') \}$.
\end{definition}

The algorithm will ensure that  $kNN(p,P\setminus\{p\})\subseteq Nbr(\mathfrak{r})$ for $\forall p\in \mathfrak{r}$. When the algorithm terminates and each rectangle $\mathfrak{r}$ contains only one point, there will be $Nbr(\mathfrak{r})=kNN(p,P\setminus \{p\})$ where $p$ is the only one point in $\mathfrak{r}$. 

The following two lemma shows the size of $Nbr$ and $Frd$ sets maintained for each rectangle in the CST $T$.

\begin{lemma}[\cite{Vaidya1989}]\label{lema:nbr-size-cite}
	$|Nbr(\mathfrak{r})|\le 2^d(2d+3)^d$, for $\forall \mathfrak{r}\in T$.
\end{lemma}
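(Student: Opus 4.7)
My proof plan is to realize $Nbr(\mathfrak{r})$ as a collection of pairwise-disjoint $d$-cubes, each of side at least $Len(\mathfrak{r})/2$, all contained in a concentric buffer cube $\widehat{\mathfrak{r}}$ of side at most $(2d+3)\,Len(\mathfrak{r})$. A volume count over $\widehat{\mathfrak{r}}$ then yields
\[
|Nbr(\mathfrak{r})|\le\left(\frac{(2d+3)\,Len(\mathfrak{r})}{Len(\mathfrak{r})/2}\right)^{d}=(4d+6)^{d}=2^{d}(2d+3)^{d},
\]
which is precisely the claimed bound.

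The first step is to bound $Est(\mathfrak{r})$ in terms of $Len(\mathfrak{r})$. When $|\mathfrak{r}|\ge 2$, the diameter of a $d$-cube gives $Est(\mathfrak{r})=D_{max}(\mathfrak{r})\le \sqrt{d}\,Len(\mathfrak{r})$ immediately. When $|\mathfrak{r}|=1$, I would argue inductively along the CST construction that there is always a non-trivial cube of comparable scale within distance $O(d)\,Len(\mathfrak{r})$ of $\mathfrak{r}$, so that $Est(\mathfrak{r})\le d\,Len(\mathfrak{r})$ in this case as well. By the definition of $Nbr$, any $\mathfrak{r}'\in Nbr(\mathfrak{r})$ satisfies $D_{min}(\mathfrak{r},\mathfrak{r}')\le Est(\mathfrak{r})$, and in combination with the size invariant discussed next this places $\mathfrak{r}'$ inside the concentric cube $\widehat{\mathfrak{r}}$ of side $Len(\mathfrak{r})+2\,Est(\mathfrak{r})\le (2d+3)\,Len(\mathfrak{r})$.

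The decisive step is then to show $Len(\mathfrak{r}')\ge Len(\mathfrak{r})/2$ for every $\mathfrak{r}'\in Nbr(\mathfrak{r})$, after which the packing count stated above closes the proof since the leaves of the current CST $T$ are pairwise disjoint. I would install this as an invariant maintained by the algorithm: whenever a cube $\mathfrak{s}$ is split and its children are redistributed among the $Nbr$ sets of surrounding cubes, only children whose scale matches the host are retained, the rest being tracked through the coupled $Frd$ sets and through their own finer-grained $Nbr$ lists. This is the step I expect to be hardest: because the MCR-shrinking inside a split can produce children of side much smaller than $Len(\mathfrak{s})/2$, the lower bound $Len(\mathfrak{r}')\ge Len(\mathfrak{r})/2$ cannot be read off from the tree geometry alone --- it must be threaded through the joint $Nbr$/$Frd$ maintenance rules and verified to survive every split of every ancestor.
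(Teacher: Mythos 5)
The paper never proves this lemma itself --- it is imported verbatim from \cite{Vaidya1989} --- but your packing strategy is the right one in spirit: it is what Vaidya does, and it is what this paper redoes for the analogous bound on $|kNbr(\mathfrak{r})|$ in Lemmas \ref{lema:lmin-o-lmax-p}--\ref{lema:knbr-size}. The trouble is that both of your load-bearing claims fail when stated for the MCRs themselves. Normalizing by $Len(\mathfrak{r})$ collapses in exactly the case you most need: a leaf of the CST is the MCR of a single point, so $Len(\mathfrak{r})=0$, your bound $Est(\mathfrak{r})\le d\cdot Len(\mathfrak{r})$ would force $Est(\mathfrak{r})=0$, and the buffer cube $\widehat{\mathfrak{r}}$ degenerates to a point. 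Likewise the invariant $Len(\mathfrak{r}')\ge Len(\mathfrak{r})/2$ for every $\mathfrak{r}'\in Nbr(\mathfrak{r})$ is false as stated: the current collection always contains singleton cubes of side $0$, and the definition of $Nbr$ is a pure distance condition via $D_{min}$ and $Est$, so nothing excludes them. Your proposed repair --- retaining only ``children whose scale matches the host'' in the $Nbr$ sets --- would alter the algorithm rather than prove a property of it, and you yourself flag that you have not closed this step.

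The missing idea is to run the disjointness-and-size argument on the \emph{unshrunk} split cells rather than on the shrunken MCRs, and to measure everything at the scale of the cube currently being split rather than at the scale of $\mathfrak{r}$. Each cube $\mathfrak{r}'$ in the current collection lies inside the cell produced when its parent was cut; these cells are pairwise disjoint by construction, and their minimum side is at least half the parent's side, hence at least half the side of the currently largest cube, because the algorithm always splits the largest cube first. Similarly, $Est(\mathfrak{r})$ must be bounded by $O(d)$ times that current maximum side (via the sibling created when the parent of $\mathfrak{r}$ was split, and the subsequent updates in $MntnNbrFrd$), not by $d\cdot Len(\mathfrak{r})$. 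With those two substitutions your volume count does go through and yields $2^d(2d+3)^d$; this is precisely the outer-cell device ($O(\mathfrak{r})$ and $P(\mathfrak{r})$, with $Lmin(O(\mathfrak{r}))\ge\frac{1}{2}Lmax(P(\mathfrak{r}))$) that the paper deploys in Lemma \ref{lema:mappint-knbr-disjoint-cubes} for the $kNbr$ bound.
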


\begin{lemma}[\cite{Vaidya1989}]\label{lema:frd-size-cite}
	$|Frd(\mathfrak{r})|\le 12^d+2^d(8d+3)^d$, for $\forall \mathfrak{r}\in T$.
\end{lemma}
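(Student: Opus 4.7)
The plan is to bound $|Frd(\mathfrak{r})|$ by classifying the rectangles $\mathfrak{r}'$ with $\mathfrak{r}\in Nbr(\mathfrak{r}')$ according to how $Len(\mathfrak{r}')$ compares to $Len(\mathfrak{r})$, and then applying a packing argument in each class. By definition, $\mathfrak{r}'\in Frd(\mathfrak{r})$ iff $D_{min}(\mathfrak{r},\mathfrak{r}')\le Est(\mathfrak{r}')$, so geometrically we must count CST rectangles whose ``reach'' $Est(\mathfrak{r}')$ is large enough to swallow $\mathfrak{r}$. First I would partition $Frd(\mathfrak{r})$ into $F_L=\{\mathfrak{r}'\in Frd(\mathfrak{r}):Len(\mathfrak{r}')\ge Len(\mathfrak{r})\}$ and $F_S=\{\mathfrak{r}'\in Frd(\mathfrak{r}):Len(\mathfrak{r}')<Len(\mathfrak{r})\}$, and bound the two classes separately so that they contribute the two summands in the claimed bound.

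For $F_S$, I would use the fact that rectangles at a common scale in the CST are interior-disjoint, since they arise as MCR-shrunk cells of a regular dyadic grid (cf.\ Lemma \ref{lema:its-time-split}). In the internal-node case $|\mathfrak{r}'|\ge 2$ we have $Est(\mathfrak{r}')=D_{max}(\mathfrak{r}')\le\sqrt{d}\,Len(\mathfrak{r}')$, so the hypothesis $D_{min}(\mathfrak{r},\mathfrak{r}')\le Est(\mathfrak{r}')$ confines $\mathfrak{r}'$ to an expanded cube around $\mathfrak{r}$ of side $O(Len(\mathfrak{r}))$. Expanding each $\mathfrak{r}'\in F_S$ by a constant factor of its own side length keeps the expansions disjoint at each scale, and summing over scales below $Len(\mathfrak{r})$ produces a geometric series dominated by its largest term. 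Tracking constants carefully yields the $12^d$ contribution.

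For $F_L$, I would run an argument dual to the proof of Lemma \ref{lema:nbr-size-cite}: since $Len(\mathfrak{r}')\ge Len(\mathfrak{r})$, the rectangle $\mathfrak{r}'$ plays the role of ``center'' and $\mathfrak{r}$ plays the role of a neighbor in $\mathfrak{r}'$'s Nbr set. The same dyadic-grid packing that controlled $|Nbr|$ by $2^d(2d+3)^d$ now controls the reverse count, but the constants must widen because we have to accommodate every possible relative size $Len(\mathfrak{r}')\in[Len(\mathfrak{r}),\,\mathrm{root})$ and every possible realization of $Est(\mathfrak{r}')$. Passing through the argument with $\mathfrak{r}$ and $\mathfrak{r}'$ swapped, I would replace the factor $(2d+3)$ with $(8d+3)$ to absorb this slack, obtaining the bound $2^d(8d+3)^d$ for $|F_L|$.

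The main obstacle I expect is the leaf case $|\mathfrak{r}'|=1$, particularly inside $F_S$, because then $Est(\mathfrak{r}')=\min_{\mathfrak{r}''\in T}D_{max}(\mathfrak{r}',\mathfrak{r}'')$ need not decay with $Len(\mathfrak{r}')$; the naive per-scale packing fails. I would handle this by a charging argument: assign each qualifying leaf $\mathfrak{r}'$ to the witness $\mathfrak{r}''$ realizing its $Est$-minimum, observe that $\mathfrak{r}''$ must itself lie in a controlled neighborhood of $\mathfrak{r}$ (otherwise $\mathfrak{r}$ would not be within $Est(\mathfrak{r}')$ of $\mathfrak{r}'$), and show that only $O(1)$ leaves can be charged to any single witness because of the CST's interior-disjointness at the leaf scale. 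This bookkeeping is what converts a potentially unbounded count into the constants appearing in $12^d+2^d(8d+3)^d$, and I expect it to be the technically most delicate step.
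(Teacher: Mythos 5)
First, a point of comparison: the paper offers no proof of this lemma at all --- it is imported from \cite{Vaidya1989} as a black box --- so the only in-paper material your sketch can be measured against is the analogous single-scale packing machinery the authors build for their own algorithm (Lemmas \ref{lema:lmin-o-lmax-p}, \ref{lema:mappint-knbr-disjoint-cubes}, \ref{lema:knbr-size}, \ref{lema:kfrd-size}). Measured that way, your plan has a quantitative gap that I do not think can be patched: the per-scale packing sums do not converge. For $F_S$, at scale $s=2^{-j}Len(\mathfrak{r})$ the interior-disjoint boxes of side $s$ with $D_{min}(\mathfrak{r},\mathfrak{r}')\le\sqrt{d}\,s$ all lie in a shell of thickness $O(\sqrt{d}\,s)$ around the boundary of $\mathfrak{r}$, and a volume count permits on the order of $d(1+\sqrt{d})\,2^{j(d-1)}$ of them; the series over $j$ diverges for $d\ge 2$, so it is not ``dominated by its largest term'' --- the domination runs the wrong way. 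Symmetrically for $F_L$: each scale $2^{j}Len(\mathfrak{r})$ contributes at most $(3+2\sqrt{d})^{d}$ boxes, a constant per scale, but the number of admissible scales is $\Theta(\log\delta)$ in the spread $\delta$, not $O(1)$. Either way, ``partition by relative size, pack scale by scale, sum'' cannot yield a bound depending only on $d$, which is what the lemma asserts.

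The missing idea is the algorithmic invariant that collapses everything to a single scale. The bound on $|Frd(\mathfrak{r})|$ is only needed while $\mathfrak{r}$ is in the live collection, and because boxes are always split in non-increasing order of size, every live box $\mathfrak{r}'$ has a parent at least as large as the box currently being split and therefore owns a pairwise-disjoint outer cell of side at least half the current maximum side length (this is exactly what Lemma \ref{lema:lmin-o-lmax-p} and Lemma \ref{lema:mappint-knbr-disjoint-cubes} provide in the paper's own setting). Once $Est(\mathfrak{r}')$ is also bounded in terms of that same current scale, all of $Frd(\mathfrak{r})$ is packed at one scale, and the natural case split producing the two summands is on the two branches of the definition of $Est$ ($|\mathfrak{r}'|\ge 2$ versus singleton), not on $Len(\mathfrak{r}')$ versus $Len(\mathfrak{r})$. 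Your handling of the singleton branch is also not yet a proof: there is no single ``leaf scale'' at which interior-disjointness applies (singletons are created at arbitrary depths), and arbitrarily many singleton boxes can have their $Est$ realized by the same witness $\mathfrak{r}''$, so the claim that each witness is charged only $O(1)$ times is unsupported; without the single-scale invariant above I do not see how to make that charging argument close.
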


\subsubsection*{The overall algorithm.}
Now we are ready to give the overall algorithm proposed in \cite{Vaidya1989}. The pseudo codes are divided into the main procedure and the $MntnNbrFrd$ sub-procedure.

\begin{algorithm}[H]
	\caption{The algorithm}\label{alg:cite}
	\KwIn{A point set $P$}
	\KwOut{The result of All-kNN}
	\SetKwFunction{MntnNbrFrd}{MntnNbrFrd}
	\SetKwProg{Procedure}{Procedure}{\string:}{end}
	$\mathfrak{r}_0\gets MCR(P)$, $Nbr(\mathfrak{r}_0)\gets\emptyset$, $Frd(\mathfrak{r}_0)\gets\emptyset$, $\mathcal{S}\gets \{\mathfrak{r}_0\}$\;
	Create a tree rooted at $\mathfrak{r}_0$\;
	\While{$|\mathcal{S}|\le|P|$}{
		$\mathfrak{r}\gets$ the rectangle in $\mathcal{S}$ that has the largest volume\;
		Split $\mathfrak{r}$ and generates $Succ(\mathfrak{r})$\;
		\MntnNbrFrd{$\mathfrak{r},Succ(\mathfrak{r})$}\;
		$\mathcal{S}\gets \mathcal{S}\setminus \{\mathfrak{r}\}\cup Succ(\mathfrak{r})$\;
	}
\end{algorithm}

\begin{algorithm}[thb]
	\caption{The $MntnNbrFrd$ sub-procedure}\label{alg:mntn-nbr-frd}
	\SetKwProg{Procedure}{Procedure}{\string:}{end}
	\SetKwFunction{MntnNbrFrd}{MntnNbrFrd}
	\Procedure{\MntnNbrFrd{$\mathfrak{r},Succ(\mathfrak{r})$}}{
		Create a node for $\mathfrak{r}'$ and hang it under the node for $\mathfrak{r}$ in the CST $T$;
		\ForEach{$\mathfrak{r}'\in Succ(\mathfrak{r})$}{
			$Nbr(\mathfrak{r}')\gets Nbr(\mathfrak{r})\cup Succ(\mathfrak{r})\setminus \{\mathfrak{r}'\}$\;
			$Frd(\mathfrak{r}')\gets Frd(\mathfrak{r})\cup Succ(\mathfrak{r})\setminus \{\mathfrak{r}'\}$\;
			Create $Est(\mathfrak{r}')$ according to Equation \ref{eqtn:est(b)}\;
		}
		\ForEach{$\mathfrak{r}'\in Frd(\mathfrak{r})$}{
			$Nbr(\mathfrak{r}')\gets Nbr(\mathfrak{r})\cup Succ(\mathfrak{r})\setminus \{\mathfrak{r}\}$\;
			$Est(\mathfrak{r})\gets \min\{Est(\mathfrak{r}'), \min\limits_{\mathfrak{r}''\in Succ(\mathfrak{r})} \{D_{max}(\mathfrak{r},\mathfrak{r}'')\} \}$\;
		}
		\ForEach{$\mathfrak{r}'\in Nbr(\mathfrak{r})$}{
			$Frd(\mathfrak{r}')\gets Frd(\mathfrak{r}')\cup Succ(\mathfrak{r})\setminus \{\mathfrak{r} \} $\;
		}
		\ForEach{$\mathfrak{r}'\in Succ(\mathfrak{r}\cup Frd(\mathfrak{r}))$}{
			\ForEach{$\mathfrak{r}''\in Nbr(\mathfrak{r}')$}{
				\If{$D_{min}(\mathfrak{r},\mathfrak{r}''> Est(\mathfrak{r}') )$}{
					Delete $\mathfrak{r}''$ from $Nbr(\mathfrak{r}')$\;
					Delete $\mathfrak{r}'$ from $Frd(\mathfrak{r}'')$\;
				}
			}
			
		}
	}
\end{algorithm}

It can be deduced from Algorithm \ref{alg:mntn-nbr-frd} that when the $MntnNbrFrd$ sub-procedure is invoked, there are at most $2|Succ(\mathfrak{r})| (|Succ(\mathfrak{r})|+|Nbr(\mathfrak{r})|+|Frd(\mathfrak{r}))|$ additions to the $Nbr$ and $Frd$ sets. On the other hand, the $while$ loop in Algorithm \ref{alg:cite} is executed for at most $2n$ times, because there are at most $2n$ d-cubes in the CST $T$ and each execution of the $while$ loop visits at least one d-cube in $T$. And thus the $MntnNbrFrd$ sub-procedure is invoked for at most $2n$ times. Combing with Lemma \ref{lema:nbr-size-cite} and \ref{lema:frd-size-cite}, the following lemma can be easily proved.

\begin{lemma}[\cite{Vaidya1989}]\label{lema:its-time-mntn-nbr-frd}
	The total time to maintain the $Nbr$ and $Frd$ sets, which is the total time to execute Algorithm \ref{alg:mntn-nbr-frd}, is $O(d^d\cdot n)$.
\end{lemma}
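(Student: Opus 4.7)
The plan is to prove the lemma by multiplying a per-invocation cost bound with an invocation-count bound, using the structural facts already established earlier in the excerpt. The preceding discussion already observes that a single call to $MntnNbrFrd(\mathfrak{r},Succ(\mathfrak{r}))$ performs at most $2|Succ(\mathfrak{r})|(|Succ(\mathfrak{r})|+|Nbr(\mathfrak{r})|+|Frd(\mathfrak{r})|)$ insertions into the $Nbr$ and $Frd$ sets, and also notes that the outer $while$ loop in Algorithm \ref{alg:cite} executes at most $2n$ times (since Lemma \ref{lema:its-time-split} item 1 bounds the total number of d-cubes in $T$ by $2n$ and each iteration consumes at least one d-cube). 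So the proof reduces to bounding the per-invocation cost by $O(d^d)$.

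First I would argue that $|Succ(\mathfrak{r})|\le 2^d$, which is immediate from the splitting rule: the $d$ orthogonal hyperplanes through $\alpha(\mathfrak{r})$ partition $\mathfrak{r}$ into at most $2^d$ cells. Next, plugging in Lemma \ref{lema:nbr-size-cite} gives $|Nbr(\mathfrak{r})|\le 2^d(2d+3)^d$, and Lemma \ref{lema:frd-size-cite} gives $|Frd(\mathfrak{r})|\le 12^d+2^d(8d+3)^d$. Substituting these three bounds into the per-invocation cost expression and collecting terms, every factor is of the form $(cd)^d$ for some absolute constant $c$, so the product is bounded by $O(d^d)$ (the polynomial factors in $d$ are absorbed into the base).

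I would then verify that the remaining work in Algorithm \ref{alg:mntn-nbr-frd} outside the insertion steps is of the same order: the final nested $foreach$ block iterates over $Succ(\mathfrak{r})\cup Frd(\mathfrak{r})$ and, for each such $\mathfrak{r}'$, scans $Nbr(\mathfrak{r}')$ to possibly delete elements and their back-pointers in $Frd$; the outer range has size at most $2^d+|Frd(\mathfrak{r})|$ and the inner range has size at most $2^d(2d+3)^d$ by Lemma \ref{lema:nbr-size-cite} again, so this loop also contributes only $O(d^d)$ per invocation. The updates to $Est$ and the construction of $Est(\mathfrak{r}')$ from Equation \ref{eqtn:est(b)} are similarly bounded, provided one assumes (as the cited paper does) that each pairwise distance evaluation $D_{min},D_{max}$ between two fixed rectangles takes constant time. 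Multiplying $O(d^d)$ by the $2n$ invocations then yields the claimed $O(d^d\cdot n)$ total bound.

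The main obstacle, and the place where I would scrutinise the argument most carefully, is the implicit accounting in the last step: the $Est(\mathfrak{r}')$ computed via Equation \ref{eqtn:est(b)} for $|\mathfrak{r}'|=1$ is defined as a minimum over all rectangles in $T$, which is not obviously a constant-time or even $O(d^d)$-time operation unless one restricts the minimum to a small candidate set. Glossing over this subtlety is exactly the kind of place where the $O(n\log n)$ claim of \cite{Vaidya1989} can fail, so in a rigorous writeup I would either (i) invoke an auxiliary argument showing $Est(\mathfrak{r}')$ can be computed from $Nbr(\mathfrak{r}')$ alone, giving $O(d^d)$ work, or (ii) flag it as the precise point at which the intended bound becomes fragile. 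For the purpose of restating the cited Lemma \ref{lema:its-time-mntn-nbr-frd} here, I would adopt option (i) and present the proof as the clean product of per-invocation $O(d^d)$ work and $2n$ invocations.
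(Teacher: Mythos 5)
Your proposal matches the paper's argument: the paper likewise bounds the per-invocation work by $2|Succ(\mathfrak{r})|(|Succ(\mathfrak{r})|+|Nbr(\mathfrak{r})|+|Frd(\mathfrak{r})|)$, applies Lemma \ref{lema:nbr-size-cite} and Lemma \ref{lema:frd-size-cite} together with $|Succ(\mathfrak{r})|\le 2^d$, and multiplies by the $2n$ bound on the number of invocations. Your closing caveat about the cost of evaluating $Est$ via Equation \ref{eqtn:est(b)} is precisely the issue the paper isolates separately (it is excluded from this lemma's accounting and shown in Lemma \ref{lema:its-time-est} to cost $\Omega(n^2)$), so adopting your option (i) and keeping the $Est$ maintenance outside this lemma is the correct reading.
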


\subsection{Analysis of Algorithm \ref{alg:cite}}\label{subsec:its-analyz}
The total time to execute Algorithm \ref{alg:cite} can be divided into three parts, i.e., the time to conduct the split method and build the split tree, the time to maintain the $Nbr$ and $Frd$ sets, and  the time to maintain the $Est$ value for each d-cube. The former two parts are already solved by Lemma \ref{lema:its-time-split} and \ref{lema:its-time-mntn-nbr-frd}. For the last part, the author of \cite{Vaidya1989} claims that the total time to maintain the $Est$ value is $O(n)$, but this is impossible. The following lemma shows the impossibility.

\begin{lemma}\label{lema:its-time-est}
	The time to create and maintain the $Est(\mathfrak{r})$ value for all $\mathfrak{r}$ in the CST $T$ is $\Omega(n^2)$.
\end{lemma}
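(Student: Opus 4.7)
The plan is to exhibit a pathological input for which the mere \emph{initialization} of the $Est$ values, interpreted literally through Equation~\ref{eqtn:est(b)}, already incurs $\Omega(n^2)$ work; since the total maintenance cost can only be larger, the lower bound follows.

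First I would isolate the expensive step. Whenever a newly created rectangle $\mathfrak{r}$ contains a single point, Equation~\ref{eqtn:est(b)} requires $Est(\mathfrak{r})=\min_{\mathfrak{r}'\in T}\{D_{max}(\mathfrak{r},\mathfrak{r}')\}$, a minimum over \emph{every} rectangle currently in $T$. The algorithm specifies no auxiliary data structure for this minimum, so evaluating it at the moment of creation must touch $\Omega(|T|)$ rectangles; the friend-loop update later in Algorithm~\ref{alg:mntn-nbr-frd} can refine the value, but cannot avoid paying this one-shot initialization cost. In contrast, for a multi-point rectangle $Est(\mathfrak{r})=D_{max}(\mathfrak{r})$ is computable in $O(1)$, so the bottleneck is entirely the single-point case.

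Next I would construct a bad input. Place the $n$ points on the main diagonal of $R^d$ with geometrically growing spacing, e.g.\ $p_i=(2^i,2^i,\ldots,2^i)$ for $i=1,\ldots,n$. A direct geometric check shows that, at every step, the bisecting hyperplanes through the geometric centre of the current largest d-cube leave only two of the $2^d$ cells non-empty: the ``top'' cell contains the extremal diagonal point alone, and the ``bottom'' cell, once shrunk to its MCR, is a scaled copy of the original configuration with one fewer point. Consequently the priority queue $\mathcal{S}$ always picks that remaining sub-cube to split, every split peels off exactly one new single-point leaf, and after $k$ splits the CST contains $1+2k$ nodes. The single-point leaf $\mathfrak{r}_k$ born at step $k$ therefore triggers an $Est$-initialization of cost $\Omega(k)$. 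Summing $\sum_{k=1}^{n-1}\Omega(k)=\Omega(n^2)$ gives the claimed lower bound.

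The main obstacle is the geometric bookkeeping needed to confirm the ``peeling'' at every scale: one has to use the factor-of-two spacing between consecutive diagonal points to ensure that, no matter the dimension $d$, the centre-bisection cleanly separates $p_{n-k+1}$ from $\{p_1,\ldots,p_{n-k}\}$ and that shrinking the non-empty child to its MCR restores the inductive structure exactly; one also has to verify that the largest-volume tie-breaking rule of $\mathcal{S}$ always prefers the remaining sub-cube over the shrunken singleton. Once these details are in place, the rest of the argument is a one-line summation.
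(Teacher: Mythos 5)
Your lower bound is (under a literal reading of Equation~\ref{eqtn:est(b)}) valid, but it attacks the opposite branch of the definition from the one the paper uses, so the two proofs are genuinely different. The paper's proof targets the case $|\mathfrak{r}|\ge 2$: it reads $D_{max}(\mathfrak{r})=\max_{p,p'\in\mathfrak{r}}D(p,p')$ as the diameter of the \emph{point set} contained in $\mathfrak{r}$, so that already computing $Est(\mathfrak{r}_0)$ for the root forces all $\binom{n}{2}$ pairwise distances and gives $\Omega(n^2)$ on \emph{every} input, with no adversarial construction needed. You instead declare that branch to cost $O(1)$ --- i.e.\ you read $D_{max}(\mathfrak{r})$ as the diagonal of the geometric rectangle --- which is precisely the reading the paper rejects; your side remark that ``the bottleneck is entirely the single-point case'' therefore contradicts the paper's central claim, even though it does not invalidate your own lower bound. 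Your argument then lives entirely in the $|\mathfrak{r}|=1$ branch, where $Est(\mathfrak{r})=\min_{\mathfrak{r}'\in T}D_{max}(\mathfrak{r},\mathfrak{r}')$ ranges over the whole tree, and you need the diagonal, geometrically spaced point set to guarantee that singletons keep being created while $|T|$ is already $\Theta(k)$; that construction is correct (only two of the $2^d$ cells are ever nonempty, the singleton's MCR has volume zero so the largest-volume rule always re-splits the big cube, and the telescoping sum gives $\Omega(n^2)$). What each approach buys: the paper's argument is input-independent and one line, but collapses if $D_{max}$ of a rectangle is meant geometrically; yours survives that reading, but is fragile in a different place --- the paper itself restates the second branch inside its proof as $\min_{\mathfrak{r}'\in Nbr(\mathfrak{r})}$ rather than $\min_{\mathfrak{r}'\in T}$, and since $|Nbr(\mathfrak{r})|\le 2^d(2d+3)^d$ is independent of $n$ (Lemma~\ref{lema:nbr-size-cite}), under that version of the definition your per-singleton cost drops to $O(1)$ in $n$ and the whole argument evaporates. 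You should state explicitly which version of the definition you are charging against, since your bound stands or falls with that choice.
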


\begin{proof}
	We recite the definition of $Est(\mathfrak{r})$ here, where
	\begin{equation*}
	Est(\mathfrak{r})= \left\{
	\begin{aligned}
	D_{max}(\mathfrak{r}) &,& if\;|\mathfrak{r}|\ge 2\\
	\min\limits_{\mathfrak{r}'\in Nbr(\mathfrak{r})} \{D_{max}(\mathfrak{r},\mathfrak{r}')\}&,& otherwise
	\end{aligned}
	\right.
	\end{equation*}
	
	Please pay attention to the first term. When $|\mathfrak{r}|\ge 2$, $Est(\mathfrak{r})$ is defined by $Est(\mathfrak{r})=D_{max}(\mathfrak{r})$, 
	where $D_{max}(\mathfrak{r})=\max\limits_{p_1,p_2\in \mathfrak{r}}{D(p_1,p_2)}$. We assure that this is the original definition in \cite{Vaidya1989}. According to this definition, it will exceed the $O(n\log{n})$ bound merely to compute $Est(\mathfrak{r}_0)$, 
	where $\mathfrak{r}_0$ is the MCR of the input point set $P$. Actually, it needs $\Theta(n^2)$ time to compute $Est(\mathfrak{r}_0)$, because it needs to compute the pairwise distance between each pair of points in $P$, and there are obviously $\frac{n(n-1)}{2}$ pairs. Then the time to compute $Est(\mathfrak{r})$ for all $\mathfrak{r}$ in the CST $T$ is obviously $\Omega(n^2)$.
	\qed
\end{proof}

Finally, we give the following theorem stating a lower bound of the time complexity of Algorithm \ref{alg:cite}, contradicting with the analysis given in \cite{Vaidya1989}.

\begin{theorem}
	An lower bound of the time complexity of Algorithm \ref{alg:cite} is $\Omega(n^2)$.
\end{theorem}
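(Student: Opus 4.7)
The plan is to derive the theorem as an almost immediate corollary of Lemma \ref{lema:its-time-est}, which has already done the real work. Algorithm \ref{alg:cite}'s total running time decomposes cleanly into three additive parts: (i) the time to perform the splits and build the CST, (ii) the time to maintain the $Nbr$ and $Frd$ sets through all calls to $MntnNbrFrd$, and (iii) the time to create and update the $Est$ values prescribed by Equation \ref{eqtn:est(b)}. Parts (i) and (ii) are already controlled from above by Lemma \ref{lema:its-time-split} and Lemma \ref{lema:its-time-mntn-nbr-frd}, so they contribute only nonnegative additive terms and cannot erase a lower bound that comes from part (iii).

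Next I would invoke Lemma \ref{lema:its-time-est}, which asserts that maintaining $Est(\mathfrak{r})$ for all $\mathfrak{r}\in T$ already costs $\Omega(n^2)$. The argument of that lemma pins the bound down to a single unavoidable computation: at the root $\mathfrak{r}_0 = MCR(P)$ we have $|\mathfrak{r}_0| = n \ge 2$, so Equation \ref{eqtn:est(b)} forces $Est(\mathfrak{r}_0) = D_{max}(\mathfrak{r}_0) = \max_{p_1, p_2 \in P} D(p_1, p_2)$, which the algorithm as written evaluates by inspecting all $\binom{n}{2}$ pairs. Since the main procedure must initialize $Est(\mathfrak{r}_0)$ before any split can be meaningfully processed (the first $MntnNbrFrd$ call relies on it), this $\Theta(n^2)$ cost is unavoidably incurred.

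Putting the pieces together, the total running time $T(n)$ satisfies $T(n) \ge T_{Est}(n) = \Omega(n^2)$, which yields the claimed lower bound. The main ``obstacle,'' such as it is, is merely to emphasize that one cannot sidestep the cost by clever reordering: Algorithm \ref{alg:cite} is specified to use Equation \ref{eqtn:est(b)} verbatim, and the very first d-cube in the tree triggers the quadratic cost. Any attempt to replace the literal evaluation of $D_{max}(\mathfrak{r})$ by a faster diameter computation would no longer be Algorithm \ref{alg:cite} but a different algorithm, which is precisely the point the paper goes on to exploit in Section \ref{sec:our-alg}. Thus the theorem follows directly from Lemma \ref{lema:its-time-est} together with the observation that the other two components of the algorithm only add to the runtime.
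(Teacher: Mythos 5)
Your proposal is correct and matches the paper's own (implicit) argument: the paper presents this theorem as a direct consequence of Lemma~\ref{lema:its-time-est}, using exactly the same three-part decomposition of the running time from Section~\ref{subsec:its-analyz} and the observation that the $Est$-maintenance cost alone already forces the $\Omega(n^2)$ bound. Your additional remarks about the unavoidability of the root computation simply make explicit what the paper leaves to the reader.
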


\section{The algorithm in this paper}\label{sec:our-alg}
According to the analysis in Section \ref{subsec:its-analyz}, the main defect of Algorithm \ref{alg:cite} is that the $Est$ value defined in Equation \ref{eqtn:est(b)} can not be efficiently computed. In this section we propose our algorithm that solves this problem elegantly. 

The algorithm is divided into tree parts, which will be introduced one by one in the rest of this section.

\subsection{Constructing the Rectangle Split Tree}
The first part of our algorithm is to build the Rectangle Split Tree (RST), which is very similar to the algorithm for building the CST. There are several differences between the two algorithms. First, our algorithm represents the subsets of the input point set $P$ by bounding rectangles, rather than MCR's. Second, each time when a rectangle $\mathfrak{r}$ is chosen, it is split into two sub-rectangles by cutting the longest edge of $\mathfrak{r}$ into two equal halves. Let $\mathfrak{r}_{large}$ denote the one in the two sub-rectangles of $\mathfrak{r}$ that contains more points, and $\mathfrak{r}_{small}$ denotes the other.  The Rectangle Split Tree is defined in the following Definition 4.1.

\begin{definition}[Rectangle Split Tree, RST]\label{def:rst}
	Given a point set $P$, a RST based on $P$ is a tree structure satisfying:
	\begin{enumerate}
		\item the root of $T$ is the bounding rectangle of $P$,
		\item each node in $T$ represents a rectangle, which is the bounding rectangle of a subset of $P$, and
		\item there is an edge from $\mathfrak{r}$ to $\mathfrak{r}_{large}$ and $\mathfrak{r}_{small}$.
	\end{enumerate}
	Besides, $T$ is called fully built if all the leaf nodes contain only one point.
\end{definition}

Next we give the algorithm to build the RST. The complexity of Algorithm \ref{alg:rst} will be proved to be $O(dn\log{n})$ in Section \ref{sec:analyz}.

\begin{algorithm}[H]
	\caption{Constructing the RST}\label{alg:rst}
	\KwIn{a point set $P$, and the bounding rectangel $\mathfrak{r}_P$  of $P$}
	\KwOut{an RST $T$}
	Create a tree $T$ rooted at $\mathfrak{r}_P$\;
	$\mathcal{S}_0\gets \{\mathfrak{r}\},\mathcal{S}_1\gets\emptyset$\;
	
	\While{$|\mathcal{S}_1|<|P|$}{
		$\mathfrak{r}\gets$ an arbitrary rectangle in $\mathcal{S}_0$\;
		Split $\mathfrak{r}$ into $\mathfrak{r}_{large}$ and $\mathfrak{r}_{small}$\;
		
		\ForEach{$\mathfrak{r}_i\in \{ \mathfrak{r}_{large}, \mathfrak{r}_{small}\}$}{
			Create a node for $\mathfrak{r}_i$ and hang it under the node for $\mathfrak{r}$ in $T$\;
			
			\uIf{$|\mathfrak{r}_i|>1$}{
				$\mathcal{S}_0\gets\mathcal{S}_0\cup \{\mathfrak{r}_i\}$\;
			}\Else{
				$\mathcal{S}_1\gets\mathcal{S}_1\cup \{\mathfrak{r}_i\}$\;
			}
		}
		$\mathcal{S}_0\gets \mathcal{S}_0\setminus \{\mathfrak{r}\}$\;
	}
\end{algorithm}

\subsection{Computing the Approximate MEB}
\begin{algorithm}[H]
	\SetKwFunction{ComputeMES}{ComputeMES}
	\SetKwProg{Procedure}{Procedure}{\string:}{end}
	\caption{Compute the approximate MES}\label{alg:compute-mes}
	\KwIn{a RST $T$}
	\KwOut{Compute an $\frac{3}{2}$-MES for $\forall \mathfrak{r}\in T$}
	Invoke \ComputeMES($root(T)$), where $root(T)$ is the root of $T$\;
	\Procedure{\ComputeMES{$\mathfrak{r}$}}{
		\If{$|\mathfrak{r}|=1$}{
			$c_\mathfrak{r}=p$, where $p$ is the only point in $\mathfrak{r}$\;
			$r_{\mathfrak{r}}=0$\;
			return;
		}
		\ComputeMES{$\mathfrak{r}_{large}$}\;
		\ComputeMES{$\mathfrak{r}_{small}$}\;		
		Invoke Algorithm \ref{alg:approx-meb}, where the parameters are set to : $c_{\mathfrak{r}_{large}},r_{\mathfrak{r}_{large}},\mathfrak{r}_{small}$\;
	}
\end{algorithm}

The second part of the proposed algorithm is to compute the approximate MES for each node in the RST, which is given as Algorithm \ref{alg:compute-mes}. This algorithm receives the constructed RST $T$ as input, and traverse $T$ with post-root order, where Algorithm \ref{alg:approx-meb} will be invoked at each node. It will be shown in Section \ref{sec:analyz} that the algorithm takes $O(dn\log{n})$ time.

\subsection{Computing All-kNN}\label{subsec:all-knn}
Based on the algorithm for constructing the RST and computing MES, the algorithm for All-kNN is given as Algorithm \ref{alg:all-knn}. It is worthy to point out that the algorithm naturally applies to all integer $k\ge 1$, which is an advantage against Algorithm \ref{alg:cite}.

The algorithm first invoke Algorithm \ref{alg:rst} on $P$ to construct an $RST$ $T$ (Line \ref{line:aknn:calling-rst}) . Then  Algorithm \ref{alg:compute-mes} is invoked at Line \ref{line:aknn:calling-mes} to compute the approximate MES for each node in $T$. The rest of the algorithm aims to traverse $T$ and construct $kNbr(\mathfrak{r})$ and $kFrd(\mathfrak{r})$ sets for each $\mathfrak{r}\in T$. The formal definition of the two sets along with two auxiliary definitions are given below. In these definitions, assume that a set $H$ of rectangles is given, and the approximate MES $(c_{\mathfrak{r}},r_{\mathfrak{r}})$ of each $\mathfrak{r}\in H$ is precomputed.

\begin{definition}\label{def:qlty}
	Given two rectangles $\mathfrak{r}$ and $\mathfrak{r}'\in H$, define the relative quality of $\mathfrak{r}'$ against $\mathfrak{r}$ as follows: 
	$$Qlty(\mathfrak{r}',\mathfrak{r})=\sqrt{(D(c_{\mathfrak{r}'},c_{\mathfrak{r}})+\frac{1}{2}r_{\mathfrak{r}'})^2+r_{\mathfrak{r}'}^2 }$$
	
\end{definition}

\begin{definition} Let $\min\limits_{k}\{x\in S\mid f(x) \}$ be the k-th smallest value in the set $\{x\in S\mid f(x) \}$. Define $kThres(\mathfrak{r})$ for each $\mathfrak{r}\in H$ as follow: 
	\begin{equation}\label{eqtn:kthres}
	kThres(\mathfrak{r})=\left\{
	\begin{aligned}
	2\cdot r_{\mathfrak{r}}&,& if |\mathfrak{r}|\ge k+1\\
	\min\limits_{k}\left\{\mathfrak{r}'\in H\setminus\{\mathfrak{r}\} \mid Qlty(\mathfrak{r}',\mathfrak{r})  \right\}&,& if |\mathfrak{r}|<k+1\\
	\end{aligned}	
	\right.
	\end{equation}
	
\end{definition}

\begin{definition}\label{def:knbr-set}
	$kNbr(\mathfrak{r})=\{\mathfrak{r}'\in H \mid D(c_{\mathfrak{r}'},c_{\mathfrak{r}})\le r_{\mathfrak{r}}+r_{\mathfrak{r}'}+kThres(\mathfrak{r})  \}$.
	
\end{definition}

\begin{definition}\label{def:kfrd-set}
	$kFrd(\mathfrak{r})=\{ \mathfrak{r}'\in H \mid \mathfrak{r}\in kNbr(\mathfrak{r}') \} $.
\end{definition}

Next we describe how Algorithm \ref{alg:all-knn} works to construct the desired $kNbr$ and $kFrd$ sets.

The algorithm will visit all the rectangles in the RST $T$ by a descending order on the radius of the approximate MES. To do so, a heap $H$ is maintained to store the rectangles in $T$ ordered by the radius of the approximate MES of them. The main part of Algorithm \ref{alg:all-knn} is a $While$ loop. Each time the $While$ loop is executed, the top element $\mathfrak{r}$ of $H$, which is the one with the largest approximate MES radius, will be popped out of $H$. Then the algorithm will determine the $\mathcal{S}_{son}$ set, push all the rectangles in $\mathcal{S}_{son}$ into $H$ and process the rectangles in $\mathcal{S}_{son}$ by invoking the $MntnNbrFrd$ process (Algorithm \ref{alg:new-mntn-nbrfrd}) . The $While$ loop will terminate when $|H|=|P|$ and that is when all the rectangles in $T$ are processed.

The set $\mathcal{S}_{son}$ is determined by the following criterion.  
First, $\mathcal{S}_{son}$ is set to be $\{\mathfrak{r}_{large},\mathfrak{r}_{small} \}$. 
Then, if any $\mathfrak{r}_i\in \mathcal{S}_{son}$ satisfies $|\mathfrak{r}_i|<k+1$, $\mathfrak{r}_i$ will be replaced by the set of the leaf nodes in the subtree of $T$ rooted at $\mathfrak{r}_i$. Such work is done by Line \ref{line:aknn:determine-son-start} to Line \ref{line:aknn:determine-son-end} in Algorithm \ref{alg:all-knn}.

\begin{algorithm}[H]
	\caption{All-kNN}\label{alg:all-knn}
	\KwIn{a point set $P$}
	\KwOut{All-kNN on $P$}
	\SetKwFunction{MntnNbrFrd}{MntnNbrFrd}
	
	Invoke Algorithm \ref{alg:rst} on $P$, and construct an RST $T$\;\label{line:aknn:calling-rst}
	Invoke \ComputeMES{$root(T)$} where $root(T)$ is the root of $T$\;\label{line:aknn:calling-mes}
	
	$kNbr(root(T))=\emptyset, kFrd(root(T))=\emptyset$\;
	
	Initialize a heap $H=\{root(T)\}$, which is order by the radius of the approximate MES\;
	\While{$|H|<|P|$}{\label{line:aknn:while-loop}
		$\mathfrak{r}\gets H.pop()$\;\label{line:aknn:pop}
		$\mathcal{S}_{son}\gets \{\mathfrak{r}_{large},\mathfrak{r}_{small}\}$\;
		
		\ForEach{$\mathfrak{r}_i\in \mathcal{S}_{son}$}{
			\If{$|\mathfrak{r}_i|<k+1$}{\label{line:aknn:determine-son-start}
				
				$\mathcal{S}_{leaf}\gets $ the set of the leaf nodes in the subtree rooted at $\mathfrak{r}_i$\;
				$\mathcal{S}\gets \mathcal{S}_{son}\setminus \{\mathfrak{r}_i\}\cup \mathcal{S}_{leaf}$\;
				Delete the non-leaf nodes in the sub-tree\;\label{line:aknn:deleting-smallers}
				\label{line:aknn:final-delete-inner}
				
				Hang the leaf nodes directly under $\mathfrak{r}$\;
				
			}
			Push $\mathfrak{r}_i$ into $H$\;
		}\label{line:aknn:determine-son-end}
		
		\MntnNbrFrd{$\mathfrak{r}, \mathcal{S}_{son}$}\;
	}

\end{algorithm}

After the $\mathcal{S}_{son}$ is determined, Algorithm \ref{alg:all-knn} invokes the $MntnNbrFrm$ process to construct the $kNbr$ and $kFrd$ sets for each $\mathfrak{r}_i\in \mathcal{S}_{son}$. Though the algorithm is named the same with Algorithm \ref{alg:mntn-nbr-frd}, they actually work very differently. Algorithm \ref{alg:new-mntn-nbrfrd} relies on four sub-procedures, which are $AddInNbr$, $DelFromNbr$, $DelFromFrd$ and $TruncateNbr$, respectively. The name of these sub-procedures intuitively shows their functionality, and the pseudo codes are given in Algorithm \ref{alg:sub-proc}. Back to the $MntnNbrFrd$ process, Algorithm \ref{alg:new-mntn-nbrfrd} conducts the following four steps.

\begin{enumerate}[Step 1.]
	\item In Lines \ref{line:aknn:iterate-nbr-start} to \ref{line:aknn:iterate-nbr-end}, the algorithm iterates over $kNbr(\mathfrak{r})$. For each $\mathfrak{r}'\in kNbr(\mathfrak{r})$, $\mathfrak{r}$ will be deleted from $kFrd(\mathfrak{r}')$ since $\mathfrak{r}$ is split and replaced by $\mathcal{S}_{son}$. Then the algorithm invokes the $AddInNbr$ sub-procedure to try to add $\mathfrak{r}'$ into $kNbr(\mathfrak{r}_i)$ for each $\mathfrak{r}_{i}\in \mathcal{S}_{son}$. It can be seen from the pseudo codes of the $AddInNbr$ sub-procedure that if $\mathfrak{r}'$ is added into $kNbr(\mathfrak{r}_i)$ then $\mathfrak{r}_i$ will be added into $kFrd(\mathfrak{r}')$.
	\item In Lines \ref{line:aknn:iterate-frd-start} to \ref{line:aknn:iterate-frd-end}, the algorithm iterates over $kFrd(\mathfrak{r})$. For each $\mathfrak{r}'\in kFrd(\mathfrak{r})$, the algorithm will delete $\mathfrak{r}$ from $kNbr(\mathfrak{r}')$ and invoke $AddInNbr$ to try to add each $\mathfrak{r}_i\in \mathcal{S}_{son}$ into $kNbr(\mathfrak{r}')$.
	\item In Lines \ref{line:aknn:iterate-son-start} to \ref{line:aknn:iterate-son-end}, the algorithm iterates over $\mathcal{S}_{son}$. For each $\mathfrak{r}_i\in \mathcal{S}_{son}$, the $AddInNbr$ sub-procedure will be invoked to try adding each rectangle $\mathfrak{r}_j\in \mathcal{S}_{son}\setminus \{\mathfrak{r}_i \}$ into $kNbr(\mathfrak{r}_i)$.
	\item In Lines \ref{line:aknn:truncate-start} to \ref{line:aknn:truncate-end}, the $TruncateNbr$ sub-procedure is invoked on each $\mathfrak{r}'\in \mathcal{S}_{son}\cup kFrd(\mathfrak{r})$ where $|\mathfrak{r}'|=1$. The word \emph{truncate} indicates that it will delete a section of $kNbr(\mathfrak{r})$, as will be explained next.
\end{enumerate}

\begin{algorithm}[H]
	\caption{The $MntnNbrFrd$ process}\label{alg:new-mntn-nbrfrd}
	\SetKwFunction{DelFromFrd}{DelFromFrd}
	\SetKwFunction{DelFromNbr}{DelFromNbr}
	\SetKwFunction{AddInNbr}{AddInNbr}
	\SetKwFunction{TruncateNbr}{TruncateNbr}
	
	\SetKwProg{Procedure}{Procedure}{\string:}{end}
	\SetKwFunction{MntnNbrFrd}{MntnNbrFrd}
	
	\Procedure{\MntnNbrFrd{$\mathfrak{r},\mathcal{S}_{son}$}}{
		\ForEach{$\mathfrak{r}'\in kNbr(\mathfrak{r})$}{\label{line:aknn:iterate-nbr-start}
			\DelFromFrd{$\mathfrak{r},\mathfrak{r}'$}\tcp*{Delete $\mathfrak{r}$ from $kFrd(\mathfrak{r}')$}\label{line:aknn:del-from-frd}
			\ForEach{$\mathfrak{r}_i\in \mathcal{S}_{son}$}{
				\AddInNbr($\mathfrak{r}',\mathfrak{r}_i$)\tcp*{Add $\mathfrak{r}'$ into $kNbr(\mathfrak{r}_i)$}
			}
		}\label{line:aknn:iterate-nbr-end}
		
		\ForEach{$\mathfrak{r}'\in kFrd(\mathfrak{r})$}{\label{line:aknn:iterate-frd-start}
			\DelFromNbr{$\mathfrak{r},\mathfrak{r}'$}\tcp*{Delete $\mathfrak{r}$ from $kNbr(\mathfrak{r}')$}\label{line:aknn:del-from-nbr}
			\ForEach{$\mathfrak{r}_i\in \mathcal{S}_{son}$}{
				\AddInNbr($\mathfrak{r}_i,\mathfrak{r}'$)\tcp*{Add $\mathfrak{r}_i$ into $kNbr(\mathfrak{r}')$}
			}
		}\label{line:aknn:iterate-frd-end}
		
		\ForEach{$\mathfrak{r}_i\in\mathcal{S}_{son}$}{\label{line:aknn:iterate-son-start}
			\For{$\mathfrak{r}_j\in \mathcal{S}_{son}\setminus\{\mathfrak{r}_i \}$}{
				\AddInNbr{$\mathfrak{r}_i,\mathfrak{r}_j$}\tcp*{Add $\mathfrak{r}_i$ into $kNbr(\mathfrak{r}_j)$}
				
			}
		}\label{line:aknn:iterate-son-end}

		\ForEach{$\mathfrak{r}'\in \mathcal{S}_{son}\cup kFrd(\mathfrak{r})$}{\label{line:aknn:truncate-start}
			\If{$|\mathfrak{r}'|=1$}{
				\TruncateNbr{$\mathfrak{r}'$}\;
			}
			
		}\label{line:aknn:truncate-end}
	}

\end{algorithm}

Algorithm \ref{alg:sub-proc} shows the functionalities of the sub-procedures  mentioned in Algorithm \ref{alg:new-mntn-nbrfrd}, including $DelFromNbr$, $DelFromFrd$, $AddInNbr$ and $TruncateNbr$. 
Among them, the $AddInNbr$ sub-procedure needs to be explained in more details. 
It can be seen that an extra set $Cand(\mathfrak{r})$ is used in the special case of $|\mathfrak{r}|=1$, which stores the candidate rectangles that might contain the k-NN of the only point in $\mathfrak{r}$. The $Cand$ set is formally defined as follows.

\begin{definition}\label{def:cand-set}
	The $Cand(\mathfrak{r})$ set maintained for $\mathfrak{r}$ where $|\mathfrak{r}|=1$, is a set satisfies:
	\begin{enumerate}
		\item $Cand(\mathfrak{r})\subseteq kNbr(\mathfrak{r})$,
		\item $|Cand(\mathfrak{r})|=k$, and
		\item $Qlty(\mathfrak{r}',\mathfrak{r}) \le Qlty(\mathfrak{r}'',\mathfrak{r})$ for $\forall \mathfrak{r}'\in Cand(\mathfrak{r})$ and $\forall \mathfrak{r}''\in kNbr(\mathfrak{r})$.
	\end{enumerate}
\end{definition}

Recall that $kThres(\mathfrak{r})$ is defined to be $	\min\limits_{k}\{\mathfrak{r}'\in H\setminus\{\mathfrak{r}\} \mid Qlty(\mathfrak{r}',\mathfrak{r})\}$ when $|\mathfrak{r}|<k+1$. And then with $Cand(\mathfrak{r})$ defined, $kThres(\mathfrak{r})$ can be equivalently defined as $\max\limits_{\mathfrak{r}'\in Cand(\mathfrak{r}) }\{ Qlty(\mathfrak{r}',\mathfrak{r}) \}$, as is shown at Line \ref{line:aknn:kthres-another-def} in Algorithm \ref{alg:sub-proc}.

\begin{algorithm}[H]
	\caption{The sub-procedures}\label{alg:sub-proc}
	\SetKw{Continue}{continue}
	\SetKw{Break}{break}
	\SetKwFunction{DelFromFrd}{DelFromFrd}
	\SetKwFunction{DelFromNbr}{DelFromNbr}
	\SetKwFunction{AddInNbr}{AddInNbr}
	\SetKwFunction{TruncateNbr}{TruncateNbr}
	\SetKwProg{Procedure}{Procedure}{\string:}{end}
	
	\Procedure{\DelFromNbr{$\mathfrak{r}',\mathfrak{r}$}}{
		Delete $\mathfrak{r}'$ from $kNbr(\mathfrak{r})$\;
	}
	\Procedure{\DelFromFrd{$\mathfrak{r}',\mathfrak{r}$}}{
		Delete $\mathfrak{r}'$ from $kFrd(\mathfrak{r})$\;
	}
	\Procedure{\AddInNbr{$\mathfrak{r}',\mathfrak{r}$}}{
		\If{$|\mathfrak{r}|=1$}{
			\uIf{$|Cand(\mathfrak{r})|<k$}{
				Add $\mathfrak{r}'$ into $Cand(\mathfrak{r})$\;
			}\Else{
				\If{$Qlty(\mathfrak{r}',\mathfrak{r})<kThres(\mathfrak{r})$}{
					Add $\mathfrak{r}'$ into $Cand(\mathfrak{r})$\;
					Pop the top element out of $Cand(\mathfrak{r})$\;
					$kThres(\mathfrak{r})\gets \max\limits_{\mathfrak{r}''\in Cand(\mathfrak{r}) }\{ Qlty(\mathfrak{r}'',\mathfrak{r}) \}$\;\label{line:aknn:kthres-another-def}
				}
			}
			
		}\
		\If{$D(c_{\mathfrak{r}'},c_{\mathfrak{r}}) <r_{\mathfrak{r}'}+r_{\mathfrak{r}}+kThres(\mathfrak{r})$}{
			Add $\mathfrak{r}'$ into $kNbr(\mathfrak{r})$\;
			Add $\mathfrak{r}$ into $kFrd(\mathfrak{r}')$\;
		}
		
	}

	\Procedure{\TruncateNbr{$\mathfrak{r}$}}{
		Find the first element $\mathfrak{r}'\in kNbr(\mathfrak{r})$ such that $D(c_{\mathfrak{r}'},c_{\mathfrak{r}})-r_{\mathfrak{r}'}  >r_{\mathfrak{r}}+kThres(\mathfrak{r})$\;
		Truncate $kNbr(\mathfrak{r})$ by deleting all elements from $\mathfrak{r}'$ to the last one\;
	}
\end{algorithm}

Last but not least, these sub-procedures above require the $kNbr$, $kFrd$ and $Cand$ sets to be stored in specific data structures. The requirements and the data structures are discussed below.
\begin{enumerate}
	\item The data structure for the $kNbr$ sets should support insertion and deletion both efficiently, and it should be an ordered set to support the $TruncateNbr$ procedure. It is possible to use a B+ tree to support such operations efficiently. It is known that insertion and deletion of B+ tree takes $O(\log{n})$ time. For the operation of truncating, it is also possible to delete a section of elements in B+ tree in $O(\log{n})$ amortized time, regardless of the number of elements deleted.
	Besides, the $TruncateNbr$ sub-procedure requires the elements $\mathfrak{r}'\in kNbr(\mathfrak{r})$ being ordered by the value of $D(c_{\mathfrak{r}}',c_{\mathfrak{r}})-r_{\mathfrak{r}'}$. Because only such ordering can ensure the truncating operation would delete each rectangle $\mathfrak{r}'$ that satisfies $D(c_{\mathfrak{r}'},c_{\mathfrak{r}})>r_{\mathfrak{r}'}+r_{\mathfrak{r}}+kThres(\mathfrak{r})$ and keep the $kNbr(\mathfrak{r})$ set conforming to its definition.
	\item The operations on $kFrd$ sets include insertion and deletion by specific key. It is appropriate to implement it by red-black trees, which takes $O(\log{n})$ time for both insertion and deletion. There is no specific requirement on how to order the elements $\mathfrak{r}'\in kFrd(\mathfrak{r})$, and an natural choice is to use $r_{\mathfrak{r}'}$ as the key to order them.
	\item  For the $Cand$ set, the operations on it include pushing, popping, and finding the maximum value. Thus it can be implemented by a heap, and the elements $\mathfrak{r}'\in Cand(\mathfrak{r})$ should be ordered by $Qlty(\mathfrak{r}',\mathfrak{r})$.
\end{enumerate}

The correctness and complexities of the above proposed algorithms will be proved in the next section.

\section{Analysis}\label{sec:analyz}

\subsection{Correctness}

\begin{lemma}\label{lema:nbr-frd-subset-heap}
	Each time when the $while$ loop begins, Algorithm \ref{alg:all-knn} ensures the following two invariants:
	\begin{enumerate}
		\item $kNbr(\mathfrak{r})\subseteq H$ and $kFrd(\mathfrak{r})\subseteq H$ for $\forall \mathfrak{r}\in H$, and 
		\item if $\mathfrak{r}'\in kNbr(\mathfrak{r})$, then $\mathfrak{r}\in kFrd(\mathfrak{r}')$, for $\forall \mathfrak{r},\mathfrak{r}'\in H$.
	\end{enumerate}
\end{lemma}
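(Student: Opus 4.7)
The plan is to prove both invariants by induction on the number of iterations of the while loop in Algorithm \ref{alg:all-knn}. The base case is immediate: at the very start $H = \{root(T)\}$ and $kNbr(root(T)) = kFrd(root(T)) = \emptyset$, so both invariants hold vacuously. For the inductive step I will fix an iteration with popped rectangle $\mathfrak{r}$, child set $\mathcal{S}_{son}$, and updated heap $H' = (H \setminus \{\mathfrak{r}\}) \cup \mathcal{S}_{son}$, assume both invariants hold at its beginning, and show they still hold once the four steps of the MntnNbrFrd procedure complete.

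The first step is to establish a preliminary observation that drives the whole argument: the popped rectangle $\mathfrak{r}$ is never a singleton, because the heap is keyed by approximate MES radius in descending order, singletons have radius zero, and the while loop halts as soon as $|H|=|P|$ (equivalently, when $H$ consists solely of singletons). Since TruncateNbr in Step~4 fires only on singletons, $kNbr(\mathfrak{r})$ has never been truncated. The only other operation that can remove an element from a $kNbr$ set is DelFromNbr, which deletes an element from a neighbour list solely while that element is itself being popped. Consequently, for every $\mathfrak{r}^* \in H \setminus \{\mathfrak{r}\}$, the symmetric pair produced by the historical AddInNbr call that inserted $\mathfrak{r}$ into $kFrd(\mathfrak{r}^*)$ is still intact, so $\mathfrak{r}^* \in kNbr(\mathfrak{r})$ at the start of the current iteration.

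With that preliminary in hand, invariant~1 will reduce to two facts. First, inspection of AddInNbr shows that every insertion during MntnNbrFrd introduces only an element of $\mathcal{S}_{son} \subseteq H'$ into some $kNbr$ or $kFrd$ set, so nothing outside $H'$ is ever added. Second, I need to argue that $\mathfrak{r}$ is scrubbed from $kNbr(\mathfrak{r}^*)$ and $kFrd(\mathfrak{r}^*)$ for every $\mathfrak{r}^* \in H'$: the inductive form of invariant~2 gives $\mathfrak{r} \in kNbr(\mathfrak{r}^*) \Rightarrow \mathfrak{r}^* \in kFrd(\mathfrak{r})$ so Step~2's DelFromNbr reaches $\mathfrak{r}^*$, while the preliminary observation gives $\mathfrak{r} \in kFrd(\mathfrak{r}^*) \Rightarrow \mathfrak{r}^* \in kNbr(\mathfrak{r})$ so Step~1's DelFromFrd reaches $\mathfrak{r}^*$. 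Members of $\mathcal{S}_{son}$ start with empty sets by the RST construction, so no stale reference to $\mathfrak{r}$ can appear in them either.

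For invariant~2 I will split cases on the provenance of a generic pair $\mathfrak{r}_a, \mathfrak{r}_b \in H'$. Pairs inherited unchanged from $H \setminus \{\mathfrak{r}\}$ satisfy the implication immediately by the inductive hypothesis, since no existing $kFrd$ set loses anything in this iteration except possibly $\mathfrak{r}$ itself. Any newly asserted membership $\mathfrak{r}_a \in kNbr(\mathfrak{r}_b)$ must come from an AddInNbr call, which by construction simultaneously inserts $\mathfrak{r}_b$ into $kFrd(\mathfrak{r}_a)$. The TruncateNbr calls in Step~4 only delete elements from some $kNbr$ set, which merely weakens the hypothesis of the implication and therefore preserves it. The main obstacle will be justifying the preliminary observation rigorously: without it, one cannot guarantee that Step~1's iteration over $kNbr(\mathfrak{r})$ reaches every $\mathfrak{r}^*$ with $\mathfrak{r} \in kFrd(\mathfrak{r}^*)$, which is what prevents stale references to the popped $\mathfrak{r}$ from surviving into the next iteration and breaking invariant~1.
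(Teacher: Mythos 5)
Your proof is correct and follows the same inductive skeleton as the paper's own proof (base case with $H=\{root(T)\}$ and empty sets, then an analysis of the effect of one iteration). The substantive difference is your preliminary observation. The paper's proof simply asserts that popping $\mathfrak{r}$ together with the deletions at Lines \ref{line:aknn:del-from-frd} and \ref{line:aknn:del-from-nbr} "does not violate the first statement", leaving implicit the fact that those two loops actually reach \emph{every} rectangle still holding a reference to the popped $\mathfrak{r}$. You correctly notice that Step 2's sweep over $kFrd(\mathfrak{r})$ is covered by invariant 2 of the induction hypothesis, whereas Step 1's sweep over $kNbr(\mathfrak{r})$ needs the \emph{converse} implication $\mathfrak{r}\in kFrd(\mathfrak{r}^*)\Rightarrow\mathfrak{r}^*\in kNbr(\mathfrak{r})$, which is not part of the stated invariant; you supply it by arguing that the popped rectangle is never a singleton (so $TruncateNbr$ never touched $kNbr(\mathfrak{r})$) and that $DelFromNbr$ removes only the currently popped element. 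This makes your argument strictly more careful than the paper's on the one point that genuinely needs care; an equivalent and perhaps cleaner route would be to strengthen invariant 2 to a biconditional for non-singleton members of $H$ and carry that through the induction. One small inaccuracy to fix: it is not true that "every insertion during $MntnNbrFrd$ introduces only an element of $\mathcal{S}_{son}$" --- Step 1 inserts rectangles $\mathfrak{r}'\in kNbr(\mathfrak{r})$ into $kNbr(\mathfrak{r}_i)$ for the new children $\mathfrak{r}_i$. Your conclusion still holds, because such an $\mathfrak{r}'$ lies in $H\setminus\{\mathfrak{r}\}\subseteq H'$ by the induction hypothesis (this "inherited" case is the one the paper's proof does treat explicitly), but the justification should invoke invariant 1 of the induction hypothesis rather than only $\mathcal{S}_{son}\subseteq H'$.
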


\begin{proof}
	We prove the two statements by induction on the $while$ loop.
	
	Before the work-flow goes into the $while$ loop, $H$ contains only the root of $T$, and $kNbr(root(T))=kFrd(root(T))=\emptyset$. Then it can be easily verified that the two statements in the lemma are true at this time.
	
	As induction hypothesis, suppose that the two statements are true before the $i$-th execution of the $while$ loop starts.
	The $while$ loop body only changes the $kNbr(\mathfrak{r}')$ and $kFrd(\mathfrak{r}')$ sets of $\mathfrak{r}'\in kNbr(\mathfrak{r})\cup kFrd(\mathfrak{r})\cup \mathcal{S}_{son}$, where $\mathfrak{r}$ is the top element of $H$. First we consider the rectangles $\mathfrak{r}'\in kNbr(\mathfrak{r})\cup kFrd(\mathfrak{r})$ .
	It can be seen that $\mathfrak{r}$ is popped out of $H$ (Line \ref{line:aknn:pop}), and $\mathfrak{r}$ is deleted from $kNbr(\mathfrak{r}')$ and $kFrd(\mathfrak{r}')$ (Line \ref{line:aknn:del-from-frd} and \ref{line:aknn:del-from-nbr}). This operation does not violate the first statement. On the other and, the boxes in $\mathcal{S}_{son}$ are pushed into $H$, so adding $\mathfrak{r}_i\in \mathcal{S}_{son}$ into $kNbr(\mathfrak{r}')$ or $kFrd(\mathfrak{r}')$ does not violate the first statement. Thus the first statement holds for $\forall \mathfrak{r}'\in kNbr(\mathfrak{r})\cup kFrd(\mathfrak{r})$. Then for $\forall \mathfrak{r}_i\in \mathcal{S}_{son}$, the rectangles in $kNbr(\mathfrak{r}_i)\cup kFrd(\mathfrak{r}_i)$ are either inherited from $kNbr(\mathfrak{r})\cup kFrd(\mathfrak{r})$, which are subsets of $H$ according to the induction hypothesis, or other rectangles in $\mathcal{S}_{son}$ that are added into $H$ in this $while$ loop. Thus the first statement holds for $\mathfrak{r}_i\in \mathcal{S}_{son}$. Since the execution of the while loop does not influence the $kNbr$ and $kFrd$ sets of other rectangles, we arrive at the conclusion that the first statement holds for $\forall \mathfrak{r}\in H$ before the next $while$ loop starts.
	
	The second statement is ensured by the $AddInNbr$ and $TruncateNbr$ sub-procedures. It can be seen that adding (deleting) $\mathfrak{r}$ into $kNbr(\mathfrak{r}')$ always appears along with adding (or deleting) $\mathfrak{r}'$ into $kFrd(\mathfrak{r})$. Thus the second statement can be easily proved to be true.
	\qed
\end{proof}

\begin{lemma}\label{lema:cand-is-cand}
	The $kNbr(\mathfrak{r})$, $kFrd(\mathfrak{r})$ and $Cand(\mathfrak{r})$ sets constructed in Algorithm \ref{alg:sub-proc} meets their definition which are given in Definition \ref{def:knbr-set}, \ref{def:kfrd-set} and \ref{def:cand-set}.
\end{lemma}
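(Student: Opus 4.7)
The plan is to proceed by induction on the iterations of the while loop in Algorithm \ref{alg:all-knn}, leveraging the two invariants already established in Lemma \ref{lema:nbr-frd-subset-heap}. The base case is trivial: before the loop begins, $H$ contains only $root(T)$ with $kNbr(root(T)) = kFrd(root(T)) = \emptyset$ and no $Cand$ set is yet required, so all three definitions hold vacuously. The inductive hypothesis is that all three definitions are satisfied for every element of $H$ immediately before the $i$-th iteration starts; I will then show the same holds at the top of the $(i+1)$-th iteration, after one invocation of MntnNbrFrd on the popped rectangle $\mathfrak{r}$ and its successor set $\mathcal{S}_{son}$.

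First I would dispatch $kFrd$, which is the easiest case: the second invariant of Lemma \ref{lema:nbr-frd-subset-heap} states that $\mathfrak{r} \in kFrd(\mathfrak{r}')$ if and only if $\mathfrak{r}' \in kNbr(\mathfrak{r})$, which is exactly Definition \ref{def:kfrd-set}. So once $kNbr$ is shown to match its definition and the symmetric invariant is maintained, $kFrd$ comes for free. Next I would handle $Cand$. The AddInNbr procedure is invoked every time a rectangle could potentially enter $kNbr(\mathfrak{r})$, and it explicitly maintains $Cand(\mathfrak{r})$ as the up-to-$k$ rectangles of smallest $Qlty$: it pushes new entries until $|Cand| = k$, and thereafter replaces the current worst whenever a strictly better candidate arrives. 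The equation $kThres(\mathfrak{r}) = \max_{\mathfrak{r}' \in Cand(\mathfrak{r})} Qlty(\mathfrak{r}',\mathfrak{r})$ is refreshed at Line \ref{line:aknn:kthres-another-def} after each replacement, so $Cand$ always captures the $k$-smallest $Qlty$ values among the rectangles that AddInNbr has seen for $\mathfrak{r}$. What remains for $Cand$ reduces to the claim below for $kNbr$, namely that AddInNbr has actually been offered every element of $H \setminus \{\mathfrak{r}\}$ at some point in the execution.

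The core of the proof is the $kNbr$ case, where two inclusions must be verified. The forward direction---that every $\mathfrak{r}'$ stored in $kNbr(\mathfrak{r})$ satisfies $D(c_{\mathfrak{r}'}, c_{\mathfrak{r}}) \le r_{\mathfrak{r}} + r_{\mathfrak{r}'} + kThres(\mathfrak{r})$---follows from the guard inside AddInNbr together with TruncateNbr, which is fired in Step~4 of Algorithm \ref{alg:new-mntn-nbrfrd} precisely on those rectangles whose $kThres$ could have tightened (i.e.\ those with $|\mathfrak{r}'|=1$ whose $Cand$ set has just been updated). The reverse direction is where the main obstacle lies: whenever a fresh $\mathfrak{r}_i \in \mathcal{S}_{son}$ is inserted into $H$, every pre-existing $\mathfrak{r}'' \in H$ that satisfies the $kNbr$ condition with respect to $\mathfrak{r}_i$ must be caught by one of the AddInNbr calls in Steps~1--3 of Algorithm \ref{alg:new-mntn-nbrfrd}, and no others.

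The key geometric fact needed to close this gap is that $\mathfrak{r}_i$ bounds a subset of the point set bounded by $\mathfrak{r}$, so the $\tfrac{3}{2}$-MEB ball $(c_{\mathfrak{r}_i}, r_{\mathfrak{r}_i})$ lies within a controlled neighborhood of $(c_{\mathfrak{r}}, r_{\mathfrak{r}})$; coupled with a monotonicity comparison of $kThres(\mathfrak{r}_i)$ against $kThres(\mathfrak{r})$, this would force any $\mathfrak{r}''$ close enough to $\mathfrak{r}_i$ to have been close enough to $\mathfrak{r}$ under the earlier threshold, hence to belong to the pre-iteration $kNbr(\mathfrak{r})$, unless $\mathfrak{r}'' \in \mathcal{S}_{son}$ itself. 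Rectangles of the former kind are visited in Step~1, and rectangles of the latter kind in Step~3. A symmetric argument handles the case in which $\mathfrak{r}_i$ newly enters some $\mathfrak{r}''$'s $kNbr$ set: those $\mathfrak{r}''$ previously held $\mathfrak{r}$ in their $kNbr$, so they lie in $kFrd(\mathfrak{r})$ by Lemma \ref{lema:nbr-frd-subset-heap}, and are therefore iterated in Step~2. I expect the hardest part to be nailing down the precise blow-up relating $kThres(\mathfrak{r}_i)$ to $kThres(\mathfrak{r})$: this is where the specific algebraic form of $Qlty$ in Definition \ref{def:qlty} and the MEB-containment property must be combined, but a short geometric calculation should settle it.
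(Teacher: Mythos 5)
The first thing to note is that the paper gives no proof of this lemma at all: its entire ``proof'' is the remark that the claim ``can be easily verified and thus the proof is omitted.'' So there is nothing to compare your argument against, and your proposal already does strictly more than the paper. Your architecture --- induction over the $while$ loop, discharging $kFrd$ via the symmetry invariant of Lemma \ref{lema:nbr-frd-subset-heap}, and reducing the $Cand$ claim to the completeness of the stream of rectangles offered to $AddInNbr$ --- is sensible and consistent with how the paper proves its other invariants.

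The difficulty is that you defer the only genuinely hard step and assert that ``a short geometric calculation should settle it,'' when with the constants actually available it does not. The reverse inclusion for $kNbr$ needs: if $\mathfrak{r}''\in H$ satisfies $D(c_{\mathfrak{r}''},c_{\mathfrak{r}_i})\le r_{\mathfrak{r}_i}+r_{\mathfrak{r}''}+kThres(\mathfrak{r}_i)$ for a new child $\mathfrak{r}_i$, then $\mathfrak{r}''$ already lay in $kNbr(\mathfrak{r})\cup kFrd(\mathfrak{r})\cup\mathcal{S}_{son}$, since those are the only rectangles Steps 1--3 ever offer to $AddInNbr$. For non-leaf $\mathfrak{r}_i$ this would require $D(c_{\mathfrak{r}''},c_{\mathfrak{r}})\le 3r_{\mathfrak{r}}+r_{\mathfrak{r}''}$, but the triangle inequality only gives $D(c_{\mathfrak{r}''},c_{\mathfrak{r}})\le 3r_{\mathfrak{r}_i}+r_{\mathfrak{r}''}+D(c_{\mathfrak{r}_i},c_{\mathfrak{r}})$, and the $\frac{3}{2}$-MEB guarantee only yields $r_{\mathfrak{r}_i}\le\frac{3}{2}\mathcal{R}_{\mathfrak{r}_i}\le\frac{3}{2}r_{\mathfrak{r}}$ and $D(c_{\mathfrak{r}_i},c_{\mathfrak{r}})\le r_{\mathfrak{r}}$, for a total of $\frac{11}{2}r_{\mathfrak{r}}+r_{\mathfrak{r}''}$, which does not fit under $3r_{\mathfrak{r}}+r_{\mathfrak{r}''}$. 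So the monotonicity you invoke (``close enough to $\mathfrak{r}_i$ implies close enough to $\mathfrak{r}$ under the earlier threshold'') fails for the stated constants, and the leaf case is worse: when $\mathfrak{r}_i$ switches from the $2r_{\mathfrak{r}}$ branch to the $Qlty$ branch of Equation \ref{eqtn:kthres}, $kThres(\mathfrak{r}_i)$ is not bounded by $kThres(\mathfrak{r})$ in any evident way. Until a containment of the form ``the set prescribed by Definition \ref{def:knbr-set} for $\mathfrak{r}_i$ is a subset of the rectangles actually offered'' is proved, your induction does not close; that containment is the entire substantive content of the lemma, and it is also an unaddressed gap in the paper itself.
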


\begin{proof}
	We write this as a lemma to ensure the rigorousness of the whole proof. Actually it can be easily verified and thus the proof is omitted.
\end{proof}

\begin{lemma}[\cite{Badoiu2003}]\label{lema:exists-point-pythagorean}
	Given a point set $P\subseteq R^d$ whose MEB is $B(\mathcal{C}_P,\mathcal{R}_P)$, then for any point $q\in R^d$, there exists a point $p\in P$ such that $D(q,p)\le \sqrt{D(q,\mathcal{C}_P)^2+\mathcal{R}_P^2}$.
\end{lemma}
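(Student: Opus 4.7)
The plan is to use the classical convex-hull characterization of the MEB center: there exist boundary points $p_1,\ldots,p_m\in P$ with $D(p_i,\mathcal{C}_P)=\mathcal{R}_P$ together with non-negative weights $\lambda_1,\ldots,\lambda_m$ summing to one such that $\mathcal{C}_P=\sum_{i}\lambda_i p_i$. This is a standard fact in the approximate-MEB literature (it underlies the analyses in \cite{Badoiu2003} and \cite{Zarrabi-Zadeh2006}), and it follows either from KKT / strong duality applied to the convex quadratic program that defines the MEB, or from a direct contradiction: if $\mathcal{C}_P$ lay strictly outside the convex hull of the active boundary points, then moving the center slightly toward that hull would simultaneously shrink the distances to all active points and leave the interior points still inside, contradicting the minimality of $\mathcal{R}_P$.

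Granted this characterization, I would compute the weighted average $\sum_{i}\lambda_i D(q,p_i)^2$ by writing $q-p_i=(q-\mathcal{C}_P)+(\mathcal{C}_P-p_i)$ and expanding the squared norm term by term. The cross term $-2(q-\mathcal{C}_P)\cdot\sum_i\lambda_i(p_i-\mathcal{C}_P)$ vanishes because $\sum_i\lambda_i p_i=\mathcal{C}_P$ and $\sum_i\lambda_i=1$; the two remaining pieces contribute $D(q,\mathcal{C}_P)^2$ (since the $\lambda_i$ sum to $1$) and $\mathcal{R}_P^2$ (since every $p_i$ lies on the MEB boundary). Hence $\sum_i\lambda_i D(q,p_i)^2=D(q,\mathcal{C}_P)^2+\mathcal{R}_P^2$, and because a convex combination of non-negative reals is at least as large as its smallest entry, there must be an index $i^\star$ with $D(q,p_{i^\star})^2\le D(q,\mathcal{C}_P)^2+\mathcal{R}_P^2$. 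Taking the square root and setting $p:=p_{i^\star}\in P$ yields the desired inequality.

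The only delicate ingredient is the convex-hull characterization of $\mathcal{C}_P$; once it is in hand, the rest is a one-line Pythagorean-style expansion with no geometric subtlety. If one wished to avoid invoking that characterization as a black box, a direct perturbation argument on the statement works as well: if every $p\in P$ violated the claimed bound, then in particular every boundary point $p$ of the MEB would satisfy $(q-\mathcal{C}_P)\cdot(p-\mathcal{C}_P)<0$, so shifting $\mathcal{C}_P$ an infinitesimal step toward $q$ would strictly decrease the distance to all boundary points while keeping interior points strictly inside, contradicting the minimality of $\mathcal{R}_P$. Either route delivers the lemma in a few lines.
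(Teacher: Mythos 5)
Your proposal is correct, and it does more than the paper does: the paper's ``proof'' of this lemma is literally the single sentence ``The proof can be found in \cite{Badoiu2003},'' so there is no in-paper argument to compare against. Your averaging argument is the standard way this inequality is established in the core-set literature and all of its steps check out: the convex-hull characterization $\mathcal{C}_P=\sum_i\lambda_i p_i$ over boundary points $p_i$ with $D(p_i,\mathcal{C}_P)=\mathcal{R}_P$ is the classical optimality condition for the MEB (your separation/perturbation sketch for it is the right one); the cross term in $\sum_i\lambda_i D(q,p_i)^2$ vanishes exactly because $\sum_i\lambda_i(p_i-\mathcal{C}_P)=0$; and the identity $\sum_i\lambda_i D(q,p_i)^2=D(q,\mathcal{C}_P)^2+\mathcal{R}_P^2$ forces some $p_{i^\star}$ to achieve at most the average. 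Your alternative route via the half-space property (every closed half-space through $\mathcal{C}_P$ contains a boundary point of the MEB, applied to the half-space facing $q$) is equally valid and is in fact the form in which \cite{Badoiu2003} states the result; the only degenerate case, $q=\mathcal{C}_P$, is trivial since any boundary point then satisfies $D(q,p)=\mathcal{R}_P$. Either version would serve as a self-contained replacement for the paper's bare citation.
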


\begin{proof}
	The proof can be found in \cite{Badoiu2003}.
\end{proof}

\begin{lemma}\label{lema:exists-point-qlty}
	Given a point set $P\subseteq R^d$ along with its $\frac{3}{2}$-MEB $B(c_P,r_P)$, then for any point $q\in R^d$, there exists a point $p\in P$ such that $D(q,p)\le \sqrt{(D(q,c_P)+\frac{1}{2}r_P)^2+r_P^2}$.
\end{lemma}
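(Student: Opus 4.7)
The plan is to reduce this approximate-MEB statement to Lemma~\ref{lema:exists-point-pythagorean}, applied to the exact MEB $B(\mathcal{C}_P,\mathcal{R}_P)$ of $P$. That lemma immediately produces a point $p\in P$ with $D(q,p)\le\sqrt{D(q,\mathcal{C}_P)^2+\mathcal{R}_P^2}$, so everything else is about replacing the exact quantities $\mathcal{C}_P,\mathcal{R}_P$ by the approximate ones $c_P,r_P$ while only loosening the bound.

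First I would dispatch the radius: since $B(c_P,r_P)$ encloses $P$ and $\mathcal{R}_P$ is by definition the radius of the \emph{smallest} enclosing ball of $P$, one obtains $\mathcal{R}_P\le r_P$, and hence $\mathcal{R}_P^2\le r_P^2$, which slots directly into the second summand of the target expression.

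Next I would handle the distance to the center by the triangle inequality $D(q,\mathcal{C}_P)\le D(q,c_P)+D(c_P,\mathcal{C}_P)$, so the remaining sub-claim is $D(c_P,\mathcal{C}_P)\le \tfrac{1}{2}r_P$. This is the main obstacle. The easy observation that $\mathcal{C}_P$ lies in the convex hull of $P\subseteq B(c_P,r_P)$ only yields the weaker bound $D(c_P,\mathcal{C}_P)\le r_P$, so the factor $\tfrac{1}{2}$ must come from the $\tfrac{3}{2}$-approximation hypothesis $r_P\le \tfrac{3}{2}\mathcal{R}_P$. My approach would be to argue that if $D(c_P,\mathcal{C}_P)$ exceeded $\tfrac{1}{2}r_P$, then the lens-shaped intersection $B(c_P,r_P)\cap B(\mathcal{C}_P,\mathcal{R}_P)$, which still contains all of $P$, could be covered by a ball centered near the midpoint of the segment $c_P\mathcal{C}_P$ with radius strictly below $\mathcal{R}_P$, contradicting the minimality of the MEB. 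Should that general lens-geometry estimate refuse to close tightly, the fallback would be to lean on the structural information supplied by Algorithm~\ref{alg:approx-meb}, whose output $c_P$ is inductively placed on the segment between the previous center and each newly absorbed boundary point, so the accumulated displacement of $c_P$ from $\mathcal{C}_P$ can be controlled by the growth of $r_P$.

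Finally, combining the two bounds and using monotonicity of the square root yields $D(q,p)\le\sqrt{D(q,\mathcal{C}_P)^2+\mathcal{R}_P^2}\le\sqrt{(D(q,c_P)+\tfrac{1}{2}r_P)^2+r_P^2}$, which is exactly the inequality claimed in the lemma.
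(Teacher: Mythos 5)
Your overall route is the same as the paper's: invoke Lemma~\ref{lema:exists-point-pythagorean} for the exact MEB, use $\mathcal{R}_P\le r_P$ to handle the second summand, and reduce the first summand via the triangle inequality to the sub-claim $D(c_P,\mathcal{C}_P)\le\frac{1}{2}r_P$. The gap is that you only sketch a plan for that sub-claim, and neither your lens-geometry argument nor any other argument can close it, because under the paper's definition of a $\frac{3}{2}$-MEB (any enclosing ball with $r_P\le\frac{3}{2}\mathcal{R}_P$) the sub-claim is \emph{false}. The sharp general bound, obtained from the halfspace property of MEB centers (every closed halfspace through $\mathcal{C}_P$ contains a point of $P$ at distance $\mathcal{R}_P$ from $\mathcal{C}_P$), is $D(c_P,\mathcal{C}_P)\le\sqrt{r_P^2-\mathcal{R}_P^2}$, which with $\mathcal{R}_P\ge\frac{2}{3}r_P$ yields only $D(c_P,\mathcal{C}_P)\le\frac{\sqrt{5}}{3}r_P\approx 0.745\,r_P$, and this is attained: take $P=\{(\sqrt{5}/3,\,2/3),(\sqrt{5}/3,\,-2/3)\}$, so that $\mathcal{C}_P=(\sqrt{5}/3,0)$ and $\mathcal{R}_P=2/3$, while $B((0,0),1)$ is a legitimate $\frac{3}{2}$-MEB with $D(c_P,\mathcal{C}_P)=\sqrt{5}/3>\frac{1}{2}r_P$. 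Your lens argument would need the lens $B(c_P,r_P)\cap B(\mathcal{C}_P,\mathcal{R}_P)$ to fit in a ball of radius strictly below $\mathcal{R}_P$ whenever $D(c_P,\mathcal{C}_P)>\frac{1}{2}r_P$; in this example both points of $P$ lie on the intersection sphere of the two balls at mutual distance $2\mathcal{R}_P$, so the lens still requires radius exactly $\mathcal{R}_P$ and no contradiction arises.

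In fact the same configuration defeats the lemma itself, not merely the intermediate step: with $q=(-1,0)$ every point of $P$ is at distance $\sqrt{(1+\sqrt{5}/3)^2+4/9}\approx 1.868$ from $q$, whereas the claimed bound is $\sqrt{(1+\frac{1}{2})^2+1}=\sqrt{13}/2\approx 1.803$. So you should not fault yourself for failing to reproduce the paper's argument: the paper derives $D(c_P,\mathcal{C}_P)\le\frac{1}{2}r_P$ from the chain $D(c_P,\mathcal{C}_P)\le D(c_P,p_0)-D(p_0,\mathcal{C}_P)$, which is the reverse triangle inequality written with the wrong sign. Your instinct that the factor $\frac{1}{2}$ is the main obstacle was exactly right. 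The honest repairs are either to weaken the lemma's constant (replace $\frac{1}{2}r_P$ by $\sqrt{r_P^2-\mathcal{R}_P^2}$ and propagate the change into $Qlty$ and the downstream size bounds), or to pursue your fallback and prove a genuine center-displacement bound for the specific ball produced by Algorithm~\ref{alg:approx-meb} --- but that is an additional property of the algorithm's output, not a consequence of the definition of $\frac{3}{2}$-MEB, and it would have to be stated as an extra hypothesis of the lemma.
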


\begin{proof}
	Since $B(c_P,r_P)$ is the  $\frac{3}{2}$-MEB of $P$, we have $\mathcal{R}_P\le r_P\le \frac{3}{2}\mathcal{R}_P$, where $\mathcal{R}_P$ is the radius of the exact MEB of $P$. Pick a point $p_0$ such that $D(\mathcal{C}_P,p_0)=\mathcal{R}_P$ (such point definitely exists), then 
	$$D(c_P,\mathcal{C}_P)\le D(c_P,p_0)-D(p_0,\mathcal{C}_P)\le r_P-\mathcal{R}_P\le \frac{1}{2}\mathcal{R}_P\le \frac{1}{2}r_P.$$ 
	Thus, $$D(q,\mathcal{C}_P)\le D(q,c_P)+D(c_P,\mathcal{C}_P)\le D(q,c_P)+\frac{1}{2}r_P.$$
	Now just choose the point $p$ stated in Lemma \ref{lema:exists-point-pythagorean}, we have 
	$$D(q,p)\le \sqrt{D(q,\mathcal{C}_P)^2+\mathcal{R}_P^2}\le  \sqrt{(D(q,c_P)+\frac{1}{2}r_P)^2+r_P^2 }.$$
\end{proof}

\begin{lemma}\label{lema:tk-kthres}
	For any point $p\in P$, define $T_k(p)=\max\limits_{p'\in kNN(p,P\setminus\{p \} )}\{D(p,p') \}$. Then $T_k(p)\le kThres(\mathfrak{r})$ holds for any $p\in \mathfrak{r}$.
\end{lemma}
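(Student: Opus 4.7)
The plan is to split the argument into the two regimes of the piecewise definition of $kThres(\mathfrak{r})$ in Equation \ref{eqtn:kthres}. A preliminary observation is that, because Algorithm \ref{alg:all-knn} replaces every $\mathfrak{r}_i\in\mathcal{S}_{son}$ with $|\mathfrak{r}_i|<k+1$ by the leaves of its subtree (Lines \ref{line:aknn:determine-son-start}--\ref{line:aknn:determine-son-end}), every rectangle that is ever pushed into $H$ satisfies either $|\mathfrak{r}|\ge k+1$ or $|\mathfrak{r}|=1$, so the two cases of the piecewise definition correspond exactly to these two regimes.

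For the first branch, I would assume $|\mathfrak{r}|\ge k+1$, so $kThres(\mathfrak{r})=2r_{\mathfrak{r}}$. Since $p$ and every other point of $\mathfrak{r}$ lie inside the $\frac{3}{2}$-MEB $B(c_{\mathfrak{r}},r_{\mathfrak{r}})$ computed by Algorithm \ref{alg:compute-mes}, a single application of the triangle inequality gives $D(p,p')\le 2r_{\mathfrak{r}}$ for every $p'\in \mathfrak{r}\setminus\{p\}$. There are at least $k$ such $p'$, so at least $k$ points of $P\setminus\{p\}$ are within distance $2r_{\mathfrak{r}}$ of $p$, which forces $T_k(p)\le 2r_{\mathfrak{r}}=kThres(\mathfrak{r})$.

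For the second branch, $|\mathfrak{r}|=1$, so $p$ is the only point of $\mathfrak{r}$ and Algorithm \ref{alg:compute-mes} yields $c_{\mathfrak{r}}=p$, $r_{\mathfrak{r}}=0$. I would let $\mathfrak{r}_1,\ldots,\mathfrak{r}_k$ be the $k$ rectangles in $H\setminus\{\mathfrak{r}\}$ achieving the $k$ smallest values of $Qlty(\cdot,\mathfrak{r})$, so that $Qlty(\mathfrak{r}_i,\mathfrak{r})\le kThres(\mathfrak{r})$ for each $i$. Applying Lemma \ref{lema:exists-point-qlty} to the points contained in $\mathfrak{r}_i$, whose precomputed $\frac{3}{2}$-MEB is $B(c_{\mathfrak{r}_i},r_{\mathfrak{r}_i})$, with query point $p$, produces some $p_i\in \mathfrak{r}_i$ with $D(p,p_i)\le \sqrt{(D(p,c_{\mathfrak{r}_i})+\frac{1}{2}r_{\mathfrak{r}_i})^2+r_{\mathfrak{r}_i}^2}=Qlty(\mathfrak{r}_i,\mathfrak{r})\le kThres(\mathfrak{r})$, where the equality uses $c_{\mathfrak{r}}=p$. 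The $p_i$ are pairwise distinct and different from $p$ because the point-sets of the rectangles currently in $H$ form a partition of $P$. Consequently there are $k$ distinct points of $P\setminus\{p\}$ within distance $kThres(\mathfrak{r})$ of $p$, so $T_k(p)\le kThres(\mathfrak{r})$ follows immediately.

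The main obstacle is the partition invariant invoked at the end of the second case, since it is used implicitly throughout but never stated. I would justify it by a separate induction on the while loop of Algorithm \ref{alg:all-knn}: initially $H=\{root(T)\}$ which trivially partitions $P$, and each iteration replaces the popped rectangle $\mathfrak{r}$ either by its two RST children (whose point-sets partition $\mathfrak{r}$ by construction) or by the leaves of its subtree (whose point-sets also partition $\mathfrak{r}$). Once this invariant and Lemma \ref{lema:exists-point-qlty} are in hand the rest of the argument is a direct counting of the $k$ explicit witnesses.
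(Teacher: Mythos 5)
Your proof is correct and follows essentially the same route as the paper's: a case split on the two branches of Equation \ref{eqtn:kthres}, followed by a counting argument that exhibits $k$ witness points of $P\setminus\{p\}$ within distance $kThres(\mathfrak{r})$ of $p$, invoking Lemma \ref{lema:exists-point-qlty} in the singleton case. The only differences are cosmetic: the paper routes the leaf case through $kNbr(\mathfrak{r})$ and $Cand(\mathfrak{r})$ whereas you work directly from the $k$-th smallest $Qlty$ value over $H\setminus\{\mathfrak{r}\}$ (the two are equivalent, as the paper itself notes), and you make explicit the partition invariant on $H$ and the distinctness of the witnesses, which the paper leaves implicit.
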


\begin{proof}
	The proof considers the non-leaf nodes and leaf nodes separately.
	
	For a non-leaf node $\mathfrak{r}$, the number of points inside $\mathfrak{r}$ is at least $k+1$ since the rectangles containing fewer points are already deleted by Line \ref{line:aknn:deleting-smallers}. On the other side, $kThres(\mathfrak{r})$ is set to $2 r_{\mathfrak{r}}$ according to Equation \ref{eqtn:kthres}. Note that for any point $p\in \mathfrak{r}$, the distance from $p'\in \mathfrak{r}\setminus\{p\}$ to $p$ is at most $2\mathcal{R}_{\mathfrak{r}}\le 2r_{\mathfrak{r}}$. Then it can be deduced that there is at least $k+1$ points within the range of $kThres(\mathfrak{r})$ around $p$. Since there is exactly $k$ points within the range of $T_k(p)$ around $p$, we get the conclusion that $T_k(p)\le kThres(\mathfrak{r})$  for a non-leaf node $\mathfrak{r}$ and $\forall p\in \mathfrak{r}$.
	
	For a leaf node $\mathfrak{r}$, first there is only one point $c_{\mathfrak{r}}\in \mathfrak{r}$. According to Lemma \ref{lema:exists-point-qlty}, there exist a point $p'\in \mathfrak{r}'$ at a distance of at most $\sqrt{(D(c_{\mathfrak{r}'},c_{\mathfrak{r}})^2+\frac{1}{2}r_{\mathfrak{r}'})+r_{\mathfrak{r}'}^2}$ from $c_{\mathfrak{r}}$ for each rectangle $\mathfrak{r}'\in kNbr(\mathfrak{r})$. Note that this is exactly the $Qlty(\mathfrak{r}',\mathfrak{r})$ defined in Definition \ref{def:qlty}. According to Definition \ref{def:cand-set} for the $Cand(\mathfrak{r})$ set, it stores the $k$ rectangles with the smallest $Qlty(\mathfrak{r}',\mathfrak{r})$ value. 
	Equivalently, there exists at least $k$ points in $Cand(\mathfrak{r})$. On the other hand, there are exactly $k$ points within the range of $T_k(p)$ around $p$. Thus it is easy to see that $T_k(p)\le \max\limits_{\mathfrak{r}'}\{Qlty(\mathfrak{r}',\mathfrak{r})\}=kThres(\mathfrak{r})$, where $\mathfrak{r}$ is a leaf node and $p=c_{\mathfrak{r}}$ is the only point in $\mathfrak{r}$.
	\qed
\end{proof}

\begin{lemma}\label{lema:knn-subset-knbr}
	$kNN(p, P\setminus \{p\})\subseteq \{\mathfrak{r}\}\cup kNbr(\mathfrak{r})$ for any rectangle $\mathfrak{r}\in T$ and any point $p\in \mathfrak{r}$.
\end{lemma}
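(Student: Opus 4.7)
The plan is to establish the containment as a loop invariant of Algorithm \ref{alg:all-knn}, maintained by induction on the iterations of the main \texttt{while} loop and applied to every $\mathfrak{r}$ currently in the heap $H$. The base case is immediate: $H$ starts as $\{root(T)\}$, which contains all of $P$, so every nearest neighbor of every $p\in root(T)$ trivially lies in $root(T)$ itself.

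The geometric heart of the inductive step is independent of the bookkeeping. Fix $p\in\mathfrak{r}$ and any $q\in kNN(p,P\setminus\{p\})$. Because the rectangles in $H$ form a partition of $P$ at every moment, there is a unique $\mathfrak{r}'\in H$ with $q\in\mathfrak{r}'$; if $\mathfrak{r}'=\mathfrak{r}$ the claim is trivial, so assume $\mathfrak{r}'\ne\mathfrak{r}$. Using $p\in B(c_\mathfrak{r},r_\mathfrak{r})$, $q\in B(c_{\mathfrak{r}'},r_{\mathfrak{r}'})$ and the triangle inequality,
\[
D(c_{\mathfrak{r}'},c_\mathfrak{r})\le D(c_{\mathfrak{r}'},q)+D(q,p)+D(p,c_\mathfrak{r})\le r_{\mathfrak{r}'}+D(p,q)+r_\mathfrak{r}.
\]
Lemma \ref{lema:tk-kthres} gives $D(p,q)\le T_k(p)\le kThres(\mathfrak{r})$, so $\mathfrak{r}'$ satisfies the inclusion criterion of Definition \ref{def:knbr-set}; the remaining task is to show that the algorithm actually places $\mathfrak{r}'$ into $kNbr(\mathfrak{r})$.

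For the inductive bookkeeping, suppose the invariant holds before the iteration that pops $\mathfrak{r}_{\mathrm{old}}$ and pushes $\mathcal{S}_{son}$. Take $\mathfrak{r}_i\in\mathcal{S}_{son}$, $p\in\mathfrak{r}_i$, and let $\mathfrak{r}'$ be the rectangle in the updated heap containing a given $q\in kNN(p,P\setminus\{p\})$. Case (i): $\mathfrak{r}'\in\mathcal{S}_{son}$; if $\mathfrak{r}'\ne\mathfrak{r}_i$, Step 3 of Algorithm \ref{alg:new-mntn-nbrfrd} invokes \texttt{AddInNbr}$(\mathfrak{r}',\mathfrak{r}_i)$, which inserts $\mathfrak{r}'$ precisely when the geometric criterion above is met. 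Case (ii): $\mathfrak{r}'$ already sat in $H$ before the iteration; then by the inductive hypothesis applied to $\mathfrak{r}_{\mathrm{old}}$ we have $\mathfrak{r}'\in kNbr(\mathfrak{r}_{\mathrm{old}})$ (the option $\mathfrak{r}'=\mathfrak{r}_{\mathrm{old}}$ is excluded since $\mathfrak{r}_{\mathrm{old}}$ was just popped), and Step 1 invokes \texttt{AddInNbr}$(\mathfrak{r}',\mathfrak{r}_i)$ with the same effect. The symmetric obligation — that an unchanged $\mathfrak{r}'\in H$ whose $kNbr$ used to contain $\mathfrak{r}_{\mathrm{old}}$ must now reference the appropriate $\mathfrak{r}_i$ — is discharged by Step 2 together with Lemma \ref{lema:nbr-frd-subset-heap}.

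The step I expect to be the main obstacle is verifying that \texttt{TruncateNbr} on a freshly formed leaf does not erase a rectangle $\mathfrak{r}'$ that actually contains some $k$-nearest neighbor of the unique point of $\mathfrak{r}_i$. This reduces to the observation that \texttt{TruncateNbr} deletes only those $\mathfrak{r}''$ satisfying $D(c_{\mathfrak{r}''},c_{\mathfrak{r}_i})-r_{\mathfrak{r}''}>r_{\mathfrak{r}_i}+kThres(\mathfrak{r}_i)$, which is the strict negation of the criterion derived above; combined with the fact that on a leaf $kThres(\mathfrak{r}_i)=\max_{\mathfrak{r}''\in Cand(\mathfrak{r}_i)}Qlty(\mathfrak{r}'',\mathfrak{r}_i)$ can only decrease over time as better candidates arrive, no genuine NN-rectangle is ever cut. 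The residual details — consistency of the ordering key of the B$+$-tree storing $kNbr$ and the $kFrd$ side of each insertion/deletion — follow routinely from Lemma \ref{lema:nbr-frd-subset-heap}.
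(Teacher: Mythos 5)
Your proposal is correct, and its geometric core is exactly the paper's proof: fix $q\in kNN(p,P\setminus\{p\})$, let $\mathfrak{r}'$ be the rectangle containing $q$, and chain the triangle inequality $D(c_{\mathfrak{r}},c_{\mathfrak{r}'})\le D(c_{\mathfrak{r}},p)+D(p,q)+D(q,c_{\mathfrak{r}'})\le r_{\mathfrak{r}}+T_k(p)+r_{\mathfrak{r}'}$ with Lemma \ref{lema:tk-kthres} to land inside the inequality of Definition \ref{def:knbr-set}. Where you diverge is in scope: the paper stops there, reading the lemma as a statement about the \emph{definitional} set $kNbr(\mathfrak{r})$ and delegating the claim that the algorithm's computed set actually coincides with that definition to Lemma \ref{lema:cand-is-cand}, whose proof is declared ``easily verified'' and omitted. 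You instead prove the stronger, operational statement directly, by induction on the \texttt{while} loop: the case split on whether $\mathfrak{r}'$ comes from $\mathcal{S}_{son}$ (handled by Step 3) or was already in $H$ (handled by Steps 1--2 via the inductive hypothesis on $\mathfrak{r}_{\mathrm{old}}$ and Lemma \ref{lema:nbr-frd-subset-heap}), plus the check that \texttt{TruncateNbr} cannot delete a genuine NN-rectangle because the deletion predicate is the negation of the inequality you just derived. This buys a proof that actually discharges the gap the paper leaves open, at the cost of more bookkeeping; if you write it up, the one point to make fully explicit is that Lemma \ref{lema:tk-kthres} holds for the \emph{current} value of $kThres(\mathfrak{r})$ at every stage of the algorithm (since $kThres$ shrinks as $Cand$ improves), which is what guarantees that neither the strict insertion test in \texttt{AddInNbr} nor a later truncation can exclude the rectangle containing a true $k$-nearest neighbor.
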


\begin{proof}

	Given an arbitrary rectangle $\mathfrak{r}\in T$ and an arbitrary point $p\in \mathfrak{r}$, the following proofs applies to an arbitrary point $q\in kNN(p,P\setminus\{p\})$. First we have $D(q,p)\le T_k(p)$ since $q$ is one of the k-NN's of $p$. If $q\in \mathfrak{r}$ then the lemma trivially holds. Otherwise $p$ lies in another rectangle $\mathfrak{r}'$, which indicates that $D(q,c_{\mathfrak{r}'})\le r_{\mathfrak{r}'}$. On the other hand, $D(p,c_{\mathfrak{r}})\le r_{\mathfrak{r}}$ since $p\in \mathfrak{r}$. Thus, the following inequality can be derived based on the triangle inequality and Lemma \ref{lema:tk-kthres}:
	$$D(c_{\mathfrak{r}},c_{\mathfrak{r}'})\le D(c_{\mathfrak{r}},p)+D(p,q)+D(q,c_{\mathfrak{r}'})\le r_{\mathfrak{r}}+T_k(p)+r_{\mathfrak{r}'}\le r_{\mathfrak{r}}+kThres(\mathfrak{r})+r_{\mathfrak{r}'}.$$
	
	This inequality indicates that $\mathfrak{r}'\in kNbr(\mathfrak{r})$ according to Definition \ref{def:knbr-set}.
	
	Finally, it can be concluded that $kNN(p, P\setminus \{p\})\subseteq \{\mathfrak{r}\}\cup kNbr(\mathfrak{r})$ since the point $p$ is arbitrary in $kNN(p,P\setminus\{p\})$. 
	\qed
\end{proof}

By now we have proved that $Cand(\mathfrak{r})$ and $kNN(p, P\setminus \{p\})$ are both subsets of $kNbr(\mathfrak{r})$. Next we will show that the three sets collapse to the same one when Algorithm \ref{alg:all-knn} terminates, which fulfills the proof of the correctness of the algorithm that it can correctly compute $kNN(p, P\setminus \{p\})$ for all point $p\in P$. This 

\begin{lemma}\label{lema:knbr-minus-cand-non-leaf}
	Let $\mathfrak{r}$ to be a leaf node in $T$, then all $\mathfrak{r}'\in kNbr(\mathfrak{r})\setminus Cand(\mathfrak{r})$ are non-leaf nodes.
	
\end{lemma}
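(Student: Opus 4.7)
The proof will hinge on a special simplification of $Qlty$ when both arguments are leaves. For any leaf $\mathfrak{r}'$ the approximate MES radius $r_{\mathfrak{r}'}$ equals zero, so $Qlty(\mathfrak{r}',\mathfrak{r})$ reduces to $D(c_{\mathfrak{r}'},c_{\mathfrak{r}})$. Since $\mathfrak{r}$ is itself a leaf with $r_{\mathfrak{r}}=0$, the two strict-inequality tests performed inside $AddInNbr$---the one gating insertion into $Cand(\mathfrak{r})$, namely $Qlty(\mathfrak{r}',\mathfrak{r})<kThres(\mathfrak{r})$, and the one gating insertion into $kNbr(\mathfrak{r})$, namely $D(c_{\mathfrak{r}'},c_{\mathfrak{r}})<r_{\mathfrak{r}'}+r_{\mathfrak{r}}+kThres(\mathfrak{r})$---collapse to the same inequality. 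Consequently, the very moment a leaf $\mathfrak{r}'$ passes the test for $kNbr(\mathfrak{r})$ it also passes the test for $Cand(\mathfrak{r})$, so the two sets receive $\mathfrak{r}'$ simultaneously.

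With this observation, I would prove the contrapositive statement---\emph{for every leaf $\mathfrak{r}\in H$, every leaf $\mathfrak{r}'\in kNbr(\mathfrak{r})$ also lies in $Cand(\mathfrak{r})$}---as a loop invariant of the main $while$ loop of Algorithm \ref{alg:all-knn}, by induction on its iterations. The base case is vacuous because $kNbr(root(T))=\emptyset$ and $root(T)$ is a non-leaf. For the inductive step, consider a single $MntnNbrFrd(\tilde{\mathfrak{r}},\mathcal{S}_{son})$ call and enumerate the operations that may alter $kNbr(\mathfrak{r})$ or $Cand(\mathfrak{r})$ of a leaf $\mathfrak{r}$: $DelFromNbr$ only removes the non-leaf $\tilde{\mathfrak{r}}$, which cannot affect leaf entries; each $AddInNbr$ call preserves the invariant at the instant of insertion by the observation above; a pop inside $AddInNbr$ may transiently evict a leaf $\mathfrak{r}'$ from $Cand(\mathfrak{r})$ while leaving it in $kNbr(\mathfrak{r})$; and $TruncateNbr(\mathfrak{r})$ is fired at the end of $MntnNbrFrd$ exactly when $\mathfrak{r}\in\mathcal{S}_{son}\cup kFrd(\tilde{\mathfrak{r}})$, which coincides with the only situation in which $Cand(\mathfrak{r})$ can be modified. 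The central calculation is to show that by the time $TruncateNbr(\mathfrak{r})$ runs, any leaf $\mathfrak{r}'$ evicted during the call satisfies the truncation condition $D(c_{\mathfrak{r}'},c_{\mathfrak{r}})-r_{\mathfrak{r}'}>r_{\mathfrak{r}}+kThres(\mathfrak{r})$. This follows because $kThres(\mathfrak{r})$ is monotonically non-increasing over the sequence of $AddInNbr$ calls within one $MntnNbrFrd$, so its final value is no larger than $Qlty(\mathfrak{r}',\mathfrak{r})=D(c_{\mathfrak{r}'},c_{\mathfrak{r}})$; combined with $r_{\mathfrak{r}'}=r_{\mathfrak{r}}=0$, this gives the deletion condition, and $\mathfrak{r}'$ is removed from $kNbr(\mathfrak{r})$ before the iteration ends.

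The main obstacle will be the tie-breaking case in which popping $\mathfrak{r}'$ from $Cand(\mathfrak{r})$ leaves behind another element of exactly equal $Qlty$, so that the updated $kThres(\mathfrak{r})$ equals $Qlty(\mathfrak{r}',\mathfrak{r})$ and $TruncateNbr$'s strict inequality fails to delete $\mathfrak{r}'$. I would handle this either by invoking a general-position assumption on $P$ (all relevant distances distinct, as is implicit in the analysis), or by strengthening the invariant to allow $Cand(\mathfrak{r})$ to retain every leaf of $kNbr(\mathfrak{r})$ whose $Qlty$ equals $kThres(\mathfrak{r})$; the latter enlarges $Cand(\mathfrak{r})$ only by a constant factor depending on $k$ and has no effect on the running-time bound announced in Section \ref{sec:intro}.
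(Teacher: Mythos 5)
Your proof is correct in substance but takes a genuinely different, and much longer, route than the paper's. The paper argues purely from the set definitions: assuming for contradiction that a leaf $\mathfrak{r}'$ lies in $kNbr(\mathfrak{r})\setminus Cand(\mathfrak{r})$, it uses $r_{\mathfrak{r}}=r_{\mathfrak{r}'}=0$ to get $D(c_{\mathfrak{r}'},c_{\mathfrak{r}})=Qlty(\mathfrak{r}',\mathfrak{r})>kThres(\mathfrak{r})$ from $\mathfrak{r}'\notin Cand(\mathfrak{r})$, and $D(c_{\mathfrak{r}'},c_{\mathfrak{r}})\le kThres(\mathfrak{r})$ from $\mathfrak{r}'\in kNbr(\mathfrak{r})$ --- a two-line contradiction. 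This is exactly your opening observation that the two membership tests collapse when both radii vanish; the paper simply stops there, delegating to Lemma \ref{lema:cand-is-cand} (stated with its proof omitted) the claim that the sets the algorithm actually constructs satisfy Definitions \ref{def:knbr-set}, \ref{def:kfrd-set} and \ref{def:cand-set}. Your induction over the $while$ loop, tracking $AddInNbr$, the pops from $Cand$, and $TruncateNbr$, is in effect a proof of the relevant special case of that omitted lemma: it buys an operational verification the paper skips, at the cost of substantial bookkeeping (and it correctly checks that every leaf whose $Cand$ set can change in a given $MntnNbrFrd$ call does receive a $TruncateNbr$ at the end of that call). Notably, the tie-breaking defect you isolate --- an evicted leaf whose $Qlty$ equals the updated $kThres(\mathfrak{r})$ survives the strict truncation test --- is not an artifact of your route: the paper's own step ``$Qlty(\mathfrak{r}',\mathfrak{r})>kThres(\mathfrak{r})$ because $\mathfrak{r}'\notin Cand(\mathfrak{r})$'' requires the same strictness and fails under the same ties, so one of your proposed fixes (a general-position assumption, or letting $Cand(\mathfrak{r})$ absorb ties) is in fact needed to make the paper's argument airtight as well.
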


\begin{proof}
	Recall the definitions of $kNbr$ and $Cand$ sets in Definition \ref{def:knbr-set} and \ref{def:cand-set}. Suppose to the contrary that there exists $\mathfrak{r}'\in kNbr(\mathfrak{r})\setminus Cand(\mathfrak{r})$ and $\mathfrak{r}'$ is a leaf node that contains only one point. In such case,  $Qlty(\mathfrak{r}',\mathfrak{r})=D(c_{\mathfrak{r}'},c_{\mathfrak{r}})$ since $r_{\mathfrak{r}'}=0$. 
	Then $Qlty(\mathfrak{r}',\mathfrak{r})=D(c_{\mathfrak{r}'},c_{\mathfrak{r}}) > kThres(\mathfrak{r})$ because $\mathfrak{r}'\notin Cand(\mathfrak{r})$. On the other hand, $D(c_{\mathfrak{r}'},c_{\mathfrak{r}}) \le r_{\mathfrak{r}'}+r_{\mathfrak{r}}+kThres(\mathfrak{r})$ according to $\mathfrak{r}'\in kNbr(\mathfrak{r})$ and the definition of $kNbr(\mathfrak{r})$. Further, the last inequality can be simplified to $D(c_{\mathfrak{r}'},c_{\mathfrak{r}})\le kThres(\mathfrak{r})$ since $r_{\mathfrak{r}}=r_{\mathfrak{r}'}=0$. Thus, we get $D(c_{\mathfrak{r}'},c_{\mathfrak{r}})\le kThres(\mathfrak{r}) <D(c_{\mathfrak{r}'},c_{\mathfrak{r}})$, which is a contradiction. Finally, it is proved that $\mathfrak{r}'$ is a non-leaf node.
\end{proof}

\begin{theorem}\label{thrm:correctness-cand-equal-knn}
	When Algorithm \ref{alg:all-knn} terminates, $kNbr(\mathfrak{r})=Cand(\mathfrak{r})=kNN(p,P\setminus\{p\})$ holds for any leaf node $\mathfrak{r}$, where $p=c_{\mathfrak{r}}$ is the only point in $\mathfrak{r}$.
\end{theorem}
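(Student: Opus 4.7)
The plan is to chain together the four preceding lemmas and close with a cardinality argument. At termination the $while$ loop exits because $|H|=|P|$, so my first step is to observe that every element currently in $H$ must be a leaf of the RST: indeed $H$ contains rectangles whose point-sets partition $P$, so if any rectangle in $H$ contained more than one point we would have $|H|<|P|$, contradicting the stopping condition. Thus at termination every rectangle sitting in $H$ is a leaf with exactly one point.

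Next, I would fix an arbitrary leaf $\mathfrak{r}\in T$ with its unique point $p=c_{\mathfrak{r}}$ and invoke Lemma \ref{lema:nbr-frd-subset-heap} to conclude $kNbr(\mathfrak{r})\subseteq H$, so every element of $kNbr(\mathfrak{r})$ is itself a leaf. Now Lemma \ref{lema:knbr-minus-cand-non-leaf} states that any element of $kNbr(\mathfrak{r})\setminus Cand(\mathfrak{r})$ is a non-leaf, hence this difference must be empty. Combined with the inclusion $Cand(\mathfrak{r})\subseteq kNbr(\mathfrak{r})$ built into Definition \ref{def:cand-set} (and verified by Lemma \ref{lema:cand-is-cand}), this gives the first desired equality $kNbr(\mathfrak{r})=Cand(\mathfrak{r})$.

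For the second equality, I would apply Lemma \ref{lema:knn-subset-knbr} to get $kNN(p,P\setminus\{p\})\subseteq \{\mathfrak{r}\}\cup kNbr(\mathfrak{r})$. Since $\mathfrak{r}$ contains only the point $p$ and $p\notin kNN(p,P\setminus\{p\})$, this strengthens to $kNN(p,P\setminus\{p\})\subseteq kNbr(\mathfrak{r})=Cand(\mathfrak{r})$ (identifying each singleton leaf with its unique point, which is legitimate because all members of $kNbr(\mathfrak{r})$ are now leaves). Finally, because $|Cand(\mathfrak{r})|=k=|kNN(p,P\setminus\{p\})|$ and one is contained in the other, the two sets must coincide, completing the triple equality.

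The argument is largely a bookkeeping synthesis of earlier lemmas, so I do not anticipate a serious obstacle; the one subtle point to handle carefully is the implicit identification of a leaf rectangle with the single point it contains, which is what allows the set of leaf rectangles $Cand(\mathfrak{r})$ to be compared to the set of points $kNN(p,P\setminus\{p\})$. I would make this identification explicit at the moment it is first used, so that the final cardinality comparison is unambiguous.
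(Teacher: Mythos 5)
Your proposal is correct. For the first equality, $kNbr(\mathfrak{r})=Cand(\mathfrak{r})$, you follow exactly the paper's route: at termination every rectangle in $H$ is a singleton leaf, Lemma \ref{lema:nbr-frd-subset-heap} gives $kNbr(\mathfrak{r})\subseteq H$, and Lemma \ref{lema:knbr-minus-cand-non-leaf} then forces $kNbr(\mathfrak{r})\setminus Cand(\mathfrak{r})=\emptyset$. For the second equality your argument genuinely differs from the paper's. The paper observes that once all radii vanish, $Qlty(\mathfrak{r}',\mathfrak{r})$ degenerates to $D(c_{\mathfrak{r}'},c_{\mathfrak{r}})$ and asserts that Definition \ref{def:cand-set} then ``simplifies to the same form'' as the definition of $kNN(c_{\mathfrak{r}},P\setminus\{c_{\mathfrak{r}}\})$; this is slightly glib, because $Cand(\mathfrak{r})$ minimizes only over $kNbr(\mathfrak{r})$ rather than over all of $P\setminus\{p\}$, so the identification tacitly requires knowing that the true $k$ nearest neighbors already lie in $kNbr(\mathfrak{r})$. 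You supply exactly that missing ingredient by invoking Lemma \ref{lema:knn-subset-knbr} to get $kNN(p,P\setminus\{p\})\subseteq kNbr(\mathfrak{r})=Cand(\mathfrak{r})$ and then closing with the cardinality comparison $|Cand(\mathfrak{r})|=k=|kNN(p,P\setminus\{p\})|$. Your version is therefore a little more explicit and self-contained on this point (and your care about identifying a singleton leaf with its point is well placed), while the paper's version is shorter but leans on an unproved ``it can be verified'' step. Both are valid; no gap in your argument.
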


\begin{proof}
	When Algorithm \ref{alg:all-knn} terminates, all rectangles contains only one point, which indicates that $r_{\mathfrak{r}}=0$ for all $\mathfrak{r}$. 
	In such case, $Qlty(\mathfrak{r}',\mathfrak{r})$ is simplified to the form of $D(c_{\mathfrak{r}'},c_{\mathfrak{r}})$. And then it can be verified that the definition of $Cand(\mathfrak{r})$ is simplified to the same form with $kNN(c_{\mathfrak{r}},P\setminus\{c_{\mathfrak{r}}\})$. On the other hand, Lemma \ref{lema:cand-is-cand} ensures that the $Cand(\mathfrak{r})$ set constructed in the algorithm meets its definition, then $Cand(\mathfrak{r})=kNN(c_{\mathfrak{r}},P\setminus\{c_{\mathfrak{r}}\})$ is proved.
	
	Lemma \ref{lema:knbr-minus-cand-non-leaf} shows all $\mathfrak{r}$ in $kNbr(\mathfrak{r})\setminus Cand(\mathfrak{r})$ are non-leaf nodes. Since all $\mathfrak{r}\in H$ are leaf nodes when Algorithm \ref{alg:all-knn} terminates, and  Lemma  \ref{lema:nbr-frd-subset-heap} shows that $kNbr(\mathfrak{r})\subseteq H$, there will be $kNbr(\mathfrak{r})\setminus Cand(\mathfrak{r})=\emptyset$ which is equivalent to $kNbr(\mathfrak{r})=Cand(\mathfrak{r})$.
	
	After all, it is proved that $kNbr(\mathfrak{r})=Cand(\mathfrak{r})=kNN(c_{\mathfrak{r}},P\setminus\{c_{\mathfrak{r}}\})$, i.e., the three sets collapse to the same one.
	\qed
\end{proof}

\subsection{Complexities}

\subsubsection{5.2.1 Complexity of Algorithm \ref{alg:rst}}

\begin{lemma}\label{lema:number-of-nodes-rst}
	There are at most $2n$ nodes in the RST $T$.
\end{lemma}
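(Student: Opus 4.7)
My plan is to treat $T$ as a full binary tree whose leaves are in one-to-one correspondence with the points of $P$, and then invoke the standard counting identity relating internal nodes to leaves.

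First I would read off from Algorithm~\ref{alg:rst} that every leaf of $T$ is a rectangle $\mathfrak{r}$ with $|\mathfrak{r}|=1$: a node is expanded only if it is placed in $\mathcal{S}_0$, which happens only when $|\mathfrak{r}_i|>1$. Moreover, each split partitions the points of $\mathfrak{r}$ into the two disjoint point sets of $\mathfrak{r}_{large}$ and $\mathfrak{r}_{small}$, so the point sets attached to distinct leaves are pairwise disjoint. Since every point of $P$ eventually migrates into some leaf, the number of leaves equals $|P|=n$.

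Second I would verify that every internal node has exactly two children, so that $T$ is indeed a full binary tree. The only worry is a degenerate split in which all points of $\mathfrak{r}$ fall on one side of the cutting hyperplane. But the algorithm cuts the longest edge of the bounding rectangle $\mathfrak{r}$ at its midpoint; because $\mathfrak{r}$ is the \emph{tight} bounding rectangle of at least two distinct points, some point of $\mathfrak{r}$ attains the minimum coordinate along that edge and some point attains the maximum, and these two points lie on opposite sides of the midpoint. Ruling out this degenerate case is the one step I expect to need any care; the rest is pure bookkeeping.

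Combining these two facts with the elementary identity that a full binary tree with $L$ leaves has exactly $L-1$ internal nodes (a one-line induction on $L$), I would conclude $|V(T)| = L + (L-1) = 2n - 1 \le 2n$, which is precisely the bound asserted by the lemma.
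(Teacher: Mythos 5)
Your proof is correct and follows essentially the same route as the paper: count the leaves (one per point of $P$) and apply the identity that a full binary tree with $L$ leaves has $2L-1$ nodes. In fact you are somewhat more careful than the paper, which waves at the possibility that $T$ is not full, whereas you explicitly verify that every split of a tight bounding rectangle of at least two points produces two nonempty children.
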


\begin{proof}
	The construction of $T$ indicates that $T$ is binary but may not be a full binary tree. It is a known result that a full binary tree with $n$ leaf nodes contains $2n-1$ nodes, and thus the number of nodes in $T$ is no more than $2n$.
	\qed
\end{proof}

\begin{definition}
	Given a RST $T$ and any $\mathfrak{r}\in T$, call $\mathfrak{r}_{large}$ and  $\mathfrak{r}_{small}$ as the larger and smaller son of $\mathfrak{r}$, respectively. Then define $lsa(\mathfrak{r})$ (lowest smaller ancestor) for $\mathfrak{r}$ as follows. Let $\mathcal{P}$ be the path from $\mathfrak{r}$ to the root of $T$. If all the rectangles in $\mathcal{P}$ are the larger son of its farther, then $lsa(\mathfrak{r})$ is the root. Otherwise $lsa(\mathfrak{r})$ is the lowest rectangle in $\mathcal{P}$ that is the smaller son of its parent.
\end{definition}

\begin{lemma}[\cite{Vaidya1989}]\label{lema:split-step-time}
	For any $\mathfrak{r}$ in the RST $T$, the process of splitting $\mathfrak{r}$ into $\mathfrak{r}_{large}$ and $\mathfrak{r}_{small}$ can be done in $O(d\cdot |\mathfrak{r}_{small}|\cdot (1+\log{|lsa(\mathfrak{r})|}+\log{|\mathfrak{r}_{small}|} ) )$ time.
\end{lemma}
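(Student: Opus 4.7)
The plan is to maintain at every node $\mathfrak{r}$ of the RST a collection of $d$ balanced search trees $B_1(\mathfrak{r}),\ldots,B_d(\mathfrak{r})$, where $B_i(\mathfrak{r})$ stores the points of $\mathfrak{r}$ sorted by the $i$-th coordinate, each equipped with pointers to its minimum and maximum element. I will adopt the convention that at every split the larger son simply \emph{inherits} the parent's trees (with the small points deleted from them), while the smaller son \emph{builds fresh} trees over its own $|\mathfrak{r}_{small}|$ points. From this convention I will derive, by induction on depth, the invariant $|B_i(\mathfrak{r})|\le |lsa(\mathfrak{r})|$ for every $i$: $lsa(\mathfrak{r})$ is precisely the most recent ancestor where the trees were last rebuilt, after which only deletions have occurred on the path down to $\mathfrak{r}$.

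With this data structure in place I would realize the split of $\mathfrak{r}$ in five steps. First, determine the longest edge of $\mathfrak{r}$ in $O(d)$ time using the min/max pointers, let $i^\star$ be its index, and let $m$ be its midpoint. Second, identify which half is $\mathfrak{r}_{small}$ by running a simultaneous two-finger in-order walk on $B_{i^\star}(\mathfrak{r})$ from both ends, stopping as soon as one finger crosses $m$; the winning finger has traversed exactly $|\mathfrak{r}_{small}|$ points at amortized $O(1)$ per step, for a total of $O(|\mathfrak{r}_{small}|)$. Third, delete the identified points from each of the $d$ inherited trees at $\mathfrak{r}$; by the invariant each deletion costs $O(\log |lsa(\mathfrak{r})|)$, contributing $O(d\,|\mathfrak{r}_{small}|\log |lsa(\mathfrak{r})|)$. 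The modified trees (with refreshed min/max pointers) become the data of $\mathfrak{r}_{large}$. Fourth, sort the removed points along each of the $d$ coordinates and bulk-load a fresh balanced BST per coordinate for $\mathfrak{r}_{small}$, at a cost of $O(d\,|\mathfrak{r}_{small}|\log |\mathfrak{r}_{small}|)$. Fifth, recompute the two bounding rectangles from the min/max pointers in $O(d)$ time.

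Summing the five contributions and absorbing the $O(d)$ terms into the leading $d\,|\mathfrak{r}_{small}|$ factor yields the stated bound $O\bigl(d\,|\mathfrak{r}_{small}|\,(1+\log |lsa(\mathfrak{r})|+\log |\mathfrak{r}_{small}|)\bigr)$. The main obstacle I anticipate is the second step: without the two-finger trick, deciding which half is smaller would require scanning both halves, degrading the cost to $\Omega(d\,|\mathfrak{r}|)$ and ruining the overall $O(dn\log n)$ target for Algorithm~\ref{alg:rst}. The trick depends on the BST supporting predecessor/successor iteration in amortized constant time, which is standard for balanced search trees; meanwhile, it is the invariant on BST size that converts the $\log n$ cost of a deletion into $\log |lsa(\mathfrak{r})|$, matching the form of the claim exactly.
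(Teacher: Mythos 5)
The paper does not actually prove this lemma---it is cited from \cite{Vaidya1989} and the text only says the detailed proof can be found there---but your reconstruction is correct and is essentially the argument that citation relies on: maintain $d$ coordinate-sorted balanced trees per node, use the simultaneous walk from both ends of $B_{i^\star}$ so that identifying the smaller side costs $O(|\mathfrak{r}_{small}|)$ rather than $O(|\mathfrak{r}|)$, charge the deletions at $O(\log|\mathfrak{r}|)\le O(\log|lsa(\mathfrak{r})|)$ each to the larger son, and rebuild fresh structures of size $|\mathfrak{r}_{small}|$ for the smaller son. Two details worth stating explicitly are that each point needs cross-pointers to its node in all $d$ trees (otherwise locating it in $B_i$ for $i\ne i^\star$ before deletion is not $O(\log)$), and that the initial $O(dn\log n)$ construction of the trees at the root is a one-time setup cost that sits outside the per-split accounting of this lemma but is absorbed in the total bound of Lemma~\ref{lema:split-steps-sum}.
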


\begin{lemma}[\cite{Vaidya1989}]\label{lema:split-steps-sum}
	Let $w(\mathfrak{r})=d\cdot |\mathfrak{r}_{small}|\cdot (1+\log{|lsa(\mathfrak{r})|}+\log{|\mathfrak{r}_{small}|}) $. Then 
	$\sum\limits_{\mathfrak{r}\in T}{w(\mathfrak{r})}=O(dn\log{n})$. 
\end{lemma}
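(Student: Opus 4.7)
The plan is to exchange the order of summation so that the contribution is tracked per point $p\in P$ rather than per rectangle $\mathfrak{r}\in T$. The pivotal structural fact is that whenever $\mathfrak{r}$ is split, the smaller son contains at most $|\mathfrak{r}|/2$ points; consequently each point can occupy the smaller-son role in at most $\lceil\log_2 n\rceil$ rectangles along its root-to-leaf path in $T$, because each such event at least halves the size of the rectangle currently containing $p$.

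I would then expand $w(\mathfrak{r})$ into three summands,
$$\sum_{\mathfrak{r}\in T}w(\mathfrak{r}) = d\sum\limits_{\mathfrak{r}\in T}|\mathfrak{r}_{small}| + d\sum\limits_{\mathfrak{r}\in T}|\mathfrak{r}_{small}|\log|lsa(\mathfrak{r})| + d\sum\limits_{\mathfrak{r}\in T}|\mathfrak{r}_{small}|\log|\mathfrak{r}_{small}|,$$
and bound each piece separately. The first sum directly counts pairs $(\mathfrak{r},p)$ with $p\in\mathfrak{r}_{small}$ and is therefore at most $dn\log_2 n$ by the halving argument above. For the second sum I would enumerate the small steps of each $p$ as $\mathfrak{r}^{(1)},\ldots,\mathfrak{r}^{(m_p)}$ and use the fact that, by the definition of $lsa$, the value $lsa(\mathfrak{r})$ is constant along each maximal run of larger-son descents from a smaller son; hence $|lsa(\mathfrak{r}^{(i)})|$ is pinned to the size of the first rectangle $p$ entered after its previous small step and thus decays geometrically in $i$, making the per-point contribution summable.

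The main obstacle is the third summand. A naive per-point calculation only delivers $O(dn\log^2 n)$, because the $i$-th small step on $p$'s path contributes roughly $\log(n/2^i)$ and there can be $\Theta(\log n)$ such steps. To sharpen this to $O(dn\log n)$ I would follow the charging scheme of \cite{Vaidya1989}: introduce a potential that prepays each $|\mathfrak{r}_{small}|\log|\mathfrak{r}_{small}|$ term against the work of recursively partitioning the subtree rooted at $\mathfrak{r}_{small}$, exploiting that every point inside $\mathfrak{r}_{small}$ will itself be shepherded through only $O(\log|\mathfrak{r}_{small}|)$ further halving events. Summing the telescoping contributions across all levels of $T$ collapses the doubly logarithmic cost into a single logarithm, giving the claimed $O(dn\log n)$ total.
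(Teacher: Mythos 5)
Your plan --- per-point accounting along root-to-leaf paths plus the observation that each point lands in the smaller son at most $\log_2 n$ times --- is the right framework, and your bound $d\sum_{\mathfrak{r}}|\mathfrak{r}_{small}|\le dn\log_2 n$ for the first summand is correct. The other two summands do not go through, however. For the second, you infer from the geometric decay of $|lsa(\mathfrak{r}^{(i)})|$ that the per-point sum of $\log|lsa(\mathfrak{r}^{(i)})|$ is small; but $|lsa(\mathfrak{r}^{(i)})|\le n/2^{i-1}$ only gives $\log|lsa(\mathfrak{r}^{(i)})|\le\log_2 n-i+1$, an arithmetically (not geometrically) decreasing sequence whose sum over the up to $\log_2 n$ small steps is $\Theta(\log^2 n)$ per point, hence $\Theta(n\log^2 n)$ overall. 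For the third, you defer to an unspecified charging scheme; none can exist, because the statement as printed is false. For $n=2^{L}$ equally spaced collinear points the RST is perfectly balanced, every internal node at depth $j$ has $|\mathfrak{r}_{small}|=2^{L-j-1}$, and the third summand alone is $\sum_{j=0}^{L-1}2^{j}\cdot 2^{L-j-1}(L-j-1)=\frac{n}{2}\cdot\frac{L(L-1)}{2}=\Theta(n\log^{2}n)$.

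What the lemma should say (and what \cite{Vaidya1989} actually proves; the present paper only cites that proof, so the discrepancy is invisible here) is $w(\mathfrak{r})=d\cdot|\mathfrak{r}_{small}|\cdot(1+\log|lsa(\mathfrak{r})|-\log|\mathfrak{r}_{small}|)$, i.e.\ the logarithm of the ratio $|lsa(\mathfrak{r})|/|\mathfrak{r}_{small}|$ rather than the sum of the logarithms. With that weight your own observations finish the proof with no potential function at all: writing $s_{0}=root(T)$ and $s_{1},\dots,s_{m_{p}}$ for the successive smaller sons containing a fixed point $p$, you correctly noted that $lsa(\mathfrak{r}^{(i)})=s_{i-1}$, so the total charge to $p$ is $\sum_{i=1}^{m_{p}}\bigl(1+\log|s_{i-1}|-\log|s_{i}|\bigr)=m_{p}+\log|s_{0}|-\log|s_{m_{p}}|\le 2\log_{2}n$, and summing over the $n$ points gives $O(dn\log n)$. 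I suggest you prove the corrected statement this way and flag the sign error, rather than trying to patch the version with the plus sign.
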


The detailed proof of Lemma \ref{lema:split-step-time} and \ref{lema:split-steps-sum} can be found in \cite{Vaidya1989}. And the following lemma can be directly deduced by the above two lemmas.

\begin{theorem}\label{thrm:time-alg-rst}
	The time complexity of Algorithm \ref{alg:rst} is $O(dn\log{n})$.
\end{theorem}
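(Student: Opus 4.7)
The plan is to obtain the theorem essentially as a direct corollary of the two preceding lemmas, by charging the total running time of Algorithm \ref{alg:rst} to the per-node split cost plus low-order bookkeeping.

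First I would break the work of Algorithm \ref{alg:rst} into three categories: (i) initialization (creating the root node and the initial sets $\mathcal{S}_0,\mathcal{S}_1$), which is $O(d)$; (ii) the bookkeeping done each time a rectangle is split, namely creating the two child nodes, hanging them under $\mathfrak{r}$ in $T$, and updating $\mathcal{S}_0,\mathcal{S}_1$, which is $O(1)$ per split; and (iii) the actual splitting of $\mathfrak{r}$ into $\mathfrak{r}_{large}$ and $\mathfrak{r}_{small}$, which is bounded by $O(w(\mathfrak{r}))$ via Lemma \ref{lema:split-step-time}. Since the $while$ loop terminates exactly when $|\mathcal{S}_1|=|P|=n$, the loop executes once per internal node of $T$, and by Lemma \ref{lema:number-of-nodes-rst} there are at most $2n$ nodes total, so the bookkeeping in category (ii) contributes $O(n)$ in aggregate.

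Next I would sum the dominant cost (iii) across all internal nodes. Applying Lemma \ref{lema:split-steps-sum} yields
\[
\sum_{\mathfrak{r}\in T} w(\mathfrak{r}) \;=\; O(dn\log n),
\]
which absorbs the $O(d)$ and $O(n)$ terms from (i) and (ii). Thus the overall running time is $O(dn\log n)$, proving the theorem.

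The only subtlety I see is making sure that the per-iteration cost of Algorithm \ref{alg:rst} is truly bounded by $w(\mathfrak{r})+O(1)$ and not by something larger (e.g.\ a naive implementation that rescans all points of $\mathfrak{r}$ to build the two child rectangles would give $O(|\mathfrak{r}|)$ rather than $O(|\mathfrak{r}_{small}|\log\cdots)$). I expect this to be the main point worth flagging: the bound in Lemma \ref{lema:split-step-time} is stated in terms of $|\mathfrak{r}_{small}|$ and $|lsa(\mathfrak{r})|$, so one must invoke the implementation from \cite{Vaidya1989} that walks only through the smaller side and exploits a precomputed ordering inherited from the nearest smaller ancestor. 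Once that implementation detail is granted, the theorem follows immediately from Lemma \ref{lema:split-steps-sum} with no further calculation required.
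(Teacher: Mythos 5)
Your proof is correct and follows exactly the route the paper intends: the paper gives no written proof at all, merely asserting that the theorem is ``directly deduced'' from Lemma \ref{lema:split-step-time} and Lemma \ref{lema:split-steps-sum}, which is precisely the charging argument you spell out (your additional accounting of the $O(1)$-per-split bookkeeping and the implementation caveat about touching only $\mathfrak{r}_{small}$ is a more careful rendering of the same deduction).
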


\subsubsection{5.2.2 Complexity of Algorithm \ref{alg:compute-mes}}

\begin{theorem}\label{thrm:time-alg-mes}
	The time complexity of Algorithm \ref{alg:compute-mes} is $O(dn\log{n})$.
\end{theorem}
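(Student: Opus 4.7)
The plan is to charge the running time at each node of the post-order traversal of $T$ separately. At a leaf, the algorithm merely copies one point and sets the radius to zero, which costs $O(d)$; since the RST has at most $2n$ nodes by Lemma \ref{lema:number-of-nodes-rst}, the total contribution from leaves is $O(dn)$. At an internal node $\mathfrak{r}$, after the two recursive calls return, the algorithm feeds the precomputed $\frac{3}{2}$-MEB $(c_{\mathfrak{r}_{large}}, r_{\mathfrak{r}_{large}})$ into Algorithm \ref{alg:approx-meb} together with the set $\mathfrak{r}_{small}$. Because Algorithm \ref{alg:approx-meb} touches each point of its ``update'' argument a constant number of times and each such touch requires one $d$-dimensional distance computation, this call costs $O(d\cdot |\mathfrak{r}_{small}|)$.

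The remaining task is thus to bound the sum $S = \sum_{\mathfrak{r}\in T} |\mathfrak{r}_{small}|$ taken over all internal nodes. I will do this by charging to points. Fix any point $p\in P$, and follow the chain of ancestors of the leaf node that contains $p$ as we walk up $T$ to the root. Each time $p$ lies in the smaller child $\mathfrak{r}_{small}$ of some ancestor $\mathfrak{r}$, we have $|\mathfrak{r}_{small}| \le |\mathfrak{r}|/2$ by the definition of $\mathfrak{r}_{small}$, so the size of the rectangle currently containing $p$ at least halves. Since this size starts at $n$ and must remain at least $1$, this event can happen on at most $\lceil \log_2 n \rceil$ ancestors of $p$. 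Consequently each point contributes $O(\log n)$ to $S$, giving $S = O(n\log n)$.

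Combining the two contributions yields an overall running time of $O(dn) + O(d\cdot S) = O(dn\log n)$, which is the claimed bound. The only step requiring any care is the halving argument that bounds $S$; the remainder is essentially bookkeeping on the guarantee of Algorithm \ref{alg:approx-meb} and Lemma \ref{lema:number-of-nodes-rst}.
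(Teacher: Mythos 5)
Your proof is correct, and it reaches the bound by a genuinely more self-contained route than the paper. Both arguments decompose the cost identically — $O(d)$ per leaf plus $O(d\,|\mathfrak{r}_{small}|)$ for the incremental call to Algorithm \ref{alg:approx-meb} at each internal node — but they differ in how the sum $\sum_{\mathfrak{r}\in T}|\mathfrak{r}_{small}|$ is controlled. The paper simply observes that $d\,|\mathfrak{r}_{small}|$ is dominated term-by-term by the weight $w(\mathfrak{r})=d\cdot|\mathfrak{r}_{small}|\cdot(1+\log{|lsa(\mathfrak{r})|}+\log{|\mathfrak{r}_{small}|})$ and then invokes Lemma \ref{lema:split-steps-sum} (itself imported from Vaidya without proof) to conclude $\sum_{\mathfrak{r}\in T}w(\mathfrak{r})=O(dn\log{n})$. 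You instead prove directly that $\sum_{\mathfrak{r}\in T}|\mathfrak{r}_{small}|=O(n\log{n})$ by charging each occurrence of a point $p$ in a smaller child to $p$ itself and noting that, since $|\mathfrak{r}_{small}|\le|\mathfrak{r}|/2$, this can happen at most $\lceil\log_2 n\rceil$ times along the root-to-leaf path of $p$. Your charging argument is the standard one and is airtight; what it buys is independence from the external Lemma \ref{lema:split-steps-sum} and a tight bound on exactly the quantity needed (rather than the larger weighted sum), at the cost of a few extra lines. Either route is acceptable; yours is arguably preferable in a self-contained exposition.
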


\begin{proof}
	According to Algorithm \ref{alg:compute-mes}, the time to compute the $\frac{3}{2}$-MES of any  rectangle $\mathfrak{r}$ is $O(d|\mathfrak{r}_{small}|)$, which is strictly less than $w(\mathfrak{r})$. Then The complexity of Algorithm \ref{alg:compute-mes} can be expressed by $\sum\limits_{\mathfrak{r}\in T}{O(|\mathfrak{r}_{small}|) }$.
	Since Lemma \ref{lema:split-steps-sum} shows that $\sum\limits_{\mathfrak{r}\in T}{w(\mathfrak{r})}\le O(dn\log{n})$, it can be concluded that the time complexity of Algorithm \ref{alg:compute-mes} is $O(dn\log{n})$.	
	\qed
\end{proof}

\subsubsection{5.2.3 Complexity of Algorithm \ref{alg:all-knn}}

\begin{lemma}\label{lema:lmax-radius}
	For any point set $P$, let $\mathfrak{r}$ be the bounding rectangle of $P$, and let $B(\mathcal{C}_{P},\mathcal{R}_P)$ be the MES of $P$, then $Lmax(\mathfrak{r})\le 2\cdot \mathcal{R}_P\le \sqrt{d}\cdot Lmax(\mathfrak{r})$, where $Lmax(\mathfrak{r})$ is the length of the longest side of $\mathfrak{r}$.
\end{lemma}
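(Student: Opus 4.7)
The plan is to split the double inequality into its two halves and prove each by a direct geometric argument, using only the definitions of the bounding rectangle and the minimum enclosing ball.

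For the left inequality $Lmax(\mathfrak{r}) \le 2\mathcal{R}_P$, I would argue as follows. Let $i^\star$ be the coordinate axis along which $\mathfrak{r}$ attains its longest side, of length $L = Lmax(\mathfrak{r})$. Because $\mathfrak{r}$ is the \emph{bounding} rectangle of $P$ (so it is tight along every axis), there must exist two points $p, p' \in P$ realizing the extremes of $P$ along axis $i^\star$, i.e.\ with $|p^{(i^\star)} - p'^{(i^\star)}| = L$. This forces $D(p,p') \ge L$. On the other hand, $p$ and $p'$ both lie in $B(\mathcal{C}_P, \mathcal{R}_P)$, so by the triangle inequality $D(p,p') \le 2\mathcal{R}_P$. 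Combining the two bounds yields $L \le 2\mathcal{R}_P$.

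For the right inequality $2\mathcal{R}_P \le \sqrt{d}\cdot Lmax(\mathfrak{r})$, the plan is to exhibit an enclosing ball of $P$ with a small enough radius and then invoke minimality of the MEB. Let $L_1, L_2, \ldots, L_d$ denote the side lengths of $\mathfrak{r}$, and let $\alpha$ be its geometric center. Every point of $\mathfrak{r}$ is within distance $\tfrac{1}{2}\sqrt{\sum_{i=1}^d L_i^2}$ of $\alpha$, so the ball $B\bigl(\alpha, \tfrac{1}{2}\sqrt{\sum_{i} L_i^2}\bigr)$ contains $\mathfrak{r}$ and hence contains $P$. Since the MEB is the minimum-radius ball enclosing $P$, I would conclude
\[
\mathcal{R}_P \;\le\; \tfrac{1}{2}\sqrt{\textstyle\sum_{i=1}^d L_i^2} \;\le\; \tfrac{1}{2}\sqrt{d\cdot Lmax(\mathfrak{r})^2} \;=\; \tfrac{\sqrt{d}}{2}\cdot Lmax(\mathfrak{r}),
\]
which after multiplying by $2$ is exactly the desired bound.

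There is no real obstacle here: both halves are routine once one recognises that the left inequality comes from the fact that $P$ touches every face of the bounding rectangle, while the right inequality comes from the diagonal of $\mathfrak{r}$ giving a trivial enclosing ball. The only subtlety worth double-checking is the tightness argument in the first half, namely that the bounding rectangle is \emph{the smallest} axis-aligned rectangle containing $P$, so that extremal points of $P$ along each axis do exist; this is immediate from Definition of bounding rectangle stated in Section~\ref{sec:def-prel}.
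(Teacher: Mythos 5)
Your proof is correct and follows essentially the same route as the paper's: the lower bound comes from the fact that the tightness of the bounding rectangle forces two points of $P$ at distance at least $Lmax(\mathfrak{r})$, which any enclosing ball must cover, and the upper bound comes from circumscribing a ball around a box containing $P$ and invoking minimality of the MEB. The only cosmetic differences are that you argue the first half directly rather than by contradiction, and you circumscribe the bounding rectangle itself rather than the MCR used in the paper; both yield the same bounds.
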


\begin{proof}
	Let $\mathfrak{r}_c$ be an MCR of $P$.It is easy to see that $\sqrt{d}Lmax$ is the length of the longest diagonal of $\mathfrak{r}_c$. Thus the diameter of the circumscribed d-ball of $\mathfrak{r}_c$ is exactly $\sqrt{d}Lmax$. Since it is a enclosing ball and $B(\mathcal{C}_{P},\mathcal{R}_P)$ is the minimum enclosing ball, we have  $2\mathcal{R}_P\le \sqrt{d}Lmax$.

	For the other side of the inequality, suppose to the contrary that $2\mathcal{R}_P<Lmax(\mathfrak{r})$. Let $p$ and $p'$ be the pair of points that are farthest to each other in $P$, then $D(p,p')\ge Lmax(\mathfrak{r})$ since otherwise $\mathfrak{r}$ would not be the bounding rectangle of $P$. 
	Thus $2\mathcal{R}_P<Lmax(\mathfrak{r})\le D(p,p')$ and the d-ball with a diameter of $2\mathcal{R}_P$ can not enclose both $p$ and $p'$. This conflicts with the definition of MES. And thus $Lmax(\mathfrak{r})\le 2\mathcal{R}_P$ is proved.
\end{proof}

\begin{lemma}
	For any point set $P$, let $\mathfrak{r}$ be the bounding rectangle of $P$, and  $c_{\mathfrak{r}}$ and $r_{\mathfrak{r}}$ be the center and radius of the $\frac{3}{2}$-MES of $P$, then $Lmax(\mathfrak{r})\le 2\cdot r_{\mathfrak{r}}\le \frac{3}{2} \sqrt{d}\cdot Lmax(\mathfrak{r})$.	
\end{lemma}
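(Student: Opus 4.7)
The plan is to derive this as a direct corollary of the preceding Lemma \ref{lema:lmax-radius} together with the definition of the $\frac{3}{2}$-approximate MES. Recall that by definition of $\frac{3}{2}$-MES, the radius $r_{\mathfrak{r}}$ satisfies the two-sided bound $\mathcal{R}_P \le r_{\mathfrak{r}} \le \frac{3}{2}\mathcal{R}_P$, where $\mathcal{R}_P$ is the radius of the exact MEB. Lemma \ref{lema:lmax-radius} already gives the analogous sandwich for the exact radius: $Lmax(\mathfrak{r}) \le 2\mathcal{R}_P \le \sqrt{d}\cdot Lmax(\mathfrak{r})$. So the proof reduces to routine bookkeeping in which each side of the $\frac{3}{2}$-approximation bound is composed with the appropriate side of Lemma \ref{lema:lmax-radius}.

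For the lower inequality $Lmax(\mathfrak{r}) \le 2 r_{\mathfrak{r}}$, I would chain $Lmax(\mathfrak{r}) \le 2\mathcal{R}_P$ from Lemma \ref{lema:lmax-radius} with $\mathcal{R}_P \le r_{\mathfrak{r}}$ from the $\frac{3}{2}$-MES definition. For the upper inequality $2 r_{\mathfrak{r}} \le \frac{3}{2}\sqrt{d}\cdot Lmax(\mathfrak{r})$, I would use $r_{\mathfrak{r}} \le \frac{3}{2}\mathcal{R}_P$ to get $2r_{\mathfrak{r}} \le 3\mathcal{R}_P$, and then use $2\mathcal{R}_P \le \sqrt{d}\cdot Lmax(\mathfrak{r})$ to conclude $3\mathcal{R}_P \le \frac{3}{2}\sqrt{d}\cdot Lmax(\mathfrak{r})$.

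There is essentially no obstacle here, and in particular no new geometric argument is needed; everything reduces to the previous lemma and the defining inequality of an $\epsilon$-MEB with $\epsilon=\frac{3}{2}$. The only mild care is to apply the correct side of each bound in the correct direction so that the constants line up as $2$ on the left and $\frac{3}{2}\sqrt{d}$ on the right, which is immediate from the arithmetic above.
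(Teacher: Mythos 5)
Your proof is correct and follows exactly the route the paper intends: the paper's own proof consists of the single remark that the lemma follows by combining Lemma \ref{lema:lmax-radius} with the definition of the $\frac{3}{2}$-MES, and your chaining of $Lmax(\mathfrak{r})\le 2\mathcal{R}_P\le 2r_{\mathfrak{r}}$ and $2r_{\mathfrak{r}}\le 3\mathcal{R}_P\le \frac{3}{2}\sqrt{d}\cdot Lmax(\mathfrak{r})$ is precisely that combination, spelled out. The only minor point worth noting is that the lower bound $\mathcal{R}_P\le r_{\mathfrak{r}}$ is not literally part of the stated definition of an $\epsilon$-MEB but follows from the containment $P\subseteq B(c_{\mathfrak{r}},r_{\mathfrak{r}})$ together with the minimality of the exact MEB, which is how the paper itself uses it elsewhere.
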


\begin{proof}
	This lemma can be easily proved by combing Lemma \ref{lema:lmax-radius} and the definition of the $\frac{3}{2}$-MES.
	\qed
\end{proof}

Now we introduce two denotations for the following proofs.

\begin{definition}[$P(\mathfrak{r})$, the parent rectangle]
	Given a rectangle $\mathfrak{r}$ in the RST $T$, let $P(\mathfrak{r})$ be the rectangle at the parent node of $\mathfrak{r}$. If $\mathfrak{r}$ is the root then $P(\mathfrak{r})=\mathfrak{r}$.
\end{definition}

\begin{definition}[$O(\mathfrak{r})$, the outer rectangle]
	Given a rectangle $\mathfrak{r}$ in the RST $T$, define $O(\mathfrak{r})$ recursively as follows. If $\mathfrak{r}$ is the root of $T$ then $O(\mathfrak{r})$ is the minimal d-cube that contains $\mathfrak{r}$. For non root $\mathfrak{r}$, recall that when $P(\mathfrak{r})$ is split, it is split by cutting the longest side of it into two equal halves. This split in the meantime split $O(P(\mathfrak{r}))$ into two rectangles. Then let $O(\mathfrak{r})$ be the one that contains $\mathfrak{r}$.
\end{definition}

\begin{lemma}\label{lema:lmin-o-lmax-p}
$Lmin(O(\mathfrak{r}))\ge \frac{1}{2}\cdot Lmax(P(\mathfrak{r}))$.
\end{lemma}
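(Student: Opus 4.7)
The plan is to prove the inequality by induction on the depth of $\mathfrak{r}$ in the tree $T$, with the induction hypothesis being exactly the statement to prove. For the base case where $\mathfrak{r}$ is the root, by the definition of $O$ and the convention $P(\mathfrak{r})=\mathfrak{r}$, the outer rectangle $O(\mathfrak{r})$ is a d-cube of side length $Lmax(\mathfrak{r})$, so $Lmin(O(\mathfrak{r}))=Lmax(\mathfrak{r})=Lmax(P(\mathfrak{r}))$, which trivially exceeds $\tfrac{1}{2}Lmax(P(\mathfrak{r}))$.

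For the inductive step, let $\mathfrak{r}'=P(\mathfrak{r})$ and $\mathfrak{r}''=P(\mathfrak{r}')$, and let $i$ be the dimension along which $\mathfrak{r}'$ is split (so the side length of $\mathfrak{r}'$ in dimension $i$ equals $Lmax(\mathfrak{r}')$). I will analyze the side lengths of $O(\mathfrak{r})$ dimension by dimension. For $j\neq i$, the cut leaves the dimension-$j$ side of $O(\mathfrak{r}')$ untouched, so $O(\mathfrak{r})$'s side in dimension $j$ equals that of $O(\mathfrak{r}')$; I will then invoke the induction hypothesis on $\mathfrak{r}'$, which gives $Lmin(O(\mathfrak{r}'))\ge\tfrac{1}{2}Lmax(\mathfrak{r}'')$, together with the containment $\mathfrak{r}'\subseteq\mathfrak{r}''$ (since $\mathfrak{r}'$ is the bounding rectangle of points lying in one half of $\mathfrak{r}''$), which yields $Lmax(\mathfrak{r}')\le Lmax(\mathfrak{r}'')$. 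Chaining these gives the desired bound $\tfrac{1}{2}Lmax(\mathfrak{r}')$ for these non-cut dimensions.

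The dimension $i$ case is where the main geometric care is needed. The cut happens at the midpoint of $\mathfrak{r}'$ in dimension $i$, not at the midpoint of $O(\mathfrak{r}')$ in dimension $i$. I would write $O(\mathfrak{r}')$'s dimension-$i$ interval as $[b,b+O_i]$ and $\mathfrak{r}'$'s as $[a,a+p_i]$ with $b\le a$ and $b+O_i\ge a+p_i$ (from $\mathfrak{r}'\subseteq O(\mathfrak{r}')$), where $p_i=Lmax(\mathfrak{r}')$. The cut at $a+p_i/2$ produces two pieces of lengths $(a+p_i/2)-b\ge p_i/2$ and $(b+O_i)-(a+p_i/2)\ge p_i/2$. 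So whichever piece is $O(\mathfrak{r})$, its side in dimension $i$ is at least $p_i/2=\tfrac{1}{2}Lmax(P(\mathfrak{r}))$, matching the required bound.

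Taking the minimum over all dimensions then yields $Lmin(O(\mathfrak{r}))\ge\tfrac{1}{2}Lmax(P(\mathfrak{r}))$. The main subtlety to watch for is the non-cut dimensions: one might naively try to compare $O_j(\mathfrak{r}')$ directly to $Lmax(\mathfrak{r}')$, but the induction hypothesis only gives a bound involving $Lmax(\mathfrak{r}'')$, so the monotonicity $Lmax(\mathfrak{r}')\le Lmax(\mathfrak{r}'')$ under the parent-containment is essential for closing the induction.
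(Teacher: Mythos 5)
Your proof is correct, and it is worth noting that the paper itself gives no proof of this lemma: it only remarks that the statement ``is an imitation of Lemma 4.1 in Callahan--Kosaraju'' and omits all details. Your induction on depth, with the per-dimension case split between the cut dimension $i$ and the untouched dimensions $j\ne i$, is precisely the standard argument behind that cited lemma, specialized to the midpoint split used here (which is why you obtain the constant $\tfrac{1}{2}$ rather than the $\tfrac{1}{3}$ that appears in the fair-split-tree version, where the cut may land anywhere in a middle band). Your handling of the two subtleties is the right one: the non-cut dimensions genuinely need the monotonicity $Lmax(\mathfrak{r}')\le Lmax(\mathfrak{r}'')$ from $\mathfrak{r}'\subseteq\mathfrak{r}''$ to convert the inductive bound $\tfrac{1}{2}Lmax(\mathfrak{r}'')$ into the required $\tfrac{1}{2}Lmax(\mathfrak{r}')$, and the cut dimension needs the interval computation showing both pieces have length at least $p_i/2$. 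The only ingredient you use without justification is the containment $\mathfrak{r}\subseteq O(\mathfrak{r})$ (equivalently, that the cut hyperplane really does leave $\mathfrak{r}$ entirely inside one of the two pieces of $O(P(\mathfrak{r}))$); this is not a gap, since it follows by the same induction directly from the definition of $O$ and the fact that each child is the bounding rectangle of the points on one side of the cut, but a complete write-up should state it as a maintained invariant alongside the main inequality.
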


\begin{proof}
	This lemma is an imitation of Lemma 4.1 in \cite{Callahan1995}, and thus the lemma can be proved using similar techniques. The details are omitted.
	\qed
\end{proof}

\begin{lemma}\label{lema:mappint-knbr-disjoint-cubes}
	For any rectangle $\mathfrak{r}\in H$ in the execution of Algorithm \ref{alg:all-knn}, there exists a set $S$ of d-cubes that satisfies:
	\begin{enumerate}
		\item $S$ can be mapped to $kNbr(\mathfrak{r})$ one-to-one,
		\item $Len(\mathfrak{r}')\ge \frac{1}{\sqrt{d}}r_{\mathfrak{r}_{top}}$ holds for $\forall \mathfrak{r}'\in S$, where $\mathfrak{r}_{top}$ is the top element of $H$, and
		\item the rectangles in $S$ are disjoint.
	\end{enumerate}
\end{lemma}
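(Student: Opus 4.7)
The plan is to build $S$ by associating each $\mathfrak{r}' \in kNbr(\mathfrak{r})$ with a d-cube inscribed inside its outer rectangle $O(\mathfrak{r}')$. Concretely, for every $\mathfrak{r}'\in kNbr(\mathfrak{r})$ I take the largest d-cube fitting inside $O(\mathfrak{r}')$, which has side length $Lmin(O(\mathfrak{r}'))$, and let $S$ be the collection of these cubes with the natural bijection $\mathfrak{r}' \mapsto$ (its inscribed cube); this discharges item~(1).

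To establish item~(3), I would prove by induction on the execution of Algorithm \ref{alg:all-knn} that at every moment the family $\{O(\mathfrak{r}') : \mathfrak{r}' \in H\}$ is a partition of $O(root(T))$. The base case is immediate since $H=\{root(T)\}$. For the inductive step, each time a rectangle $\mathfrak{r}$ is popped and its two children are pushed, the definition of $O(\cdot)$ ensures that $O(\mathfrak{r}_{large})$ and $O(\mathfrak{r}_{small})$ are exactly the two halves into which the same cutting hyperplane splits $O(\mathfrak{r})$. The subtree-collapsing step at Line \ref{line:aknn:deleting-smallers} is handled by iterating this observation over the collapsed subtree. Combined with $kNbr(\mathfrak{r})\subseteq H$ from Lemma \ref{lema:nbr-frd-subset-heap}, this yields pairwise disjointness of the outer rectangles $\{O(\mathfrak{r}') : \mathfrak{r}'\in kNbr(\mathfrak{r})\}$ and therefore of the inscribed d-cubes.

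For item~(2), I would chain Lemma \ref{lema:lmin-o-lmax-p} with the $\frac{3}{2}$-MES size estimate, which gives $Lmax(P(\mathfrak{r}'))\ge \frac{4}{3\sqrt{d}}\, r_{P(\mathfrak{r}')}$, to obtain
\[
Lmin(O(\mathfrak{r}')) \;\ge\; \tfrac{1}{2}\,Lmax(P(\mathfrak{r}')) \;\ge\; \tfrac{2}{3\sqrt{d}}\,r_{P(\mathfrak{r}')}.
\]
To convert the right-hand side into $r_{\mathfrak{r}_{top}}$, I would use the max-heap property: since $P(\mathfrak{r}')$ has already been popped from $H$ (it was the top at the moment of its pop) while $\mathfrak{r}_{top}$ is still in $H$, and the sequence of popped radii is monotonically non-increasing, we have $r_{P(\mathfrak{r}')}\ge r_{\mathfrak{r}_{top}}$. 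Plugging this in gives the bound claimed in item~(2).

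The main obstacles will be twofold. First, the partition invariant must survive the subtree-collapsing step at Line \ref{line:aknn:deleting-smallers}, which requires checking that the leaves of the subtree rooted at $\mathfrak{r}_i$ induce outer rectangles that together tile $O(\mathfrak{r}_i)$; this is routine but is the one place the invariant is not preserved by a single split. Second, the literal constant $\frac{1}{\sqrt{d}}$ in the statement is slightly tighter than the $\frac{2}{3\sqrt{d}}$ produced by the naive chain above; recovering it will likely require either falling back to the exact MEB at the intermediate step (which already yields $Lmax\ge \frac{2}{\sqrt{d}}\mathcal{R}_P$ by Lemma \ref{lema:lmax-radius}) or exploiting the fact that $P(\mathfrak{r}')$ is split specifically along its longest side, so $O(\mathfrak{r}')$ inherits the exact half of $Lmax(P(\mathfrak{r}'))$ without going through the approximate radius.
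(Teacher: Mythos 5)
Your proposal follows essentially the same route as the paper: map each $\mathfrak{r}'\in kNbr(\mathfrak{r})$ to a cube inscribed in the outer rectangle $O(\mathfrak{r}')$, obtain disjointness from the fact that the outer rectangles of the elements of $H$ tile $O(root(T))$, and bound the side length by chaining Lemma \ref{lema:lmin-o-lmax-p} with the radius-versus-side-length estimate and the heap monotonicity $r_{P(\mathfrak{r}')}\ge r_{\mathfrak{r}_{top}}$. Your worry about the constant is well founded: the paper silently applies the exact-MEB bound of Lemma \ref{lema:lmax-radius} to the approximate radius $r_{P(\mathfrak{r}')}$, so its own chain really only yields $\frac{2}{3\sqrt{d}}\,r_{\mathfrak{r}_{top}}$ as you computed --- a constant-factor slip that is harmless for the subsequent $O((\sqrt{d})^d)$ packing argument.
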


\begin{proof}
	For each $\mathfrak{r}'\in kNbr(\mathfrak{r})$, it is true that $Lmin(O(\mathfrak{r}'))\ge \frac{1}{2} Lmax(P(\mathfrak{r}'))$ according to Lemma \ref{lema:lmin-o-lmax-p}. Then by Lemma \ref{lema:lmax-radius} we have  $Lmax(P(\mathfrak{r}'))\ge \frac{2}{\sqrt{d}} r_{P(\mathfrak{r}')}$. On the other hand,  $r_{P(\mathfrak{r}')}\ge r_{\mathfrak{r}_{top}}$ since $\mathfrak{r}_{top}$ is the current element in $H$ and $P(\mathfrak{r})$ was popped out of $H$ before $\mathfrak{r}_{top}$. In summary, $Lmin(O(\mathfrak{r}'))\ge \frac{1}{\sqrt{d}}\cdot r_{\mathfrak{r}_{top}}$ for $\forall \mathfrak{r}'\in kNbr(\mathfrak{r})$. Thus, $O(\mathfrak{r}')$ contains a d-cube with side length $\frac{1}{\sqrt{d}}\cdot r_{\mathfrak{r}_{top}}$. Besides, it is easy to see that for arbitrary $\mathfrak{r}'$ and $\mathfrak{r}''$ in $kNbr(\mathfrak{r})$, $O(\mathfrak{r}')$ and $O(\mathfrak{r}'')$ are disjoint. 
	
	Based on the above analysis, the mapping $f$ from  $kNbr(\mathfrak{r})$ to $S$ is as follows: for each $\mathfrak{r}\in kNbr(\mathfrak{r})$, $f(\mathfrak{r}')$ is the d-cube contained in $O(\mathfrak{r}')$ and with side length $\frac{1}{\sqrt{d}}\cdot r_{\mathfrak{r}_{top}}$ described above. It can be verified that such $S$ satisfies the properties required in the lemma.
	\qed
\end{proof}

\begin{lemma}\label{lema:knbr-size}
	During the executions of Algorithm \ref{alg:compute-mes}, $|kNbr(\mathfrak{r})|= (\sqrt{d})^d$ holds for all non leaf node $\mathfrak{r}$.
\end{lemma}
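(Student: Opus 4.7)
The plan is to combine Lemma \ref{lema:mappint-knbr-disjoint-cubes} with a volume packing argument. I would focus on the instant in Algorithm \ref{alg:all-knn} when the non-leaf rectangle $\mathfrak{r}$ is at the top of the heap $H$ (so $\mathfrak{r}=\mathfrak{r}_{top}$ and every $\mathfrak{r}'\in H$ has $r_{\mathfrak{r}'}\le r_{\mathfrak{r}}$); this is the moment at which $|kNbr(\mathfrak{r})|$ drives the cost of processing $\mathfrak{r}$, so bounding it here suffices for the complexity analysis.

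First I would use that $\mathfrak{r}$ is non-leaf to conclude $|\mathfrak{r}|\ge k+1$, so Equation \ref{eqtn:kthres} gives $kThres(\mathfrak{r})=2r_{\mathfrak{r}}$. Expanding Definition \ref{def:knbr-set} and applying $r_{\mathfrak{r}'}\le r_{\mathfrak{r}}$ then yields $D(c_{\mathfrak{r}'},c_{\mathfrak{r}})\le r_{\mathfrak{r}}+r_{\mathfrak{r}'}+2r_{\mathfrak{r}}\le 4r_{\mathfrak{r}}$ for every $\mathfrak{r}'\in kNbr(\mathfrak{r})$; combined with $\mathfrak{r}'\subseteq B(c_{\mathfrak{r}'},r_{\mathfrak{r}'})$ this forces $\mathfrak{r}'\subseteq B(c_{\mathfrak{r}},5r_{\mathfrak{r}})$.

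Next I would invoke Lemma \ref{lema:mappint-knbr-disjoint-cubes} with $\mathfrak{r}_{top}=\mathfrak{r}$: each $\mathfrak{r}'\in kNbr(\mathfrak{r})$ gets a matched d-cube $f(\mathfrak{r}')\subseteq O(\mathfrak{r}')$ of side length $\frac{1}{\sqrt{d}}r_{\mathfrak{r}}$, and the family $\{f(\mathfrak{r}')\}$ is pairwise disjoint. Because $O(\mathfrak{r}')\supseteq\mathfrak{r}'$ and every side of $O(\mathfrak{r}')$ has length at least $\frac{1}{\sqrt{d}}r_{\mathfrak{r}}$, one can slide such a d-cube inside $O(\mathfrak{r}')$ so as to cover any prescribed point of $\mathfrak{r}'$; I would choose $f(\mathfrak{r}')$ in this way. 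Since $f(\mathfrak{r}')$ has diameter $\sqrt{d}\cdot\frac{1}{\sqrt{d}}r_{\mathfrak{r}}=r_{\mathfrak{r}}$ and now meets $B(c_{\mathfrak{r}},5r_{\mathfrak{r}})$, the triangle inequality gives $f(\mathfrak{r}')\subseteq B(c_{\mathfrak{r}},6r_{\mathfrak{r}})$.

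A volume comparison then closes the argument: the $|kNbr(\mathfrak{r})|$ pairwise-disjoint cubes together have volume at least $|kNbr(\mathfrak{r})|\cdot d^{-d/2}r_{\mathfrak{r}}^d$ but also fit inside a ball of volume $V_d(6r_{\mathfrak{r}})^d$, where $V_d$ denotes the volume of the unit $d$-ball, yielding $|kNbr(\mathfrak{r})|\le V_d\cdot 6^d\cdot d^{d/2}=O\bigl((\sqrt{d})^d\bigr)$. The hard step is the d-cube placement in the previous paragraph: it relies on being able to choose the cube guaranteed by Lemma \ref{lema:mappint-knbr-disjoint-cubes} so as to intersect $\mathfrak{r}'$, which in turn rests on $Lmin(O(\mathfrak{r}'))\ge\frac{1}{\sqrt{d}}r_{\mathfrak{r}}$ leaving enough slack in $O(\mathfrak{r}')$ around any point of $\mathfrak{r}'$ to fit a d-cube of the required side length; making this coordinate-by-coordinate sliding argument airtight is the most delicate part of the proof.
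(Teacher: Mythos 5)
Your proof is correct and follows essentially the same route as the paper's: reduce to $D(c_{\mathfrak{r}'},c_{\mathfrak{r}})\le 4r$ using $kThres(\mathfrak{r})=2r_{\mathfrak{r}}$, invoke Lemma \ref{lema:mappint-knbr-disjoint-cubes} to get pairwise-disjoint d-cubes of side $\frac{1}{\sqrt{d}}r$, and pack them into a bounded region (the paper counts grid cells in a cube of side $10r_{\mathfrak{r}_{top}}$ centered at $c_{\mathfrak{r}}$, you do a volume comparison in a ball of radius $6r_{\mathfrak{r}}$ --- the same packing idea). Your explicit coordinate-by-coordinate sliding argument, which places each mapped cube so that it actually meets $\mathfrak{r}'$ and hence lies near $c_{\mathfrak{r}}$, fills in a containment step that the paper dismisses with ``it can be verified.''
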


\begin{proof}
	For a non leaf node $\mathfrak{r}$, $kNbr(\mathfrak{r})$ is defined to be the set of 
	$$\{\mathfrak{r}'\in H \mid D(c_{\mathfrak{r}'}+c_{\mathfrak{r}})\le r_{\mathfrak{r}'}+r_{\mathfrak{r}}+kThres(\mathfrak{r}) \},$$
	where $kThres(\mathfrak{r})=2r_{\mathfrak{r}}$. Since $\mathfrak{r}\in H$ and $kNbr(\mathfrak{r})\subseteq H$, we have $r_{\mathfrak{r}}\le r_{\mathfrak{r}_{top}}$ and $r_{\mathfrak{r}'}\le r_{\mathfrak{r}_{top}}$, where $\mathfrak{r}_{top}$ is the top element in $H$. 
	Thus it can be deduced that 
	$$kNbr(\mathfrak{r})\subseteq \{\mathfrak{r}'\in H \mid D(c_{\mathfrak{r}'},c_{\mathfrak{r}})\le 4\cdot r_{\mathfrak{r}_{top}}\}.$$ 
	Now let $S'=\{\mathfrak{r}'\in H \mid D(c_{\mathfrak{r}'},c_{\mathfrak{r}})\le 4\cdot r_{\mathfrak{r}_{top}}\} $. To establish the connections between $S'$ and the size of $kNbr(\mathfrak{r})$, the idea is to enclose $S'$ with a d-cube $\mathfrak{r}_{outer}$, and divide $\mathfrak{r}_{outer}$ into small d-cubes with side length $\frac{1}{\sqrt{d}} r_{\mathfrak{r}_{top}}$, and then Lemma \ref{lema:mappint-knbr-disjoint-cubes} can be used to derive the size of $kNbr(\mathfrak{r})$.
	
	Let $c_{\mathfrak{r}}$ to be the center of $\mathfrak{r}_{ourter}$, it can be verified that then the side length of $\mathfrak{r}_{outer}$ should be at least $10r_{\mathfrak{r}_{top}}$ to enclose $S$. 
	
	On the other hand, Lemma \ref{lema:mappint-knbr-disjoint-cubes} shows that each $\mathfrak{r}'\in kNbr(\mathfrak{r})$ can be mapped to a d-cube with side length $\frac{1}{\sqrt{d}} r_{\mathfrak{r}_{top}}$. Since the side length of $\mathfrak{r}_{outer}$ is $10r_{\mathfrak{r}_{top}}$, $\mathfrak{r}_{outer}$  contains at most $(10\sqrt{d})^d$ small d-cubes with side length of $\frac{1}{\sqrt{d}} r_{\mathfrak{r}_{top}}$. Considering the one-to-one mapping between the d-cubes with side length $\frac{1}{\sqrt{d}} r_{\mathfrak{r}_{top}}$ and the rectangles in $kNbr(\mathfrak{r})$, it can be deduced that $kNbr(\mathfrak{r})\le (10\sqrt{d})^d=O((\sqrt{d})^d)$
	\qed
\end{proof}

\begin{lemma}\label{lema:kfrd-size}
	During the executions of Algorithm \ref{alg:compute-mes}, $|kFrd(\mathfrak{r})|= O(\sqrt{d})^d$ holds for all non-leaf node $\forall \mathfrak{r}$.
\end{lemma}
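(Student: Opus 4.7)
The plan is to mirror the proof of Lemma \ref{lema:knbr-size} as closely as possible. The definition of $kFrd$ combined with the second invariant of Lemma \ref{lema:nbr-frd-subset-heap} gives the equivalence $\mathfrak{r}' \in kFrd(\mathfrak{r}) \iff \mathfrak{r} \in kNbr(\mathfrak{r}')$, so the membership condition unpacks into the distance inequality
$$D(c_{\mathfrak{r}'}, c_{\mathfrak{r}}) \le r_{\mathfrak{r}} + r_{\mathfrak{r}'} + kThres(\mathfrak{r}').$$
Once I control the right-hand side uniformly by a constant multiple of $r_{\mathfrak{r}_{top}}$, the packing argument of Lemma \ref{lema:mappint-knbr-disjoint-cubes} can be reapplied verbatim to the subset $kFrd(\mathfrak{r}) \subseteq H$, since its proof uses only that the outer rectangles $O(\mathfrak{r}')$ of elements of $H$ are pairwise disjoint.

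First I would fix a non-leaf $\mathfrak{r} \in H$ and note that for every $\mathfrak{r}' \in kFrd(\mathfrak{r})$ we have $r_{\mathfrak{r}}, r_{\mathfrak{r}'} \le r_{\mathfrak{r}_{top}}$. I would then split on the status of $\mathfrak{r}'$. If $\mathfrak{r}'$ is non-leaf, Equation \ref{eqtn:kthres} gives $kThres(\mathfrak{r}') = 2 r_{\mathfrak{r}'}$, and the inequality above immediately collapses to $D(c_{\mathfrak{r}'}, c_{\mathfrak{r}}) \le 3 r_{\mathfrak{r}'} + r_{\mathfrak{r}} \le 4 r_{\mathfrak{r}_{top}}$. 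If $\mathfrak{r}'$ is a leaf, then $r_{\mathfrak{r}'} = 0$ but $kThres(\mathfrak{r}')$ is the $k$-th smallest $Qlty(\cdot,\mathfrak{r}')$ over $H$ and needs separate control. Here I would use that $\mathfrak{r}$ itself admits $\mathfrak{r}'$ into its $kNbr$ window, so the $k$ members of $Cand(\mathfrak{r}')$ produce $Qlty$ values no larger than that of $\mathfrak{r}$; combined with the definition of $Qlty$ and the global bound $r_{\mathfrak{r}''} \le r_{\mathfrak{r}_{top}}$ for every $\mathfrak{r}'' \in H$, this self-consistently gives $kThres(\mathfrak{r}') = O(r_{\mathfrak{r}_{top}})$ and hence $D(c_{\mathfrak{r}'}, c_{\mathfrak{r}}) = O(r_{\mathfrak{r}_{top}})$ in this case too.

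With a uniform $O(r_{\mathfrak{r}_{top}})$ bound on $D(c_{\mathfrak{r}'}, c_{\mathfrak{r}})$ in hand, I would invoke Lemma \ref{lema:mappint-knbr-disjoint-cubes} to inject $kFrd(\mathfrak{r})$ into a pairwise disjoint collection of $d$-cubes, each of side length at least $\tfrac{1}{\sqrt{d}} r_{\mathfrak{r}_{top}}$ and all sitting inside an enclosing $d$-cube of side $O(r_{\mathfrak{r}_{top}})$ centered at $c_{\mathfrak{r}}$. A volume comparison then produces $|kFrd(\mathfrak{r})| \le O((\sqrt{d})^d)$, matching Lemma \ref{lema:knbr-size}.

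The main obstacle is exactly the leaf subcase: a na\"ive reading of the definition permits $kThres(\mathfrak{r}')$ to be arbitrarily large, so the bound has to be extracted from the structural fact that $\mathfrak{r}$ already sits inside $\mathfrak{r}'$'s admission window, possibly together with the truncation invariant maintained by $TruncateNbr$ which keeps $kNbr(\mathfrak{r}')$ trimmed to a radius commensurate with $kThres(\mathfrak{r}')$. Apart from establishing this bound, every remaining step is a direct translation of the argument already used for Lemma \ref{lema:knbr-size}.
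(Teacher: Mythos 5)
Your non-leaf subcase is fine: for non-leaf $\mathfrak{r}'\in kFrd(\mathfrak{r})$ the membership condition collapses to $D(c_{\mathfrak{r}'},c_{\mathfrak{r}})\le 3r_{\mathfrak{r}'}+r_{\mathfrak{r}}\le 4r_{\mathfrak{r}_{top}}$, and the packing of Lemma \ref{lema:mappint-knbr-disjoint-cubes} (whose proof indeed only uses that the rectangles lie in $H$) finishes that part. But the leaf subcase, which you yourself flag as the main obstacle, is not resolved by your sketch, and the resolution you propose cannot work. The bound you extract from ``$\mathfrak{r}$ sits inside $\mathfrak{r}'$'s admission window'' is at best $kThres(\mathfrak{r}')\le Qlty(\mathfrak{r},\mathfrak{r}')\le D(c_{\mathfrak{r}},c_{\mathfrak{r}'})+\tfrac{3}{2}r_{\mathfrak{r}_{top}}$ (and even this requires $\mathfrak{r}\notin Cand(\mathfrak{r}')$; membership in $kNbr(\mathfrak{r}')$ gives no upper bound on $Qlty(\mathfrak{r},\mathfrak{r}')$ at all). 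Substituting it back into $D(c_{\mathfrak{r}'},c_{\mathfrak{r}})\le r_{\mathfrak{r}}+r_{\mathfrak{r}'}+kThres(\mathfrak{r}')$ yields a tautology, not $D(c_{\mathfrak{r}'},c_{\mathfrak{r}})=O(r_{\mathfrak{r}_{top}})$. Worse, the uniform distance bound you need is genuinely false: a leaf $\mathfrak{r}'$ holding an isolated point far from all other points has $kThres(\mathfrak{r}')$ comparable to its $k$-th nearest-neighbor distance, which can exceed $r_{\mathfrak{r}_{top}}$ by an arbitrary factor, so $\mathfrak{r}'$ legitimately belongs to $kFrd(\mathfrak{r})$ for rectangles $\mathfrak{r}$ at distance far greater than $O(r_{\mathfrak{r}_{top}})$. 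Hence the leaf members of $kFrd(\mathfrak{r})$ cannot be confined to a single enclosing cube of side $O(r_{\mathfrak{r}_{top}})$, and the single-scale volume comparison does not apply to them.

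What actually controls the leaf contribution is a different, multi-scale argument: if $\mathfrak{r}'$ is a far-away leaf with $\mathfrak{r}\in kNbr(\mathfrak{r}')$, then all but $k-1$ rectangles of $H$ have $Qlty$ against $\mathfrak{r}'$ at least $kThres(\mathfrak{r}')\ge D(c_{\mathfrak{r}'},c_{\mathfrak{r}})-O(r_{\mathfrak{r}_{top}})$, which forces any two such leaves to be nearly as far from each other as the farther one is from $c_{\mathfrak{r}}$; an angular packing bound (vectors pairwise separated by roughly a right angle) then limits their number to a function of $d$ (and $k$). This is essentially the second, harder half of Vaidya's $Frd$-size proof, and it is exactly the half your proposal is missing. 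Note that the paper itself gives no proof here either --- it defers entirely to \cite{Vaidya1989} --- so your attempt is more substantive than the paper's, but as written it establishes only the non-leaf portion of the claim.
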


\begin{proof}
	There is a similar lemma given in \cite{Vaidya1989}. The detailed proof can be found in that paper and is omitted here.
	\qed
\end{proof}

\begin{theorem}\label{thrm:time-all-knn}
	The time complexity of Algorithm \ref{alg:all-knn} is $O(k(k+(\sqrt{d})^d) \cdot n\log{n})$.
\end{theorem}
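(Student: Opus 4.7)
The plan is to split the running time of Algorithm~\ref{alg:all-knn} into three pieces and charge them separately against the claimed bound. The first piece is the invocation of Algorithm~\ref{alg:rst} on line~\ref{line:aknn:calling-rst}, which by Theorem~\ref{thrm:time-alg-rst} runs in $O(dn\log n)$. The second is the invocation of Algorithm~\ref{alg:compute-mes} on line~\ref{line:aknn:calling-mes}, which by Theorem~\ref{thrm:time-alg-mes} also runs in $O(dn\log n)$. Both are dominated by $O(k(k+(\sqrt d)^d)\cdot n\log n)$, so all the real work is in analyzing the \textsc{while} loop together with its calls to $MntnNbrFrd$.

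I would first count iterations. By Lemma~\ref{lema:number-of-nodes-rst} the RST contains at most $2n$ nodes, and each \textsc{while}-loop iteration pops one non-leaf node from $H$ and, if anything, removes additional internal descendants on line~\ref{line:aknn:deleting-smallers}; thus the loop executes at most $O(n)$ times. Next I would pin down two structural bounds that make every iteration cheap. First, the popped $\mathfrak{r}$ is always a non-leaf of the RST (otherwise the loop would already have terminated), so Lemmas~\ref{lema:knbr-size} and~\ref{lema:kfrd-size} apply and give $|kNbr(\mathfrak{r})|=|kFrd(\mathfrak{r})|=O((\sqrt d)^d)$. Second, $|\mathcal{S}_{son}|\le 2k$: each of $\mathfrak{r}_{large},\mathfrak{r}_{small}$ either survives whole, or (when it contains at most $k$ points) is replaced on lines~\ref{line:aknn:determine-son-start}--\ref{line:aknn:determine-son-end} by at most $k$ leaf descendants.

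With these two bounds in hand I would tally the four phases of Algorithm~\ref{alg:new-mntn-nbrfrd} per iteration. Step~1 makes $O(|kNbr(\mathfrak{r})|\cdot|\mathcal{S}_{son}|)=O(k(\sqrt d)^d)$ sub-procedure calls; Step~2 is symmetric and contributes another $O(k(\sqrt d)^d)$; Step~3 is the pairwise loop over $\mathcal{S}_{son}$ and contributes $O(k^2)$; Step~4 issues at most $|\mathcal{S}_{son}|+|kFrd(\mathfrak{r})|=O(k+(\sqrt d)^d)$ truncations. Using the data-structure choices spelled out at the end of Section~\ref{subsec:all-knn}---a B$^{+}$-tree for each $kNbr$ set, a red-black tree for each $kFrd$ set, and a binary heap for each $Cand$ set---every elementary call to $AddInNbr$, $DelFromNbr$ or $DelFromFrd$ costs $O(\log n)$ (the $Cand$ updates actually cost only $O(\log k)$, absorbed into $O(\log n)$). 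Summing over the four steps the per-iteration cost is $O\!\bigl(k(k+(\sqrt d)^d)\log n\bigr)$, and multiplying by the $O(n)$ iterations and adding the heap pushes/pops (also $O(n\log n)$ overall) yields the claimed bound.

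The step I expect to be the main obstacle is the amortized accounting for $TruncateNbr$, since a single call can evict an unbounded number of entries from $kNbr$. I would handle this with a standard credit argument: each element that is ever truncated was first inserted by some $AddInNbr$ call that already paid $\Theta(\log n)$, and B$^{+}$-tree range-deletion of a contiguous suffix executes in $O(\log n)$ amortized independently of the number of items removed, so these mass deletions fit inside the per-iteration bound without inflating it. A secondary check I would perform is that the auxiliary work on line~\ref{line:aknn:deleting-smallers}---destroying the internal nodes of an expanded small subtree and rehanging its leaves---sums to $O(n\log n)$ across the whole run, which follows immediately from the $O(n)$ total size of the RST guaranteed by Lemma~\ref{lema:number-of-nodes-rst}.
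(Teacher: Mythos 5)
Your proposal is correct and follows essentially the same route as the paper's own proof: the same decomposition into RST construction, MES computation, heap operations, and the per-iteration cost of $MntnNbrFrd$, bounded via Lemmas~\ref{lema:number-of-nodes-rst}, \ref{lema:knbr-size}, \ref{lema:kfrd-size} and the stated data-structure costs. Your explicit credit argument for $TruncateNbr$ and the bound $|\mathcal{S}_{son}|\le 2k$ (the paper states $k+1$, but both are $O(k)$) are minor refinements that do not change the argument.
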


\begin{proof}
	The time to execute Algorithm \ref{alg:all-knn} consists of the following four parts: (1) the time to invoke Algorithm \ref{alg:rst}, (2) the time to invoke Algorithm \ref{alg:compute-mes}, (3) the time to manipulate the heap $H$, and (4) the time to execute Algorithm \ref{alg:new-mntn-nbrfrd}. The former two parts are already proved to be $O(dn\log{n})$ in Theorem \ref{thrm:time-alg-rst} and \ref{thrm:time-alg-mes}. According to Lemma \ref{lema:number-of-nodes-rst} there are at most $2n$ nodes in the RST $T$, so that $|H|\le 2n$. And thus the time to manipulate the heap $H$ is $O(n\log{n})$.
	
	In the rest of the proof we deal with the last part of the time complexity. First we introduce some denotations for ease of discussion. Let $Df$, $Dn$, $Ad$ and $Tr$ to be the time to execute the $DelFromFrd$, $DelFromNbr$, $AddInNbr$ and $TruncateNbr$ sub-procedures respectively. 
	Under these denotations, the time complexity of Algorithm \ref{alg:new-mntn-nbrfrd} can be expressed as follows:
	\begin{equation}\label{eqtn:mntn-time}
	|kNbr|(Df+|\mathcal{S}_{son}|\cdot Ad)+|kFrd|(Dn+|\mathcal{S}_{son}\cdot Ad)+|\mathcal{S}_{son}|^2\cdot Ad+(|\mathcal{S}_{son}|+|kFrd|)\cdot Tr
	\end{equation}
	
	According to Lemma \ref{lema:knbr-size} and \ref{lema:kfrd-size}, $|kNbr(\mathfrak{r})|$ and $|kFrd(\mathfrak{r})|$ are only bounded for non-leaf nodes. However, it can be verified that the $MntnNbrFrd$ procedure will not be invoked on leaf nodes. Thus, the $|kNbr|$ and $|kFrd|$ term can be safely up-bounded by $O((\sqrt{d})^d)$.

	It is already mentioned in the last part of Section \ref{subsec:all-knn} that $kNbr$, $kFrd$ and $Cand$ sets should be implemented by specific kind of data structures. By such implementation, $DelFromNbr$ and $TruncateNbr$ take $O(\log{|kNbr|})$ time, $AddInNbr$ take $O(\log{|Cand|}+\log{|kNbr|})$ time, and $DelFromFrd$ takes $O(\log{|kFrd|})$ time. Since these sub-procedures may be invoked on leaf nodes, and the only available bound on $|kNbr|$  and $|kFrd|$ for leaf nodes is $O(n)$, then the $Df$, $Dn$, $Tr$ can be replaced by $O(\log{n})$, and $Ad$ can be replaced by $O(\log{k}+\log{n})$.
	
	Since $\mathcal{S}_{son}\le k+1$, Equation \ref{eqtn:mntn-time} is up-bounded by the following term:
	$$(\sqrt{d})^d(\log{n}+(k+1)\log{n})\times 2+(k+1)^2\log{n}+(k+1+(\sqrt{d})^d)\log{n},$$
	which is $O(((k(\sqrt{d})^d)+(k+1)^2)   \log{n})$.

	On the other hand, the $while$ loop in Algorithm \ref{alg:all-knn} can be executed for at most $n$ times, since there are at most $2n$ nodes in $T$ and each execution of the $while$ loop visits $2$ nodes.
	Then the time to execute Algorithm \ref{alg:mntn-nbr-frd} is $O(k(k+(\sqrt{d})^d) \cdot n\log{n})$.
	
	Finally, adding the four parts of time and the time complexity of Algorithm \ref{alg:all-knn} is proved to be $O(k(k+(\sqrt{d})^d) \cdot n\log{n})$.
	\qed
\end{proof}

\begin{theorem}\label{thrm:space-all-knn}
	The space complexity of Algorithm \ref{alg:all-knn} is $O(n^2)$.
\end{theorem}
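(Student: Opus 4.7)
The plan is to account for every data structure maintained by Algorithm \ref{alg:all-knn} and to show that the cumulative storage never exceeds $O(n^2)$. I would separate the bookkeeping into four pieces: the RST $T$ itself, the heap $H$, the pair of sets $kNbr(\mathfrak{r})$ and $kFrd(\mathfrak{r})$ stored at each surviving rectangle, and the $Cand(\mathfrak{r})$ set stored at each leaf.

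First I would handle the ``skeleton'' structures. By Lemma \ref{lema:number-of-nodes-rst} the RST has at most $2n$ nodes, and at each node we store only $O(d)$ data (the defining intervals of the rectangle, the centre and radius of its $\frac{3}{2}$-MES, a parent pointer, and the tags $|\mathfrak{r}|$ and $kThres(\mathfrak{r})$). Treating $d$ as fixed, this is $O(n)$ in total. The heap $H$ contains only rectangles of $T$, so $|H|\le 2n$ at all times, again $O(n)$.

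The heart of the argument is the bound on $kNbr$ and $kFrd$. Here I would invoke Lemma \ref{lema:nbr-frd-subset-heap}, which guarantees $kNbr(\mathfrak{r})\subseteq H$ and $kFrd(\mathfrak{r})\subseteq H$ for every $\mathfrak{r}\in H$. Since $|H|\le 2n$, each of these sets has size at most $2n$, no matter whether $\mathfrak{r}$ is a leaf (where the tight $O((\sqrt{d})^d)$ bound of Lemma \ref{lema:knbr-size} does not apply) or a non-leaf. Because at most $2n$ rectangles ever carry such sets, summing over all of them gives $O(n)\cdot O(n)=O(n^2)$ space. For the $Cand$ sets, $|Cand(\mathfrak{r})|=k\le n$ by Definition \ref{def:cand-set}, and $Cand$ is only maintained at leaves of which there are at most $n$, so this contributes $O(kn)=O(n^2)$.

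Adding the four contributions yields $O(n)+O(n)+O(n^2)+O(n^2)=O(n^2)$, proving the theorem. The only potentially delicate point is justifying that the $kNbr$ and $kFrd$ sets associated with leaf rectangles do not blow up beyond $O(n)$; but this is immediate from the containment in $H$, and no structural geometric argument is required here. I do not expect any real obstacle in the proof, since the $O(n^2)$ bound is loose enough that the subset relation with $H$ suffices.
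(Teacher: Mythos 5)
Your proposal is correct and follows essentially the same route as the paper's proof, which likewise bounds the space by multiplying the $O(n)$ size of each $kNbr$/$kFrd$ set (implicitly via their containment in $H$) by the $O(n)$ number of rectangles that carry such sets. You simply make explicit the appeal to Lemma \ref{lema:nbr-frd-subset-heap} and add the routine accounting for the tree, the heap, and the $Cand$ sets, which the paper leaves implicit.
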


\begin{proof}
	Only the rectangles in $H$ will store their $kNbr$ and $kFrd$ sets. Thus the space complexity of Algorithm \ref{alg:all-knn} can be computed by multiplying the size of $kNbr$ and $kFrd$ sets and the size of $H$, which is $O(n^2)$.
\end{proof}

In the proof of Theorem \ref{thrm:time-all-knn} and \ref{thrm:space-all-knn} we use $O(n)$ as the bound on $|kNbr(\mathfrak{r})|$ for leaf node $\mathfrak{r}$, which is sufficient to prove the desired $O(n\log{n})$ bound on the running time. Though we can not give an tighter upper bound of the size of $kNbr(\mathfrak{r})$ for leaf node $\mathfrak{r}$, we have the following lemma depicting the structure of it.

\begin{lemma}
	Let $\mathfrak{r}$ to be a leaf node in $T$, and let $\mathfrak{r}'$ be an arbitrary rectangle in $kNbr(\mathfrak{r})\setminus Cand(\mathfrak{r})$. Then any $\mathfrak{r}''\in kNbr(\mathfrak{r}\setminus Cand(\mathfrak{r}))$ must satisfy $d(c_{\mathfrak{r}''},c_{\mathfrak{r}'})\le d(c_{\mathfrak{r}}, c_{\mathfrak{r}'} )+\frac{5}{2}r_{\mathfrak{r}_{top}}$, where $\mathfrak{r}_{top}$ is the top element in the heap $H$.
\end{lemma}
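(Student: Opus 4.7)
The plan is to reduce the claim via the triangle inequality at the pivot $c_{\mathfrak{r}}$: since
$D(c_{\mathfrak{r}''}, c_{\mathfrak{r}'}) \le D(c_{\mathfrak{r}''}, c_{\mathfrak{r}}) + D(c_{\mathfrak{r}}, c_{\mathfrak{r}'})$,
it suffices to establish $D(c_{\mathfrak{r}''}, c_{\mathfrak{r}}) \le \frac{5}{2} r_{\mathfrak{r}_{top}}$.

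First I would exploit the membership $\mathfrak{r}''\in kNbr(\mathfrak{r})$. Because $\mathfrak{r}$ is a leaf, $r_{\mathfrak{r}}=0$, and Definition \ref{def:knbr-set} then yields $D(c_{\mathfrak{r}''}, c_{\mathfrak{r}})\le r_{\mathfrak{r}''}+kThres(\mathfrak{r})$. By Lemma \ref{lema:nbr-frd-subset-heap}, $\mathfrak{r}''\in H$, so $r_{\mathfrak{r}''}\le r_{\mathfrak{r}_{top}}$, and the task reduces to proving $kThres(\mathfrak{r}) \le \frac{3}{2} r_{\mathfrak{r}_{top}}$.

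The $kThres$ bound is the main obstacle, and it leans crucially on the hypothesis that $kNbr(\mathfrak{r})\setminus Cand(\mathfrak{r})$ already contains $\mathfrak{r}'$. That hypothesis forces $|Cand(\mathfrak{r})|=k$ (otherwise the $AddInNbr$ sub-procedure would have absorbed $\mathfrak{r}'$ into $Cand(\mathfrak{r})$), so $kThres(\mathfrak{r})=\max_{\mathfrak{s}\in Cand(\mathfrak{r})} Qlty(\mathfrak{s},\mathfrak{r})\le Qlty(\mathfrak{r}',\mathfrak{r})$. I would then expand
$Qlty(\mathfrak{r}',\mathfrak{r}) = \sqrt{(D(c_{\mathfrak{r}'},c_{\mathfrak{r}})+\frac{1}{2}r_{\mathfrak{r}'})^2+r_{\mathfrak{r}'}^2}$,
substitute the upper bound $D(c_{\mathfrak{r}'},c_{\mathfrak{r}})\le r_{\mathfrak{r}'}+kThres(\mathfrak{r})$ coming from $\mathfrak{r}'\in kNbr(\mathfrak{r})$, and use $r_{\mathfrak{r}'}\le r_{\mathfrak{r}_{top}}$ to obtain a self-referential inequality in $kThres(\mathfrak{r})$. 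The subadditivity $\sqrt{x^2+y^2}\le x+y$ tightens this to a linear bound, from which isolating $kThres(\mathfrak{r})$ should produce the required $\tfrac{3}{2}r_{\mathfrak{r}_{top}}$.

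Substituting this back through the chain $D(c_{\mathfrak{r}''}, c_{\mathfrak{r}}) \le r_{\mathfrak{r}_{top}}+kThres(\mathfrak{r})\le \frac{5}{2} r_{\mathfrak{r}_{top}}$ and reapplying the opening triangle inequality closes the proof. The hardest step will be extracting the clean coefficient $\tfrac{3}{2}$ from the self-referential inequality; if that bound turns out too loose, a fallback is to bound $D(c_{\mathfrak{r}''}, c_{\mathfrak{r}'})$ directly by chasing representative points $p'\in\mathfrak{r}'$ and $p''\in\mathfrak{r}''$ supplied by Lemma \ref{lema:exists-point-qlty}, using $D(c_{\mathfrak{r}''}, p'')\le r_{\mathfrak{r}''}$ and $D(c_{\mathfrak{r}'}, p')\le r_{\mathfrak{r}'}$ in combination with the $Qlty$ bound on $D(p'',c_{\mathfrak{r}})$ and $D(p',c_{\mathfrak{r}})$.
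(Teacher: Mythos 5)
Your opening reduction is where the argument breaks. Passing the target through the triangle inequality at $c_{\mathfrak{r}}$ converts it into the \emph{absolute} bound $D(c_{\mathfrak{r}''},c_{\mathfrak{r}})\le \frac{5}{2}r_{\mathfrak{r}_{top}}$, and hence into $kThres(\mathfrak{r})\le\frac{3}{2}r_{\mathfrak{r}_{top}}$. Neither of these holds in general: for a leaf $\mathfrak{r}$ whose single point is far from the rest of $P$, $kThres(\mathfrak{r})$ is on the order of the distance to its $k$-th nearest neighbor, while $r_{\mathfrak{r}_{top}}$ keeps shrinking as the algorithm proceeds, so the ratio is unbounded (e.g.\ $k=1$ and $P=\{0,100,101\}$ on the line: once $\{0\}$ becomes a leaf, $kThres(\{0\})\approx 100$ while $r_{\mathfrak{r}_{top}}$ is already below $1$). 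Consistently with this, the ``self-referential inequality'' you propose cannot be closed: from $\mathfrak{r}'\notin Cand(\mathfrak{r})$ you correctly get $kThres(\mathfrak{r})\le Qlty(\mathfrak{r}',\mathfrak{r})\le D(c_{\mathfrak{r}'},c_{\mathfrak{r}})+\frac{3}{2}r_{\mathfrak{r}_{top}}$, but substituting $D(c_{\mathfrak{r}'},c_{\mathfrak{r}})\le r_{\mathfrak{r}'}+kThres(\mathfrak{r})$ produces $kThres(\mathfrak{r})\le kThres(\mathfrak{r})+\frac{5}{2}r_{\mathfrak{r}_{top}}$: the $kThres$ terms carry coefficient one on both sides, cancel, and leave a tautology. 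There is no $\frac{3}{2}$ to extract.

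The repair is not to eliminate $D(c_{\mathfrak{r}'},c_{\mathfrak{r}})$ but to carry it into the final bound, which is what the paper's proof does: combine $D(c_{\mathfrak{r}''},c_{\mathfrak{r}})\le r_{\mathfrak{r}''}+kThres(\mathfrak{r})\le r_{\mathfrak{r}_{top}}+kThres(\mathfrak{r})$ (using $r_{\mathfrak{r}}=0$ and Lemma~\ref{lema:nbr-frd-subset-heap}) with $kThres(\mathfrak{r})\le D(c_{\mathfrak{r}'},c_{\mathfrak{r}})+\frac{3}{2}r_{\mathfrak{r}_{top}}$ to obtain $D(c_{\mathfrak{r}''},c_{\mathfrak{r}})\le D(c_{\mathfrak{r}'},c_{\mathfrak{r}})+\frac{5}{2}r_{\mathfrak{r}_{top}}$. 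Note this is an inequality about $D(c_{\mathfrak{r}''},c_{\mathfrak{r}})$, not $D(c_{\mathfrak{r}''},c_{\mathfrak{r}'})$: the lemma as printed puts $c_{\mathfrak{r}'}$ in the left-hand distance, but the paper's own proof only establishes the version with $c_{\mathfrak{r}}$ there (the $c_{\mathfrak{r}'}$ version would need a further triangle inequality and would then pick up $2\,D(c_{\mathfrak{r}},c_{\mathfrak{r}'})$ on the right). So the two ingredients you isolated are the right ones; the gap is in how you tried to combine them, aggravated by a typo in the statement that your choice of pivot inherited.
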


\begin{proof}
	Since $\mathfrak{r}'\notin Cand(\mathfrak{r})$, $$Qlty(\mathfrak{r}',\mathfrak{r})=\sqrt{D(c_{\mathfrak{r}'},c_{\mathfrak{r}}+\frac{1}{2}r_{\mathfrak{r}'})^2+r_{\mathfrak{r}'^2}}>kThres(\mathfrak{r}).$$ 
	Next, since $\mathfrak{r}_{top}$ is the top element in the heap $H$ and $\mathfrak{r}'$ is an element in $H$, we have
	$$kThres(\mathfrak{r})< \sqrt{D(c_{\mathfrak{r}'},c_{\mathfrak{r}}+\frac{1}{2}r_{\mathfrak{r}'})^2+r_{\mathfrak{r}'^2}}< D(c_{\mathfrak{r}'},c_{\mathfrak{r}})+\frac{1}{2}r_{\mathfrak{r}'}+r_{\mathfrak{r}'}< D(c_{\mathfrak{r}'},c_{\mathfrak{r}})+\frac{3}{2}r_{\mathfrak{r}_{top}}.$$ 
	On the other hand, 
	$D(c_{\mathfrak{r}''},c_{\mathfrak{r}})\le r_{\mathfrak{r}''}+r_{\mathfrak{r}}+kThres(\mathfrak{r})=r_{\mathfrak{r}''}+kThres(\mathfrak{r})$ since $\mathfrak{r}''\in kNbr$ and $r_{\mathfrak{r}}=0$. Finally, we have 
	$$D(c_{\mathfrak{r}''},c_{\mathfrak{r}})\le r_{\mathfrak{r}''}+kThres(\mathfrak{r})\le r_{\mathfrak{r}_{top}}+D(c_{\mathfrak{r}'},c_{\mathfrak{r}})+\frac{3}{2}r_{\mathfrak{r}_{top}} =D(c_{\mathfrak{r}'},c_{\mathfrak{r}})+\frac{5}{2}r_{\mathfrak{r}_{top}}.$$
	which proves the desired result.
	\qed
	
\end{proof}

\section{Conclusion}\label{sec:conc}
In this paper the All-k-Nearest-Neighbors problem is considered. An algorithm proposed in \cite{Vaidya1989} is the inspiring work of this paper. In \cite{Vaidya1989} the author claimed  that the algorithm has an upper bound of $O(d^dn\log{n})$ on the time complexity. However, we find that this bound is unachievable according to the descriptions in \cite{Vaidya1989}. We give formal analysis that the algorithm needs at least $\Omega(n^2)$ time. On the other hand, we propose another algorithm for the All-k-Nearest-Neighbor problem whose time complexity is truly up-bounded by $O(k(k+d^d)n\log{n})$. After all, we have renewed an result that has an history of over 30 years and has been cited mor than 300 times. Considering the importance of the All-k-Nearest-Neighbor problem, this work should be considered valuable.

\bibliographystyle{splncs04}
\bibliography{library}

\begin{thebibliography}{10}
\providecommand{\url}[1]{\texttt{#1}}
\providecommand{\urlprefix}{URL }
\providecommand{\doi}[1]{https://doi.org/#1}

\bibitem{Badoiu2003}
B{\^{a}}doiu, M., B{\^{a}}doiu, M., Clarkson, K.L., Clarkson, K.L.: {Smaller
  core-sets for balls}. In: Proceedings of the Fourteenth Annual
  {\{}ACM-SIAM{\}} Symposium on Discrete Algorithms. pp. 801--802 (2003)

\bibitem{Bentley1980}
Bentley, J.L.: {Multidimensional divide-and-conquer}. Communications of the ACM
   \textbf{23}(4),  214--229 (apr 1980). \doi{10.1145/358841.358850},
  \url{http://portal.acm.org/citation.cfm?doid=358841.358850}

\bibitem{Brito1997}
Brito, M., Ch{\'{a}}vez, E., Quiroz, A., Yukich, J.: {Connectivity of the
  mutual k-nearest-neighbor graph in clustering and outlier detection}.
  Statistics {\&} Probability Letters  \textbf{35}(1),  33--42 (aug 1997).
  \doi{10.1016/S0167-7152(96)00213-1},
  \url{https://linkinghub.elsevier.com/retrieve/pii/S0167715296002131}

\bibitem{Callahan1993}
Callahan, P.B.: {Optimal Parallel All-nearest-neighbors Using the
  Well-separated Pair Decomposition}. In: Proceedings of the 1993 IEEE 34th
  Annual Foundations of Computer Science. pp. 332--340. SFCS '93, IEEE Computer
  Society, Washington, DC, USA (1993). \doi{10.1109/SFCS.1993.366854},
  \url{https://doi.org/10.1109/SFCS.1993.366854}

\bibitem{Callahan1995}
Callahan, P.B., Kosaraju, S.R.: {A decomposition of multidimensional point sets
  with applications to k-nearest-neighbors and n-body potential fields}.
  Journal of the ACM  \textbf{42}(1),  67--90 (jan 1995).
  \doi{10.1145/200836.200853},
  \url{http://portal.acm.org/citation.cfm?doid=200836.200853}

\bibitem{Clarkson1983}
Clarkson, K.L.: {Fast algorithms for the all nearest neighbors problem}. In:
  24th Annual Symposium on Foundations of Computer Science (sfcs 1983). pp.
  226--232. IEEE (1983). \doi{10.1109/SFCS.1983.16}

\bibitem{Connor2010}
Connor, M., Kumar, P.: {Fast Construction of k-Nearest Neighbor Graphs for
  Point Clouds}. IEEE Transactions on Visualization and Computer Graphics
  \textbf{16}(4),  599--608 (2010). \doi{10.1109/TVCG.2010.9},
  \url{http://dx.doi.org/10.1109/TVCG.2010.9}

\bibitem{Dong2011}
Dong, W., Moses, C., Li, K.: {Efficient k-nearest neighbor graph construction
  for generic similarity measures}. In: Proceedings of the 20th international
  conference on World wide web - WWW '11. p.~577. ACM Press, New York, New
  York, USA (2011). \doi{10.1145/1963405.1963487},
  \url{http://portal.acm.org/citation.cfm?doid=1963405.1963487}

\bibitem{Edelsbrunner2012}
Edelsbrunner, H.: {Algorithms in Combinatorial Geometry}. Springer Publishing
  Company, Incorporated, 1st edn. (2012)

\bibitem{Franti2006}
Franti, P., Virmajoki, O., Hautamaki, V.: {Fast Agglomerative Clustering Using
  a k-Nearest Neighbor Graph}. IEEE Transactions on Pattern Analysis and
  Machine Intelligence  \textbf{28}(11),  1875--1881 (nov 2006).
  \doi{10.1109/TPAMI.2006.227}, \url{https://doi.org/10.1109/TPAMI.2006.227
  http://ieeexplore.ieee.org/document/1704843/}

\bibitem{Friedman1977}
Friedman, J.H.: {An Algorithm fQr Finding Best Matches in Logarithmic Expected
  Time}  \textbf{3}(3),  209--226 (1977). \doi{10.1145/355744.355745}

\bibitem{Hilbert1935}
Hilbert, D.: {{\"{U}}ber die stetige Abbildung einer Linie auf ein
  Fl{\"{a}}chenst{\"{u}}ck}. In: Dritter Band:
  Analysis{\textperiodcentered}Grundlagen der
  Mathematik{\textperiodcentered}Physik Verschiedenes, pp.~1--2. Springer
  (1935)

\bibitem{Karypis2001}
Karypis, G.: {Evaluation of Item-Based Top- N Recommendation Algorithms}. In:
  Proceedings of the tenth international conference on Information and
  knowledge management - CIKM'01. p.~247. ACM Press, New York, New York, USA
  (2001). \doi{10.1145/502585.502627},
  \url{http://portal.acm.org/citation.cfm?doid=502585.502627}

\bibitem{Komarov2013}
Komarov, I., Dashti, A., D'Souza, R.: {Fast {\$}k{\$}-NNG construction with
  GPU-based quick multi-select} pp. 1--20 (sep 2013).
  \doi{10.1371/journal.pone.0092409}, \url{http://arxiv.org/abs/1309.5478
  http://dx.doi.org/10.1371/journal.pone.0092409}

\bibitem{Morton1966}
Morton, G.M.: {A computer oriented geodetic data base and a new technique in
  file sequencing}  (1966)

\bibitem{Park2016}
Park, Y., Lee, S.g.: {A novel algorithm for scalable k -nearest neighbour graph
  construction}  (2016). \doi{10.1177/1045389X14554132}

\bibitem{Sieranoja2015}
Sieranoja, S.: {High dimensional k NN-graph construction using space filling
  curves} (April) (2015)

\bibitem{Szummer2001}
Szummer, M., Jaakkola, T.: {Partially Labeled Classification with Markov Random
  Walks}. In: Proceedings of the 14th International Conference on Neural
  Information Processing Systems: Natural and Synthetic. pp. 945--952. NIPS'01,
  MIT Press, Cambridge, MA, USA (2001),
  \url{http://dl.acm.org/citation.cfm?id=2980539.2980661}

\bibitem{Trad2012}
Trad, M.R., Joly, A., Boujemaa, N.: {Distributed KNN-graph approximation via
  hashing}. In: Proceedings of the 2nd ACM International Conference on
  Multimedia Retrieval - ICMR '12. p.~1. No. section 3, ACM Press, New York,
  New York, USA (2012). \doi{10.1145/2324796.2324847},
  \url{http://dl.acm.org/citation.cfm?doid=2324796.2324847}

\bibitem{Vaidya1986}
Vaidya, P.M.: {An optimal algorithm for the all-nearest-neighbors problem}. In:
  {Intergovernmental Panel on Climate Change} (ed.) 27th Annual Symposium on
  Foundations of Computer Science (sfcs 1986). pp. 117--122. IEEE, Cambridge
  (oct 1986). \doi{10.1109/SFCS.1986.8}

\bibitem{Vaidya1989}
Vaidya, P.M.: {AnO(n logn) algorithm for the all-nearest-neighbors Problem}.
  Discrete {\&} Computational Geometry  \textbf{4}(2),  101--115 (mar 1989).
  \doi{10.1007/BF02187718}

\bibitem{Wang2012}
Wang, J., Wang, J., Zeng, G., Tu, Z., Gan, R., Li, S.: {Scalable k-nn graph
  construction for visual descriptors}. In: 2012 IEEE Conference on Computer
  Vision and Pattern Recognition. pp. 1106--1113. IEEE (2012)

\bibitem{Warashina2014}
WARASHINA, T., AOYAMA, K., SAWADA, H., HATTORI, T.: {Efficient K-Nearest
  Neighbor Graph Construction Using MapReduce for Large-Scale Data Sets}. IEICE
  Transactions on Information and Systems  \textbf{97-D}(12),  3142--3154
  (2014). \doi{10.1587/transinf.2014EDP7108},
  \url{http://jlc.jst.go.jp/DN/JST.JSTAGE/transinf/2014EDP7108?lang=en{\&}from=CrossRef{\&}type=abstract}

\bibitem{Yang2011}
Yang, X., Latecki, L.J.: {Affinity Learning on a Tensor Product Graph with
  Applications to Shape and Image Retrieval}. In: Proceedings of the 2011 IEEE
  Conference on Computer Vision and Pattern Recognition. pp. 2369--2376. CVPR
  '11, IEEE Computer Society, Washington, DC, USA (2011).
  \doi{10.1109/CVPR.2011.5995325},
  \url{http://dx.doi.org/10.1109/CVPR.2011.5995325}

\bibitem{Yao2010}
Yao, B., Li, F., Kumar, P.: {K nearest neighbor queries and kNN-Joins in large
  relational databases (almost) for free}. In: 2010 IEEE 26th International
  Conference on Data Engineering (ICDE 2010). pp. 4--15. IEEE (2010).
  \doi{10.1109/ICDE.2010.5447837},
  \url{http://ieeexplore.ieee.org/document/5447837/}

\bibitem{Yildirim2008}
Yildirim, E.A.: {Two Algorithms for the Minimum Enclosing Ball Problem}. SIAM
  Journal on Optimization  \textbf{19}(3),  1368--1391 (jan 2008).
  \doi{10.1137/070690419}, \url{http://epubs.siam.org/doi/10.1137/070690419}

\bibitem{Zarrabi-Zadeh2006}
Zarrabi-Zadeh, H., Chan, T.: {A simple streaming algorithm for minimum
  enclosing balls}. Proc. 18th Annual Canadian Conf. Comput. {\ldots} pp.
  14--17 (2006),
  \url{http://citeseerx.ist.psu.edu/viewdoc/download?doi=10.1.1.68.1421{\&}rep=rep1{\&}type=pdf}

\bibitem{Zhang2013}
Zhang, Y.m., Huang, K., Geng, G., Liu, C.l.: {Fast k NN Graph Construction with
  Locality} pp. 660--674 (2013)

\end{thebibliography}
\end{document}